\AddToHook{package/hyperref/before}{\RequirePackage{float}}
\documentclass[12pt]{iopjournal}
\usepackage[T1]{fontenc}
\usepackage[utf8]{inputenc}
\usepackage{newtxtext}
\usepackage{amsmath,amssymb}
\usepackage{amsthm}
\newtheorem{theorem}{Theorem}
\newtheorem{lemma}{Lemma}
\newtheorem{proposition}{Proposition}
\newtheorem{corollary}{Corollary}
\usepackage{newtxmath}
\usepackage{bm}
\newcommand{\defeq}{\mathrel{:=}}
\usepackage{microtype,etoolbox}
\makeatletter
\IfPackageAtLeastTF{microtype}{2025/02/12}{}{%
  \patchcmd{\MT@setup@expansion}
    {\MT@ifdefined@n@TF{showhyphens }}
    {\MT@exp@cs\patchcmd{showhyphens }
       {\color@begingroup}{\color@begingroup\pdfadjustspacing\z@}
       {\MT@info@nl{Patching command `\string\showhyphens'}}
       {\PackageError{iop_preamble}{Cannot patch \string\showhyphens}
          {Check the microtype compatibility block in iop_preamble.tex.}}%
     \let\MT@temp\@gobbletwo\@gobbletwo}
    {}{\PackageError{iop_preamble}{Cannot apply the microtype compatibility fix}
         {Check the installed microtype version before compiling.}}%
}
\makeatother
\usepackage{booktabs,tabularx,longtable,array,float,makecell}
\usepackage{placeins}
\usepackage[all]{hypcap}
\usepackage[authoryear,round]{natbib}
\usepackage{xurl}
\setcitestyle{authoryear,round,semicolon,aysep={},yysep={,}}
\newcolumntype{Y}{>{\raggedright\arraybackslash}X}
\renewcommand{\articletype}[1]{}
\fancyhfoffset[L]{0pt}
\hypersetup{colorlinks=true,allcolors=blue,pdfauthor={}}

\newcommand{\PaperAuthorBlock}{%
\author{Bing-Zhou Chen$^1$, Amelia Deakin$^2$, Andy Adler$^3$, Paul D. Ledger$^4$, William R. B. Lionheart$^{5,*}$}\par
\affil{$^1$College of Information Engineering, Tarim University, Alar, Xinjiang 843300, China}\par
\affil{$^2$The University of Manchester, Manchester, United Kingdom}\par
\affil{$^3$Systems and Computer Engineering, Carleton University, Ottawa, Ontario K1S 5B6, Canada}\par
\affil{$^4$School of Computing and Mathematical Sciences, University of Leicester, Leicester LE1 7RH, United Kingdom}\par
\affil{$^5$Department of Mathematics, The University of Manchester, Manchester M13 9PL, United Kingdom}\par
\affil{$^*$Corresponding author.}\par
\email{\href{mailto:bill.lionheart@manchester.ac.uk}{bill.lionheart@manchester.ac.uk}}\par}
\newcommand{\PaperAcknowledgments}{William R. B. Lionheart, Paul D. Ledger and Andy Adler would like to thank the Isaac Newton Institute for Mathematical Sciences, Cambridge, for support and hospitality during the programme RNT, where work on this paper was undertaken.}
\newcommand{\PaperFunding}{This work was supported by the Engineering and Physical Sciences Research Council (EPSRC) under grants EP/Z000580/1 and EP/V007742/1.}
\newcommand{\PaperAuthorContributions}{}
\newcommand{\PaperCompetingInterests}{The authors declare no competing interests.}
\newcommand{\PaperAIStatement}{During the preparation of this manuscript, the authors used ChatGPT (OpenAI; GPT-6.1 Sol) to assist with translation into English. The AI-assisted translation was manually checked for linguistic accuracy and consistency with the intended scientific meaning, and revised where necessary. The authors accept full responsibility for the manuscript's content.}
\newcommand{\PaperDataAvailability}{}

\graphicspath{{Source/}}

\usepackage{needspace}
\hypersetup{pdftitle={Implementing the EIT complete electrode model through Robin boundary conditions},pdfsubject={Preprint}}
\begin{document}
\articletype{Paper}
\title{Implementing the EIT complete electrode model through Robin boundary conditions}
\PaperAuthorBlock

\keywords{electrical impedance tomography, complete electrode model, Robin boundary conditions, transadmittance matrix, finite element method, numerical verification}\par

\begin{abstract}
\textbf{Objective.} Medical electrical impedance tomography systems prescribe electrode currents. Accurate voltages, even on passive electrodes, require the complete electrode model (CEM), the most accurate model known for skin electrodes. Its current constraints specify total electrode currents, whereas standard finite element boundary interfaces impose local conditions. We describe how to implement the CEM through Robin boundary conditions, which are available in standard FEM packages.

\textbf{Approach.} Prescribed electrode potentials give Robin problems whose unit-potential responses define an electrode transadmittance matrix. We prove equivalence to the conventional augmented CEM and formalise the discrete energy, equivalence and reference arguments in Lean 4. Numerical comparisons use two-dimensional P1 finite elements in DOLFINx, FreeFEM, GetFEM, MFEM and NGSolve, with the same nodal reference potential as EIDORS.

\textbf{Main results.} For identical assembled blocks, the two procedures have the same solution set whenever the Robin matrix is invertible. Across the five packages in dimensionless test cases, maximum absolute electrode-potential differences from EIDORS were below $5\times10^{-13}$ (2692 nodes), and errors against an exact rational-arithmetic reference were below $1.1\times10^{-13}$ (2535 nodes).

\textbf{Significance.} Standard Robin solves and consistent electrode-current integration provide a portable route to the current-driven CEM, with current constraints enforced through a small electrode system.
\end{abstract}

\section{Introduction}
Electrical impedance tomography (EIT) estimates the conductivity inside an object, or its changes, from currents applied and voltages measured through surface electrodes. Medical EIT systems generally prescribe the applied currents for safety reasons, since patient currents must stay within the limits set by safety standards \citep{iec60601}. The forward model therefore takes the electrode currents as input and must determine the electrode voltages.

The complete electrode model (CEM) is the most accurate model currently known for conductive electrodes applied to the skin: it accounts for electrode size, the shunting effect of the highly conductive electrodes and the electrode--skin contact impedance, with established existence and uniqueness results \citep{cheng1989electrode,somersalo1992existence}. Passive electrodes also require the CEM for accurate voltages. Although their total current is zero, their local normal current density need not vanish: current can enter through one part of an electrode and leave through another (figure~\ref{fig:graphical_abstract_zh}(a)).

Besides dedicated finite element implementations such as EIDORS \citep{vauhkonen1999three,polydorides2002matlab,adler2006eidors}, CEM solvers include boundary element methods for piecewise-constant media, whose accuracy and stability have been compared with those of the method of fundamental solutions \citep{aykroyd2007boundary,dyhoum2014solving}, and a boundary integral formulation with adaptive Nystr\"om discretisation, tested in forward and inverse problems with experimental EIT data \citep{tyni2024boundary}. 
General-purpose FEM packages can provide access to scalable solvers and a broad range of modelling capabilities.
Therefore, we were motivated to develop a way to use general-purpose finite-element software to solve the CEM.

Our approach uses two steps: a solution via Robin boundary conditions (available in many general-purpose FEM packages) and then a transformation step via a transadmittance matrix. An early version of our results was shown at EIT2026 \citep{deakin2026robin}. Prescribing the electrode potentials reduces the CEM to standard Robin problems; their unit-potential responses define an electrode transadmittance matrix that maps electrode potentials to currents, and solving this small electrode system for prescribed currents determines the potentials (figure~\ref{fig:graphical_abstract_zh}). All auxiliary problems share one Robin matrix and can use the package's existing solvers and preconditioners. The finite element subproblem requires no CEM-specific functionality; consistent boundary integration supplies the electrode currents.

This study proves the discrete equivalence of this construction and the conventional augmented CEM before any potential reference is chosen, and formalises the discrete energy argument, the equivalence theorem and the reference results in Lean~4 \citep{demoura2021lean4}. It then relates alternative potential references and adopts the EIDORS nodal reference for controlled numerical comparisons. Implementations through the native Robin interfaces of five platforms demonstrate portability, and shared assembled matrices and an independently assembled exact rational reference separate solution-procedure effects from assembly-induced errors.

Reproducibility is supported by independent finite element implementations, exact rational-arithmetic reference solutions and Lean~4 formal verification of the discrete mathematics, with the corresponding code, fixed inputs and proof sources supplied in the reproduction package.

Section~\ref{sec:theory} presents the two solution procedures, their discrete equivalence and the treatment of potential references. Section~\ref{sec:numerical} describes the comparison design and numerical error measures. Sections~\ref{sec:results}--\ref{sec:conclusions} report the results, discuss their interpretation and summarise the practical implications.

\section{Theory and methods}\label{sec:theory}

An overview of our method is shown in figure~\ref{fig:graphical_abstract_zh}. The CEM is traditionally implemented by augmenting the finite element system with one unknown potential per electrode (figure~\ref{fig:graphical_abstract_zh}(b), left). Our approach instead solves standard Robin problems with prescribed electrode potentials (figure~\ref{fig:graphical_abstract_zh}(b), right) and combines their responses through a small electrode transadmittance system to recover the CEM solution (figure~\ref{fig:graphical_abstract_zh}(c)). Together, these steps are equivalent to the augmented formulation for the same discrete CEM blocks and an invertible Robin matrix, as established in section~\ref{sec:equivalence}.

\begin{figure}[p]
\centering\includegraphics[width=\linewidth]{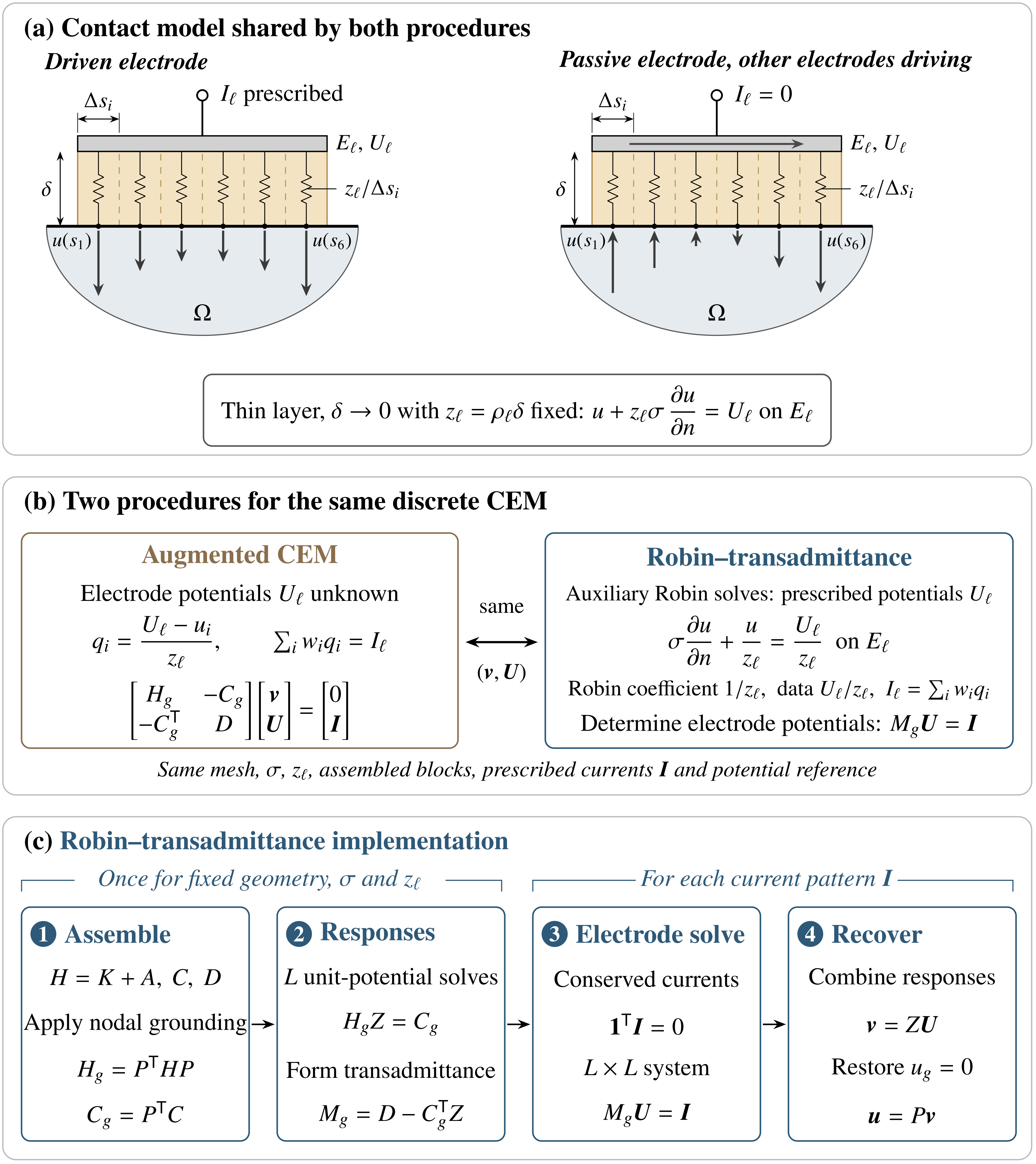}
\caption{Shared contact model and equivalent CEM solution procedures. (a) An equipotential electrode is coupled to local surface potentials through a resistive layer of thickness $\delta$ and resistivity $\rho_\ell$. The limit $\delta\to0$ with $z_\ell=\rho_\ell\delta$ fixed gives the distributed condition~\eqref{eq:contact_zh}. Strips of width $\Delta s_i$ illustrate resistances $z_\ell/\Delta s_i$ per unit out-of-plane thickness ($z_\ell/\Delta A_i$ in three dimensions); they do not represent the finite element mesh. A passive electrode has zero net current but can carry local currents of opposite signs when other electrodes drive current. (b) The augmented and Robin--transadmittance procedures use the same assembled blocks, currents and potential reference. The latter prescribes electrode potentials in auxiliary solves and then enforces the prescribed currents through $M_g\bm U=\bm I$. For P1 boundary nodes, $u_i$ is the nodal potential, $q_i=(U_\ell-u_i)/z_\ell$, and $w_i=\int_{E_\ell}\phi_i\,\mathrm ds$, giving $I_\ell=\sum_iw_iq_i$. (c) Assembly, unit-potential responses, electrode solve and potential recovery. Panels (b) and (c) use nodal grounding, with $\bm u=P\bm v$; section~\ref{sec:grounded_form} gives the details, including the auxiliary ground-row reaction.}\label{fig:graphical_abstract_zh}
\end{figure}

The idea can be understood intuitively as follows. The CEM treats each electrode as a perfect conductor at a single potential $U_\ell$, connected to the body domain $\Omega$ through a distributed contact impedance $z_\ell$. In figure~\ref{fig:graphical_abstract_zh}(a), this interface is drawn as a thin resistive layer whose thickness tends to zero while $z_\ell$ remains fixed, for both a driven and a passive electrode. The local current density need not be uniform and can be concentrated near electrode edges, particularly for small contact impedances \citep{ciulli1996anomalous}. When only the total current is prescribed, the electrode potential $U_\ell$ is unknown and must be chosen so that the current over the whole electrode adds up to the prescribed value. This constraint couples the electrode as a whole, which is why the conventional formulation augments the finite element system.

When the potentials of all electrodes are prescribed, the normal current density entering the body is given locally by $(U_\ell-u)/z_\ell$, which is the Robin contact relation. The current may still concentrate near electrode edges, but the electrode potentials are now known boundary data and the total electrode currents are outputs of the solve. This is why the prescribed-potential problem can be solved with a finite element package that offers Robin boundary conditions.

The next step is to transform the voltage-driven auxiliary solutions into a current-driven solution. The Robin responses to unit electrode potentials give the transadmittance matrix, which maps electrode potentials to total electrode currents. Solving the small electrode system for the prescribed currents gives the electrode potentials, and the nodal potentials follow as the corresponding linear combination of the responses. The nodally grounded implementation used for the EIDORS comparisons computes $L$ Robin responses with the same matrix and solves an $L\times L$ electrode system. If one electrode potential is used as the reference instead, $L-1$ responses and an $(L-1)\times(L-1)$ system suffice (section~\ref{sec:grounded_form}).

Computing the Robin responses can require more work than a direct augmented CEM solve when only a few current patterns are needed. Our goal is to make the CEM accessible through existing finite element packages; their solvers and parallel capabilities may offset this setup cost in suitable applications. Section~\ref{sec:discussion} discusses the computational costs and the conditions for reusing the responses.

\subsection{Complete electrode model}\label{sec:cem}

Let $\Omega\subset\mathbb{R}^{2}$ be a bounded, connected Lipschitz domain, and let the conductivity $\sigma(\boldsymbol x)$ be bounded with a strictly positive lower bound. The electrodes $E_\ell\subset\partial\Omega$, $\ell=1,\ldots,L$ with $L\geq1$, are mutually disjoint and have non-zero lengths. The potential $u$ within the domain and the constant potential $U_\ell$ on each electrode satisfy
\begin{align}
  \nabla\cdot(\sigma\nabla u) &= 0 && \text{in }\Omega, \label{eq:pde_zh}\\
  u+z_\ell\sigma\frac{\partial u}{\partial n} &= U_\ell && \text{on }E_\ell, \label{eq:contact_zh}\\
  \sigma\frac{\partial u}{\partial n} &= 0 && \text{on }\partial\Omega\setminus\bigcup_{\ell=1}^{L}E_\ell, \label{eq:insulation_zh}
\end{align}
where $n$ is the outward unit normal, and $z_\ell>0$ is the contact impedance of electrode $E_\ell$. Positive electrode current denotes injection into the domain; the total electrode current is
\begin{equation}
  I_\ell=\int_{E_\ell}\sigma\frac{\partial u}{\partial n}\,\mathrm{d}s
  =\frac{1}{z_\ell}\left(|E_\ell|U_\ell-\int_{E_\ell}u\,\mathrm{d}s\right).
  \label{eq:electrode_current_zh}
\end{equation}
The current vector $\bm{I}=[I_1,\ldots,I_L]^{\mathsf T}$ satisfies $\bm{1}^{\mathsf T}\bm{I}=0$, and the electrode-potential vector is $\bm{U}=[U_1,\ldots,U_L]^{\mathsf T}$. Because a common additive shift of the domain and electrode potentials leaves the currents unchanged, a potential reference must be fixed; alternative references and the common nodal reference used for the numerical comparisons are described in section~\ref{sec:grounded_form}.

\subsection{Finite element discretisation and block matrices}\label{sec:discretisation}

For a prescribed electrode-potential vector $\bm{U}$, multiplying equation~\eqref{eq:pde_zh} by a test function $\psi\in H^1(\Omega)$, integrating by parts and substituting equations~\eqref{eq:contact_zh} and~\eqref{eq:insulation_zh} gives the weak form of the Robin problem: find $u\in H^1(\Omega)$ such that
\begin{equation}
\int_{\Omega}\sigma\nabla u\cdot\nabla \psi\,\mathrm{d}x
+\sum_{\ell=1}^{L}\frac{1}{z_\ell}\int_{E_\ell}u\psi\,\mathrm{d}s
=\sum_{\ell=1}^{L}\frac{U_\ell}{z_\ell}\int_{E_\ell}\psi\,\mathrm{d}s
\quad\text{for all }\psi\in H^1(\Omega).
\label{eq:weak_robin_zh}
\end{equation}
Let $V_h\subset H^1(\Omega)$ be the continuous, piecewise-linear P1 finite element space on a connected conforming triangulation with nondegenerate triangles, with nodal basis functions $\{\phi_i\}_{i=1}^{N}$ and discrete potential $u_h=\sum_{i=1}^{N}u_i\phi_i$. We distinguish the continuous potential $u$, its finite element approximation $u_h$, and the coefficient vector $\bm u\defeq [u_1,\ldots,u_N]^{\mathsf T}$. Bold symbols denote vectors, while their individual components remain in ordinary type. For multiple current patterns, the same symbols collect the corresponding columns. The approximation gives the linear system
\[
(K+A)\bm u=C\bm U,
\]
where $K,A\in\mathbb R^{N\times N}$ and $C\in\mathbb R^{N\times L}$. We also define the diagonal electrode block $D\in\mathbb R^{L\times L}$. The entries are
\begin{align}
  K_{ij} &\defeq \int_\Omega \sigma\nabla\phi_i\cdot\nabla\phi_j\,\mathrm{d}x, \label{eq:fem_K}\\
  A_{ij} &\defeq \sum_{\ell=1}^{L}\frac{1}{z_\ell}
            \int_{E_\ell}\phi_i\phi_j\,\mathrm{d}s, \\
  C_{i\ell} &\defeq \frac{1}{z_\ell}\int_{E_\ell}\phi_i\,\mathrm{d}s, \\
  D_{\ell\ell} &\defeq \frac{|E_\ell|}{z_\ell}.
  \label{eq:fem_blocks_zh}
\end{align}
When these integrals are evaluated and accumulated exactly, the Robin finite element matrix is defined as $H\defeq K+A$. Here, $K$ is the domain stiffness matrix, $A$ is the Robin boundary matrix, $C$ maps electrode potentials to nodal boundary loads, and the diagonal matrix $D$ depends on electrode lengths and contact impedances. For a straight P1 boundary element of length $h$ on electrode $E_\ell$, the local contributions are
\begin{equation}
  A^{(e)}=\frac{h}{6z_\ell}
  \begin{bmatrix}2&1\\1&2\end{bmatrix},\qquad
  C^{(e)}=\frac{h}{2z_\ell}
  \begin{bmatrix}1\\1\end{bmatrix},\qquad
  D^{(e)}=\frac{h}{z_\ell}.
  \label{eq:local_electrode_blocks_zh}
\end{equation}
These closed-form expressions allow the electrode integrals and their assembly to be checked independently of any finite element package.

The coupled discrete weak form is also useful for identifying the energy: find $(u_h,\bm{U})\in V_h\times\mathbb R^L$ such that
\begin{equation}
\begin{split}
a_h((u_h,\bm{U}),(v_h,\bm{V}))
&\defeq\int_\Omega\sigma\nabla u_h\cdot\nabla v_h\,\mathrm dx\\
&\quad+\sum_\ell\frac1{z_\ell}\int_{E_\ell}(u_h-U_\ell)(v_h-V_\ell)\,\mathrm ds
=\bm{V}^{\mathsf T}\bm{I}
\end{split}
\label{eq:coupled_weak}
\end{equation}
for all $(v_h,\bm{V})\in V_h\times\mathbb R^L$. Testing the nodal and electrode coordinates separately gives the two rows of the block system~\eqref{eq:full_cem}. Conversely, those rows imply the identity for every test pair. In the current-driven CEM, $\bm{U}$ is also unknown. The next section describes the procedures that determine the nodal and electrode potentials and establishes their discrete equivalence.

\subsection{Robin solution and discrete equivalence}\label{sec:equivalence}

We first describe the two solution procedures compared in this study. The conventional approach solves the augmented system for the nodal and electrode potentials simultaneously:
\begin{equation}
F_0\begin{bmatrix}\bm{u}\\\bm{U}\end{bmatrix}=\begin{bmatrix}0\\\bm{I}\end{bmatrix},
\qquad F_0=\begin{bmatrix}H&-C\\-C^{\mathsf T}&D\end{bmatrix}.
\label{eq:full_cem}
\end{equation}
Alternatively, when $H$ is invertible, one can first solve the small electrode system
\begin{equation}
M\bm{U}=\bm{I}\quad\text{for }\bm{U},\qquad
M\defeq D-C^{\mathsf T}H^{-1}C\in\mathbb R^{L\times L},
\label{eq:electrode_system}
\end{equation}
and then solve the standard Robin problem
\begin{equation}
H\bm{u}=C\bm{U}\quad\text{for }\bm{u}.
\label{eq:robin_recovery}
\end{equation}
This procedure follows from a Schur-complement decomposition of equation~\eqref{eq:full_cem}: the first block row gives $\bm{u}=H^{-1}C\bm{U}$, and substituting this into the second block row leaves equation~\eqref{eq:electrode_system}, in which $M$ is the Schur complement of $H$ in $F_0$. The nodal matrix $H\in\mathbb R^{N\times N}$ is large, whereas $M$ has only one row and column per electrode, with $L\ll N$ in practice. For the ungrounded physical CEM, $M$ is singular with nullspace spanned by $\bm1_L$ (corollary~\ref{cor:solvability}), because a common potential shift leaves the currents unchanged; the injected currents must therefore sum to zero, and a potential reference must be fixed before equation~\eqref{eq:electrode_system} is solved. Section~\ref{sec:grounded_form} describes the grounding used here.

To construct $M$ before solving equation~\eqref{eq:electrode_system}, we solve $L$ auxiliary Robin problems with unit electrode potentials, as shown in figure~\ref{fig:graphical_abstract_zh}. Prescribing the electrode potentials $\bm{U}$ turns the first block row into the Robin equation $H\bm{u}=C\bm{U}$, and the resulting electrode currents are $\bm{I}=D\bm{U}-C^{\mathsf T}\bm{u}$. Writing
\begin{equation}
HZ_0=C,\qquad Z_0\defeq H^{-1}C,\qquad
M=D-C^{\mathsf T}Z_0,
\label{eq:ungrounded_responses}
\end{equation}
column $\ell$ of $Z_0$ is the nodal response to unit potential on electrode $\ell$ and zero potential on the others, and column $\ell$ of $M$ contains its total electrode currents. The electrode transadmittance matrix $M$ is therefore assembled from $L$ Robin solves with the same matrix $H$. After equation~\eqref{eq:electrode_system} has given $\bm{U}$, the nodal potentials follow from equation~\eqref{eq:robin_recovery}; when $Z_0$ has been stored, they are simply $\bm{u}=Z_0\bm{U}$, so no further large linear solve is needed. The inverse notation expresses the mathematics: an implementation can factorise $H$ once and solve for the columns without forming $H^{-1}$. The grounded blocks used in the experiments are given in section~\ref{sec:grounded_form}.

A further alternative follows from the other Schur-complement decomposition of equation~\eqref{eq:full_cem}, which eliminates $\bm{U}$ instead: solve
\[
S_u\bm{u}=CD^{-1}\bm{I}\quad\text{for }\bm{u},\qquad
S_u\defeq H-CD^{-1}C^{\mathsf T}\in\mathbb R^{N\times N},
\]
and then $D\bm{U}=\bm{I}+C^{\mathsf T}\bm{u}$ for $\bm{U}$. Here $D$ is positive diagonal, so applying $D^{-1}$ requires only componentwise division. For the exactly assembled energy model of lemma~\ref{lem:structure}, $M$ and $S_u$ are symmetric positive semidefinite with a constant nullspace before a reference is fixed; with the nodal reference of section~\ref{sec:grounded_form}, both corresponding Schur complements are positive definite.

We use the first alternative because its large matrix is $H$ itself. $H$ is the standard Robin finite element matrix, which a general-purpose package assembles through its existing Robin interface, and this allows existing solvers and preconditioners for the Robin operator to be reused. The second alternative instead requires the modified nodal matrix $S_u$, whose term $CD^{-1}C^{\mathsf T}$ couples every pair of nodes on the same electrode. This alternative requires an additional nonlocal electrode-coupling operation beyond standard local Robin assembly, although the coupling, which has rank at most $L$, can be applied without explicitly forming a dense matrix. Both alternatives can reuse their factorisations for new current inputs, so no general speed advantage follows from the choice of elimination order; the advantage of the first alternative is that it uses only standard finite element functionality. We now show that it has the same solution set as the augmented system.

\Needspace{12\baselineskip}
\begin{theorem}[Equivalence of the augmented and Robin solution procedures]\label{thm:equivalence}
Let $H$ be invertible, and use the same blocks $H,C,D$ in equations~\eqref{eq:full_cem} and~\eqref{eq:ungrounded_responses}. For every current vector $\bm{I}$ and pair $(\bm{u},\bm{U})$,
\begin{equation}
F_0\begin{bmatrix}\bm{u}\\\bm{U}\end{bmatrix}=\begin{bmatrix}0\\\bm{I}\end{bmatrix}
\quad\Longleftrightarrow\quad
\bm{u}=Z_0\bm{U}\ \text{and}\ M\bm{U}=\bm{I}.
\label{eq:main_equivalence}
\end{equation}
Thus solving the conventional augmented CEM and solving the electrode system followed by reconstruction from Robin responses give exactly the same solution set.
\end{theorem}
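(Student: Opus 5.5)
The argument is a direct block-elimination computation that uses only the invertibility of $H$ and the definitions $Z_0=H^{-1}C$ and $M=D-C^{\mathsf T}Z_0$ from equation~\eqref{eq:ungrounded_responses}. No positivity or structural property of $K$, $A$, $C$ or $D$ is needed. I will write the augmented system~\eqref{eq:full_cem} as its two block rows,
\begin{equation*}
H\bm{u}-C\bm{U}=0,\qquad -C^{\mathsf T}\bm{u}+D\bm{U}=\bm{I},
\end{equation*}
and prove each implication separately, for an arbitrary fixed $\bm I$ and an arbitrary pair $(\bm u,\bm U)$.

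For the forward direction ($\Rightarrow$), I multiply the first block row by $H^{-1}$ to get $\bm{u}=H^{-1}C\bm{U}=Z_0\bm{U}$. Substituting this into the second block row gives $D\bm{U}-C^{\mathsf T}Z_0\bm{U}=\bm{I}$, which is $M\bm{U}=\bm{I}$.

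For the reverse direction ($\Leftarrow$), I start from $\bm{u}=Z_0\bm{U}$ and use $HZ_0=C$ to obtain $H\bm{u}=C\bm{U}$, which is the first block row. Then
\begin{equation*}
-C^{\mathsf T}\bm{u}+D\bm{U}=(D-C^{\mathsf T}Z_0)\bm{U}=M\bm{U}=\bm{I},
\end{equation*}
which is the second block row. Together the two rows give $F_0[\bm u;\bm U]=[0;\bm I]$.

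Since the equivalence holds pointwise for each pair, the two solution sets coincide, and this includes the case where they are empty or non-unique. This point matters because $M$ is singular in the ungrounded setting (corollary~\ref{cor:solvability}). I will stress that the statement asserts equality of solution sets, not existence or uniqueness, so the singularity of $M$ or $F_0$ needs no separate treatment. There is no real obstacle here. The only care needed is to use the \emph{same} blocks $H,C,D$ on both sides, so that the identity $HZ_0=C$ is exact and no assembly or quadrature discrepancy enters.
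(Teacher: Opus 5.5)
Your proposal is correct and follows the paper's proof essentially step for step. The paper likewise expands $F_0$ into its two block rows, uses invertibility of $H$ together with $HZ_0=C$ to identify $H\bm{u}=C\bm{U}$ with $\bm{u}=Z_0\bm{U}$, and substitutes into the electrode row in both directions; your remark that neither invertibility of $M$, symmetry nor current balance is needed matches the paper's comment following the theorem.
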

\begin{proof}
\emph{Step 1: expand the block equations.} Equation~\eqref{eq:full_cem} is equivalent to $H\bm{u}=C\bm{U}$ and $D\bm{U}-C^{\mathsf T}\bm{u}=\bm{I}$.

\emph{Step 2: solve the nodal equation.} Since $HZ_0=C$ and $H$ is invertible, $H\bm{u}=C\bm{U}$ holds if and only if $\bm{u}=Z_0\bm{U}$.

\emph{Step 3: substitute into the electrode equation.} This gives $(D-C^{\mathsf T}Z_0)\bm{U}=M\bm{U}=\bm{I}$. Conversely, $M\bm{U}=\bm{I}$ and $\bm{u}=Z_0\bm{U}$ satisfy both original block rows.
\end{proof}
\noindent\emph{Lean 4:} theorem~\ref{thm:equivalence} is \path{augmented_iff_robin} in \path{CEM/Equivalence.lean}. Step 1 is \path{augmented_iff_rows}; step 2 uses \path{response_equation} and \path{nodal_iff_response}; step 3 completes \path{augmented_iff_robin}.

This block-elimination result requires neither invertibility of $M$ nor symmetry nor current conservation. The unit potentials in the auxiliary solves only generate the columns of $Z_0$ and $M$; the actual electrode potentials are unknowns determined by $M\bm{U}=\bm{I}$. The two-electrode example in supplementary section~10.4 illustrates this distinction. The following energy result justifies $H$ invertibility for the CEM.

\begin{lemma}[Structure of the exact discrete CEM]\label{lem:structure}
For the discretisation of section~\ref{sec:discretisation} under the assumptions of section~\ref{sec:cem}, exact assembly gives
\[
K\bm1_N=0,\qquad A\bm1_N=C\bm1_L,\qquad
C^{\mathsf T}\bm1_N=D\bm1_L.
\]
The matrix $H$ is symmetric positive definite. The full matrix $F_0$ is symmetric positive semidefinite, with nullspace consisting exactly of common constant potentials.
\end{lemma}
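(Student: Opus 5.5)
The plan is to derive every claim from one quadratic-form identity. For $\bm v\in\mathbb R^N$ and $\bm V\in\mathbb R^L$, let $v_h=\sum_i v_i\phi_i$. Exact assembly of the blocks in section~\ref{sec:discretisation} gives
\[
\begin{bmatrix}\bm v\\\bm V\end{bmatrix}^{\mathsf T}F_0\begin{bmatrix}\bm v\\\bm V\end{bmatrix}
=\int_\Omega\sigma|\nabla v_h|^2\,\mathrm dx+\sum_\ell\frac1{z_\ell}\int_{E_\ell}(v_h-V_\ell)^2\,\mathrm ds .
\]
This is the coupled form~\eqref{eq:coupled_weak} evaluated on the diagonal. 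To check it, expand $(v_h-V_\ell)^2=v_h^2-2V_\ell v_h+V_\ell^2$. The three terms reproduce $\bm v^{\mathsf T}A\bm v$, $-2\bm v^{\mathsf T}C\bm V$ and $\bm V^{\mathsf T}D\bm V$, because $\int_{E_\ell}1\,\mathrm ds=|E_\ell|$. Symmetry of $K$, $A$ and $D$ is immediate from the definitions, and the off-diagonal blocks are $-C$ and $-C^{\mathsf T}$, so $F_0$ is symmetric.

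The three identities follow from the partition of unity $\sum_j\phi_j\equiv1$ on $V_h$. For $K\bm1_N=0$, note that $\nabla\sum_j\phi_j=0$. For $(A\bm1_N)_i$, summing over $j$ gives $\sum_\ell z_\ell^{-1}\int_{E_\ell}\phi_i\,\mathrm ds=\sum_\ell C_{i\ell}=(C\bm1_L)_i$. For $(C^{\mathsf T}\bm1_N)_\ell$, summing over $i$ gives $z_\ell^{-1}\int_{E_\ell}1\,\mathrm ds=D_{\ell\ell}$. Each uses only exact integration, which the hypothesis grants.

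Positive semidefiniteness of $F_0$ is immediate from the energy identity, since $\sigma>0$ and $z_\ell>0$. For the nullspace, a symmetric PSD matrix has $F_0\bm x=0$ iff $\bm x^{\mathsf T}F_0\bm x=0$. Zero energy forces $\nabla v_h=0$ on every triangle. A connected conforming triangulation with continuous $v_h$ then makes $v_h$ a global constant $c$, so $\bm v=c\bm1_N$. The boundary terms force $v_h=V_\ell$ on $E_\ell$, and since $|E_\ell|>0$ this gives $V_\ell=c$. Conversely, the three identities show directly that $F_0[\bm1_N;\bm1_L]=0$.

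For $H$, the identity with $\bm V=0$ gives $\bm v^{\mathsf T}H\bm v=\int_\Omega\sigma|\nabla v_h|^2+\sum_\ell z_\ell^{-1}\int_{E_\ell}v_h^2\ge0$. If this vanishes, $v_h=c$ as before, and then $c^2\sum_\ell|E_\ell|/z_\ell=0$. Because $L\ge1$ and the electrodes have positive length, $c=0$, so $H$ is symmetric positive definite.

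The main point needing care is the step from a vanishing gradient to a global constant. This requires the positive lower bound on $\sigma$, so that $\int_T\sigma|\nabla v_h|^2=0$ forces $\nabla v_h|_T=0$ on each triangle $T$. It also requires connectivity of the triangulation: neighbouring triangles share nodes, so the constant propagates through the mesh. Everything else is bookkeeping of the block expansion.
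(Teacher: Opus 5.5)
Your proposal is correct and takes essentially the same route as the paper. Both use partition of unity for the block identities and expand the quadratic form of $F_0$ into the energy identity. Zero energy then forces a single constant through continuity and mesh connectivity, positive electrode lengths force each $V_\ell$ to that constant, and $\bm V=0$ forces the constant to vanish for $H$. You also state explicitly the step the paper leaves implicit: for a symmetric positive semidefinite matrix, zero energy is equivalent to lying in the kernel.
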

\begin{proof}
The block identities follow from $\sum_i\phi_i=1$ and equations~\eqref{eq:fem_K}--\eqref{eq:fem_blocks_zh}. Expanding the quadratic form gives, for arbitrary $\bm{u}$ and $\bm{U}$,
\begin{equation}
\begin{bmatrix}\bm{u}\\\bm{U}\end{bmatrix}^{\!\mathsf T}F_0
\begin{bmatrix}\bm{u}\\\bm{U}\end{bmatrix}
=\int_\Omega\sigma|\nabla u_h|^2\,\mathrm dx+
\sum_{\ell=1}^L\frac1{z_\ell}\int_{E_\ell}(u_h-U_\ell)^2\,\mathrm ds.
\label{eq:energy}
\end{equation}
All terms are non-negative. Zero volume energy forces $u_h$ to be constant on each triangle; continuity and mesh connectivity make it one constant throughout the domain. Since every electrode has positive length, zero contact energy forces each $U_\ell$ to equal that constant. Conversely, common constants satisfy $F_0\bm1=0$. Thus $F_0$ is positive semidefinite with exactly this nullspace. Setting $\bm{U}=0$ makes the constant zero, so $\bm{u}^{\mathsf T}H\bm{u}=0$ only for $\bm{u}=0$, proving $H$ positive definite. Supplementary section~10.1 gives the corresponding finite weighted-square proof and its explicit P1 coefficients.
\end{proof}
\noindent\emph{Lean 4:} In \path{CEM/Assembly.lean}, \path{DiscreteModel.weak_form_iff} links the finite weak form to the blocks; \path{K_constant}, \path{A_constant}, \path{coupling_constant}, \path{energy_zero_iff}, \path{full_kernel_iff} and \path{H_posDef} in that namespace follow the lemma.

\begin{corollary}[Current compatibility and uniqueness up to a constant]\label{cor:solvability}
For the exactly assembled model, $Z_0\bm1_L=\bm1_N$ and $M\bm1_L=0$. The matrix $M$ is symmetric positive semidefinite with $\ker M=\operatorname{span}\{\bm1_L\}$. Every current vector with $\bm1_L^{\mathsf T}\bm{I}=0$ therefore has a full CEM solution, unique up to a common additive constant in $\bm{u}$ and $\bm{U}$.
\end{corollary}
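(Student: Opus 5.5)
The plan is to derive everything from lemma~\ref{lem:structure} and theorem~\ref{thm:equivalence}, in four steps.

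\emph{Step 1: constants.} Lemma~\ref{lem:structure} gives $K\bm1_N=0$ and $A\bm1_N=C\bm1_L$. Hence $H\bm1_N=C\bm1_L$. Since $H$ is invertible, this gives $Z_0\bm1_L=H^{-1}C\bm1_L=\bm1_N$. Then
\[
M\bm1_L=D\bm1_L-C^{\mathsf T}Z_0\bm1_L=D\bm1_L-C^{\mathsf T}\bm1_N=0,
\]
using the third block identity $C^{\mathsf T}\bm1_N=D\bm1_L$.

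\emph{Step 2: symmetry and semidefiniteness.} Symmetry of $M$ is immediate: $D$ is diagonal and $C^{\mathsf T}H^{-1}C$ is symmetric because $H$ is. For semidefiniteness, use the quadratic form of $F_0$ evaluated at $(Z_0\bm V,\bm V)$ for arbitrary $\bm V\in\mathbb R^L$. Expanding and using $HZ_0=C$, the cross and nodal terms combine:
\[
\bm V^{\mathsf T}Z_0^{\mathsf T}HZ_0\bm V-2\bm V^{\mathsf T}C^{\mathsf T}Z_0\bm V+\bm V^{\mathsf T}D\bm V=\bm V^{\mathsf T}(D-C^{\mathsf T}Z_0)\bm V=\bm V^{\mathsf T}M\bm V.
\]
Since $F_0$ is positive semidefinite, $\bm V^{\mathsf T}M\bm V\ge0$.

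\emph{Step 3: the kernel.} If $M\bm V=0$, then $\bm V^{\mathsf T}M\bm V=0$. So $(Z_0\bm V,\bm V)$ is a zero-energy vector of the positive semidefinite matrix $F_0$, and therefore lies in $\ker F_0$. By lemma~\ref{lem:structure}, $\ker F_0$ consists exactly of common constants, so $\bm V\in\operatorname{span}\{\bm1_L\}$. Together with Step 1, this gives $\ker M=\operatorname{span}\{\bm1_L\}$.

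I expect this step to be the main subtlety, though it is short. One must check two things: that a zero of a positive semidefinite quadratic form lies in the kernel, which follows from the Cauchy--Schwarz inequality for semidefinite forms; and that the Schur-complement energy identity of Step 2 holds exactly.

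\emph{Step 4: solvability and uniqueness.} Because $M$ is symmetric, $\operatorname{range}M=(\ker M)^\perp=\{\bm I:\bm1_L^{\mathsf T}\bm I=0\}$. So every compatible $\bm I$ admits some $\bm U$ with $M\bm U=\bm I$, and $\bm U$ is unique up to adding $c\bm1_L$. Theorem~\ref{thm:equivalence} then gives the full CEM solution $(\bm u,\bm U)=(Z_0\bm U,\bm U)$. The solutions are exactly these pairs, and shifting $\bm U$ by $c\bm1_L$ shifts $\bm u$ by $cZ_0\bm1_L=c\bm1_N$. Hence the solution is unique up to a common additive constant in $\bm u$ and $\bm U$.
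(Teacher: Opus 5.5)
Your proposal is correct and follows the paper's proof essentially step for step: the block identities of Lemma~\ref{lem:structure} give $Z_0\bm1_L=\bm1_N$ and $M\bm1_L=0$; the energy~\eqref{eq:energy} evaluated at $(Z_0\bm U,\bm U)$ equals $\bm U^{\mathsf T}M\bm U$, which yields semidefiniteness and the constant kernel; and Theorem~\ref{thm:equivalence} together with $Z_0\bm1_L=\bm1_N$ carries uniqueness up to a constant over to $\bm u$. The differences are cosmetic: you reach the kernel through $\ker F_0$, whereas the proof of Lemma~\ref{lem:structure} already shows that zero-energy vectors are common constants, so your Cauchy--Schwarz step is unnecessary; and you use $\operatorname{range}M=(\ker M)^\perp$ where the paper uses invertibility of $M$ on $\bm1_L^\perp$.
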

\begin{proof}
The block identities of lemma~\ref{lem:structure} give $H\bm1_N=C\bm1_L$, hence $Z_0\bm1_L=\bm1_N$ and $M\bm1_L=0$. Symmetry follows from $M=D-C^{\mathsf T}H^{-1}C$. For $\bm{u}=Z_0\bm{U}$, the quadratic form in equation~\eqref{eq:energy} equals $\bm{U}^{\mathsf T}M\bm{U}$, so $M$ is positive semidefinite and has exactly the constant nullspace. Its restriction to $\bm1_L^\perp$ is consequently positive definite and invertible. Every balanced $\bm{I}$ has a unique solution there; all other solutions differ by a constant. Applying theorem~\ref{thm:equivalence} reconstructs $\bm{u}$, and $Z_0\bm1_L=\bm1_N$ gives the same nodal shift.
\end{proof}
\noindent\emph{Lean 4:} \path{DiscreteModel.assembled_current_compatibility} in \path{CEM/MeanZero.lean} collects the complete conclusion, including existence and the common-shift characterisation of all solutions. It combines the matrix properties in \path{CEM/Schur.lean} with \path{assembled_zeroMean_solution}, whose construction is given in supplementary proposition~S1.

A potential reference selects one solution from this family; it does not establish the equivalence in theorem~\ref{thm:equivalence}.

\subsection{Potential references and the EIDORS comparison}\label{sec:grounded_form}
One may impose a zero-mean electrode potential, fix one electrode potential, or fix one nodal potential. The first two choices retain the ungrounded Robin matrix $H$ and response matrix $Z_0$. A zero-mean reference restricts $M\bm{U}=\bm{I}$ to $\bm1_L^\perp$, with the explicit formula given below. Fixing $U_r=0$ gives an $(L-1)\times(L-1)$ system by deleting electrode row and column $r$; the following result recovers the missing equation.

\begin{lemma}[Recovery of an omitted reference equation]\label{lem:reference_row}
Let $M$ be symmetric with $M\bm1_L=0$, and let $\bm1_L^{\mathsf T}\bm{I}=0$. If $U_r=0$ and $(M\bm{U})_\ell=I_\ell$ for every $\ell\ne r$, then $(M\bm{U})_r=I_r$. Thus the reduced system and the full equation with $U_r=0$ are equivalent, even when the prescribed current $I_r$ is non-zero.
\end{lemma}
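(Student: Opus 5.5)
The argument will show that the residual vector $\bm r\defeq M\bm U-\bm I$ satisfies two linear constraints that force its $r$th entry to vanish. By hypothesis $r_\ell=0$ for every $\ell\ne r$, so only $r_r$ remains to be determined. Summing all entries gives
\[
\bm1_L^{\mathsf T}\bm r=\bm1_L^{\mathsf T}M\bm U-\bm1_L^{\mathsf T}\bm I .
\]
The second term is zero by current conservation. For the first term, symmetry of $M$ gives $\bm1_L^{\mathsf T}M=(M\bm1_L)^{\mathsf T}=0$, since $M\bm1_L=0$. Hence $\bm1_L^{\mathsf T}\bm r=0$, and because all other entries vanish, $r_r=0$, that is, $(M\bm U)_r=I_r$.

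Note that this first half never uses $U_r=0$. That hypothesis is needed only for the second statement, the equivalence between the reduced $(L-1)\times(L-1)$ system and the full equation with $U_r=0$.

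For that equivalence, write the reduced system as the matrix $\tilde M$ obtained by deleting row and column $r$ of $M$, acting on $\tilde{\bm U}$, the vector $\bm U$ with entry $r$ removed. When $U_r=0$, column $r$ of $M$ contributes nothing to $M\bm U$. So for each $\ell\ne r$, the entry $(M\bm U)_\ell$ equals $(\tilde M\tilde{\bm U})_\ell$, and the reduced equations are exactly the full equations for $\ell\ne r$. The converse direction is immediate, since the full system contains the reduced rows. The forward direction is the first part of the argument, which supplies the missing row $r$. Nothing in this reasoning requires $I_r=0$, which accounts for the final remark in the lemma.

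No step should pose real difficulty. The only point requiring care is to use symmetry to turn the right-nullvector property $M\bm1_L=0$ into the left-nullvector property $\bm1_L^{\mathsf T}M=0$. Without symmetry, the column sums of $M$ need not vanish and the argument fails. In the exactly assembled setting, both properties are supplied by corollary~\ref{cor:solvability}, but the lemma as stated assumes them abstractly, so the proof will rely only on those abstract hypotheses.
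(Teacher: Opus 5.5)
Your proposal is correct and is essentially the paper's argument: symmetry turns $M\bm1_L=0$ into $\bm1_L^{\mathsf T}M=0$, so the residuals of $M\bm U=\bm I$ sum to zero and the omitted one must vanish, with the converse by restriction. Your remark that $U_r=0$ is needed only to identify the reduced $(L-1)\times(L-1)$ rows with the full rows is accurate, and it spells out a step the paper leaves implicit.
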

\begin{proof}
Symmetry gives $\bm1_L^{\mathsf T}M=0$. Therefore the sum of the residuals $M\bm{U}-\bm{I}$ is zero; when all retained residuals vanish, the omitted residual must vanish too. Conversely, restriction of a full solution gives the reduced system.
\end{proof}
\noindent\emph{Lean 4:} lemma~\ref{lem:reference_row} is \path{reference_reduction_iff_full} in \path{CEM/Grounding.lean}. It uses \path{total_current_zero} and \path{recover_ground_row}, and allows a non-zero omitted current $I_r$.

For numerical comparisons we adopt the nodal reference used by the EIDORS forward solver, and impose it at the same node in every implementation. Let $P$ insert a zero at node $g$, so $\bm{u}=P\bm{v}$, and define $H_g\defeq P^{\mathsf T}HP$ and $C_g\defeq P^{\mathsf T}C$. Deleting that row and column gives
\begin{equation}
F_g\begin{bmatrix}\bm{v}\\\bm{U}\end{bmatrix}
=\begin{bmatrix}H_g&-C_g\\-C_g^{\mathsf T}&D\end{bmatrix}
\begin{bmatrix}\bm{v}\\\bm{U}\end{bmatrix}=\begin{bmatrix}0\\\bm{I}\end{bmatrix}.
\label{eq:grounded}
\end{equation}
Applying the same block elimination as in theorem~\ref{thm:equivalence}, now to $H_g,C_g,D$, gives
\begin{equation}
H_gZ=C_g,\qquad M_g=D-C_g^{\mathsf T}Z,
\label{eq:responses}
\end{equation}
and the grounded Robin computation is
\begin{equation}
M_g\bm{U}=\bm{I},\qquad \bm{v}=Z\bm{U},\qquad \bm{u}=P\bm{v}.
\label{eq:robin}
\end{equation}
Here $Z$ and $M_g$ differ from $Z_0$ and $M$: the auxiliary solves also fix $u_g=0$. Nodal grounding is a constraint on a discrete degree of freedom, not a pointwise condition in the continuous weak formulation.

\begin{proposition}[Grounded implementation of the same CEM]\label{prop:grounding}
Suppose that $F_0$ is symmetric, $F_0\bm1=0$, its ground-deleted submatrix is $F_g$, and $\bm1^{\mathsf T}\bm{I}=0$. Then a pair $(\bm{v},\bm{U})$ satisfies equation~\eqref{eq:grounded} if and only if $(P\bm{v},\bm{U})$ satisfies every row of equation~\eqref{eq:full_cem}. If $H_g$ is invertible, these conditions are also equivalent to equation~\eqref{eq:robin}. If $M_g$ is additionally invertible, the solution with $u_g=0$ exists uniquely and is given by $\bm{U}=M_g^{-1}\bm{I}$, $\bm{u}=PZM_g^{-1}\bm{I}$.
\end{proposition}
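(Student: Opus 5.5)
The plan is to split the claim into three parts: the grounded system versus the full system, the grounded system versus the grounded Robin procedure, and existence and uniqueness. Let $\tilde P$ denote the embedding of $(\bm v,\bm U)$ into $(P\bm v,\bm U)$, so that $F_g=\tilde P^{\mathsf T}F_0\tilde P$. Then equation~\eqref{eq:grounded} says exactly that all rows of $F_0\tilde P(\bm v,\bm U)^{\mathsf T}-(0,\bm I)^{\mathsf T}$ vanish except the deleted row $g$. This holds because $\tilde P^{\mathsf T}$ deletes row $g$, and the right-hand side $(0,\bm I)$ restricted to the retained rows is again $(0,\bm I)$, since $g$ is a nodal index.

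\emph{Recovering the omitted row.} Let $\bm r\defeq F_0[P\bm v;\bm U]-[0;\bm I]$. Symmetry and $F_0\bm1=0$ give $\bm1^{\mathsf T}F_0=0$. Hence $\bm1^{\mathsf T}\bm r=-\bm1_L^{\mathsf T}\bm I=0$. If every retained residual vanishes, then $r_g=0$ as well. This is the same argument as lemma~\ref{lem:reference_row}, now applied to the augmented matrix $F_0$ with a nodal index deleted instead of an electrode index. The converse is immediate by restriction. The step needing care is that the deleted row has zero right-hand side. The argument still only needs the total sum of the right-hand side to vanish, which is exactly current balance.

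\emph{Link to equation~\eqref{eq:robin}.} If $H_g$ is invertible, apply theorem~\ref{thm:equivalence} verbatim with blocks $(H_g,C_g,D)$ in place of $(H,C,D)$. That theorem uses only the block structure and invertibility of the $(1,1)$ block; it does not need symmetry, definiteness or conservation. It gives that \eqref{eq:grounded} holds if and only if $\bm v=Z\bm U$ and $M_g\bm U=\bm I$, and $\bm u=P\bm v$ is then just a definition. Chaining this with the first equivalence proves the second claim.

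\emph{Existence and uniqueness.} If $M_g$ is also invertible, then $M_g\bm U=\bm I$ has the unique solution $\bm U=M_g^{-1}\bm I$, and $\bm v=Z\bm U$ follows. Since $P$ is injective and its range is exactly $\{\bm u:u_g=0\}$, full solutions $(\bm u,\bm U)$ with $u_g=0$ correspond bijectively to pairs $(\bm v,\bm U)$. This gives existence and uniqueness with $\bm u=PZM_g^{-1}\bm I$.

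I expect no real obstacle. The only subtle point is the bookkeeping in the first part: $F_g$ is the compression $\tilde P^{\mathsf T}F_0\tilde P$, not merely a submatrix applied to a truncated vector. One must check that deleting the column $g$ is harmless because $P\bm v$ has a zero in that entry, and that deleting the row is harmless because of the residual-sum argument.
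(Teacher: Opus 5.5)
Your proposal is correct and follows essentially the same route as the paper's proof. It identifies the retained full rows with equation~\eqref{eq:grounded} by inserting a zero ground value, and recovers the deleted row from $\bm1^{\mathsf T}F_0=0$ together with current balance; it then applies theorem~\ref{thm:equivalence} to $(H_g,C_g,D)$ and obtains uniqueness from invertibility of $M_g$, where you merely make the role of the injectivity and range of $P$ explicit.
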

\begin{proof}
Inserting a zero ground value identifies the retained full equations with equation~\eqref{eq:grounded}. Symmetry and $F_0\bm1=0$ give $\bm1^{\mathsf T}F_0=0$. Since the right-hand side also sums to zero, satisfaction of the retained rows forces the deleted row to hold. The reverse implication follows by restriction. The remaining statements follow from theorem~\ref{thm:equivalence} applied to the grounded blocks and, for uniqueness, from invertibility of $M_g$.
\end{proof}

\noindent\emph{Lean 4:} In \path{CEM/Complete.lean}, \path{full_cem_iff_robin}, \path{full_cem_iff_explicit_solution} and \path{full_cem_exists_unique} give the equivalence, explicit solution and uniqueness in proposition~\ref{prop:grounding}. Its row-recovery step is \path{grounded_reduction_iff_full} in \path{CEM/Grounding.lean}; the reduced-system helpers \path{augmented_iff_explicit_solution} and \path{grounded_exists_unique} are in \path{CEM/Equivalence.lean}. The finite-model result \path{DiscreteModel.assembled_grounded_full_exists_unique} in \path{CEM/GroundedEnergy.lean} derives the inverse hypotheses.

For the exactly assembled model, equation~\eqref{eq:energy} restricted to $u_g=0$ proves that $F_g$ is positive definite: its only possible zero-energy vector is a common constant, and the fixed nodal value makes that constant zero. Its principal block $H_g$ and Schur complement $M_g$ are therefore positive definite. Because $F_g$ is positive definite, its other Schur complement $S_{u,g}\defeq H_g-C_gD^{-1}C_g^{\mathsf T}$, taken with respect to the diagonal block $D$, is positive definite as well.

An individual grounded auxiliary response can have non-zero total electrode current. Writing $\bm{I}_{\mathrm{aux}}$ for column $\ell$ of $M_g$, the deleted row supplies a reaction $r_g=-\bm1^{\mathsf T}\bm{I}_{\mathrm{aux}}$. The result in proposition~\ref{prop:grounding} ensures that this reaction vanishes in the final balanced combination. Each ungrounded auxiliary response already satisfies every source-free nodal equation.

\noindent\emph{Lean 4:} \path{ground_reaction_eq_neg_sum} in \path{CEM/Grounding.lean} checks the auxiliary reaction identity.

The result in corollary~\ref{cor:solvability} establishes uniqueness up to a constant for the exactly assembled energy model. The next result instead uses only the matrix hypotheses of proposition~\ref{prop:grounding}, without an energy assumption.

\begin{corollary}[Independence of voltage differences from the reference]\label{cor:gauge}
Under the hypotheses of proposition~\ref{prop:grounding}, with $H_g$ and $M_g$ invertible, any two full solutions for the same current input differ by a common constant. Thus nodal grounding, a fixed electrode potential and a zero-mean electrode reference yield identical voltage differences in exact arithmetic.
\end{corollary}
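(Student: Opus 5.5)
The plan is to show that the solution set of the full system~\eqref{eq:full_cem} for a given balanced $\bm I$ is exactly one particular solution plus the span of the common constant vector $\bm1=(\bm1_N,\bm1_L)$. The statement about references then follows by comparing the solutions each reference selects.

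\emph{Step 1: existence of a particular solution.} Proposition~\ref{prop:grounding} gives a full solution $(\bm u^\ast,\bm U^\ast)=(PZM_g^{-1}\bm I,\,M_g^{-1}\bm I)$ with $u^\ast_g=0$.

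\emph{Step 2: every full solution is this one plus a constant.} Let $(\bm u,\bm U)$ be any full solution and set $c\defeq u_g$. Because $F_0\bm1=0$, the shifted pair $(\bm u-c\bm1_N,\bm U-c\bm1_L)$ is again a full solution, and its ground component vanishes. It can therefore be written as $(P\bm v,\bm U-c\bm1_L)$. By the first equivalence in proposition~\ref{prop:grounding}, $(\bm v,\bm U-c\bm1_L)$ satisfies~\eqref{eq:grounded}. Since $H_g$ is invertible, the same proposition makes this equivalent to~\eqref{eq:robin}. Invertibility of $M_g$ then forces $\bm U-c\bm1_L=\bm U^\ast$ and $\bm v=Z\bm U^\ast$. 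Hence $(\bm u,\bm U)=(\bm u^\ast,\bm U^\ast)+c\bm1$. Applying this to two full solutions shows that they differ by a common constant. Conversely, $F_0\bm1=0$ shows that adding a common constant preserves solutions.

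\emph{Step 3: the three references.} Nodal grounding picks the member with $u_g=0$. The fixed-electrode reference $U_r=0$ solves the reduced electrode system, which lemma~\ref{lem:reference_row} shows is equivalent to the full equation $M\bm U=\bm I$ with $U_r=0$. Theorem~\ref{thm:equivalence} then turns this into a full solution of~\eqref{eq:full_cem}. The zero-mean reference likewise yields a full solution of $M\bm U=\bm I$ lying in $\bm1_L^\perp$, again lifted through theorem~\ref{thm:equivalence}. Both of these use the ungrounded $H$, so I need $H$ invertible. That is not among the stated matrix hypotheses, and it is the point needing care. I would handle it as follows: for the exactly assembled model it holds by lemma~\ref{lem:structure}, and otherwise I would restrict the claim to references that produce full solutions of~\eqref{eq:full_cem}, which is all Step 2 uses. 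Each reference therefore selects a member of the family from Step 2, and all voltage differences $U_\ell-U_k$ and $u_i-u_j$ coincide, since the common constant cancels.

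The main obstacle is Step 2: making sure the shift by $u_g$ lands exactly in the grounded formulation. Once that is done, uniqueness for $u_g=0$ from proposition~\ref{prop:grounding} supplies everything, without any energy argument.
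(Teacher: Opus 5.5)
Your proposal is correct and follows the paper's argument: shift each full solution by its ground-node value, use $F_0\bm1=0$ to keep it a solution, and invoke uniqueness at the common nodal reference from proposition~\ref{prop:grounding}, so any two full solutions differ by a common constant. Your caveat in Step~3 about invertibility of $H$ is more careful than the paper, which does not address it, but you can avoid it altogether: the solution set is $\{\bm{x}^\ast+c\bm1\}$, so choosing $c=-U_r^\ast$ or $c=-L^{-1}\sum_\ell U_\ell^\ast$ already gives the fixed-electrode and zero-mean representatives, and $H$ invertibility matters only if you insist on computing them through $Z_0$ and $M$.
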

\begin{proof}
For any two full solutions $\bm{x}$ and $\bm{y}$, subtract their respective ground-node values to obtain $\bm{x}^g=\bm{x}-x_g\bm1$ and $\bm{y}^g=\bm{y}-y_g\bm1$. The identity $F_0\bm1=0$ preserves the equations. Uniqueness at the common nodal reference gives $\bm{x}^g=\bm{y}^g$, hence $\bm{x}-\bm{y}=(x_g-y_g)\bm1$ and $x_i-x_j=y_i-y_j$ for every pair of indices $i,j$.
\end{proof}
\noindent\emph{Lean 4:} \path{full_cem_solutions_equal_after_grounding} and \path{full_cem_solutions_same_voltage_differences} in \path{CEM/Gauge.lean}.

For the exactly assembled model, let $J\defeq\bm1_L\bm1_L^{\mathsf T}/L$, the orthogonal projector onto constants. Since $M$ is positive semidefinite with $\ker M=\operatorname{span}\{\bm1_L\}$ (corollary~\ref{cor:solvability}), $M+J$ is positive definite and hence invertible. The zero-mean electrode reference is obtained from
\begin{equation}
J=\frac1L\bm1_L\bm1_L^{\mathsf T},\qquad
G\defeq(M+J)^{-1}-J=M^\dagger.
\label{eq:cem_pseudoinverse}
\end{equation}
Here $M$ is the ungrounded electrode matrix, and $M^\dagger$ denotes its Moore--Penrose pseudoinverse. For balanced currents, $\bm{U}=G\bm{I}$ and $\bm{u}=Z_0\bm{U}$ give the unique zero-mean solution. Supplementary proposition~S1 proves this construction and its uniqueness, with the matching Lean statements. Equivalently, subtract the electrode mean from both the nodal and electrode potentials of any full solution. In the error analysis of section~\ref{sec:rational_results}, expressing both computed and reference electrode outputs in this convention removes the mean component of their difference. This is distinct from re-solving a separately assembled floating-point system with a zero-mean constraint (section~\ref{sec:discussion}).

Reference invariance concerns exact solutions of the same discrete problem. For the numerical comparisons, the mesh, inputs, signs and nodal reference are aligned with EIDORS before comparing solution procedures and assembly effects. A common return electrode in a current pattern is not a potential reference.

\subsection{Correspondence with the Lean 4 proofs}\label{sec:lean}
The formalisation uses Lean~4.24.0 \citep{demoura2021lean4} and mathlib \citep{mathlib2020} at two connected levels. The generic equivalence and reference results hold over a commutative ring. Over exact real numbers, the finite energy model constructs the blocks from gradient and boundary maps, proves the coefficient weak form and lemma~\ref{lem:structure}, derives the properties of $M,F_g,H_g,M_g$, and constructs the zero-mean solution and pseudoinverse. In particular, \path{DiscreteModel.assembled_iff_robin} derives $H$ invertibility before applying theorem~\ref{thm:equivalence}. The P1 boundary module verifies the local coefficients in equation~\eqref{eq:local_electrode_blocks_zh}. Table~\ref{tab:lean} follows the manuscript's result order.

The passage from continuous integrals and mesh geometry to the finite input data remains a paper argument; continuous trace theory, platform assembly programs, experimental reference coefficients, floating-point solvers and PDE convergence are outside the certificate. For shared stored coefficients treated as exact numbers, theorem~\ref{thm:equivalence} still applies when $H$ is invertible; the reference results additionally require their exact matrix identities. Run \texttt{lake build}, then \texttt{lake env lean Paper.lean}, to check the proofs and display their statements in paper order. Supplementary sections~10.1--10.3 explain the finite inputs, Lean notation and pseudoinverse proof; section~10.4 supplies an exact example.

\begin{table}[!htb]\centering\small
\caption{Paper--Lean correspondence in reading order. Names are in namespace \texttt{CEM}; paths are relative to the proof package. Detailed proof steps appear in supplementary table~S29.}\label{tab:lean}
\def\UrlBreaks{\do\/\do\_\do\.}
\begin{tabularx}{\linewidth}{@{}>{\raggedright\arraybackslash}p{0.23\linewidth}Y Y@{}}\toprule
Paper result & Starting data or hypotheses & Lean statement and file\\\midrule
theorem~\ref{thm:equivalence}: augmented/Robin equivalence & Identical $H,C,D$; $H$ invertible & \path{augmented_iff_robin}\newline\path{CEM/Equivalence.lean}\\\addlinespace[3pt]
lemma~\ref{lem:structure}: energy and definiteness & Connected P1 mesh; positive conductivity and contact impedances; electrodes of positive length (sections~\ref{sec:cem}--\ref{sec:discretisation}); finite inputs in supplementary section~10.1 & \path{DiscreteModel.energy_zero_iff}, \path{DiscreteModel.H_posDef}\newline\path{CEM/Assembly.lean}\\\addlinespace[3pt]
corollary~\ref{cor:solvability}: compatibility and uniqueness & lemma~\ref{lem:structure}; balanced current for existence & \path{DiscreteModel.assembled_current_compatibility}\newline\path{CEM/MeanZero.lean}\\\addlinespace[3pt]
lemma~\ref{lem:reference_row}: omitted reference equation & Symmetry, constant null vector, balanced full load & \path{reference_reduction_iff_full}\newline\path{CEM/Grounding.lean}\\\addlinespace[3pt]
proposition~\ref{prop:grounding}: nodal implementation & Stated matrix hypotheses and balanced current & \path{full_cem_iff_robin}, \path{full_cem_exists_unique}\newline\path{CEM/Complete.lean}\\\addlinespace[3pt]
corollary~\ref{cor:gauge}: reference invariance & Uniqueness at the common nodal reference & \path{full_cem_solutions_same_voltage_differences}\newline\path{CEM/Gauge.lean}\\\addlinespace[3pt]
Supplementary proposition S1: zero mean and pseudoinverse & Derived positivity and constant energy kernel of $M$ & \path{cemInverse_penrose}, \path{cemInverse_unique}, \path{zeroMean_solution_iff}\newline\path{CEM/MeanZero.lean}\\
\bottomrule
\end{tabularx}
\end{table}

\section{Numerical experiments and evaluation methods}\label{sec:numerical}
\subsection{Experimental design and comparison groups}\label{sec:design}
Two complementary test cases are used. The 2692-node case compares the platform-native Robin implementations with EIDORS and, through shared matrices and recorded local integrals, separates the effects of the solution and assembly procedures. The 2535-node case is assembled and solved exactly in rational arithmetic and provides a common reference for the numerical error of each implementation before and after iterative refinement. Both cases use 16 electrodes and P1 triangular elements but differ in mesh and in the representation of the material interface, so they define distinct discrete problems. A 1055-node rational case provides an additional exact-reference comparison (supplementary table~S27). Parameter variations on the 2692-node mesh and an analytical solution on a rectangular domain provide further checks.

The software benchmark $R$ is the conventional augmented CEM output of the EIDORS forward function \texttt{fwd\_solve}. The platform-native Robin--transadmittance implementation on platform $s$ is denoted by $P_s$. Three further implementations, $A$, $B_s$ and $C_s$, are used to examine the effects of the solution formulation, the platform-specific solution procedure and the assembly procedure, respectively (table~\ref{tab:paths}). In the 2535-node case, the exact discrete reference $\bm{x}_{\mathbb Q}$ is common to $R$, $A$ and the five $P_s$ implementations, so EIDORS is evaluated in the same way as the other implementations.

The output of each implementation's assembly and double-precision solution is denoted by $\bm{x}_s=\bm{x}_s^{(0)}$ and called the directly computed result. The iterative refinement of section~\ref{sec:refinement_method} gives $\bm{x}_s^{(k)}$ after step $k$ and $\bm{x}_s^\star=\bm{x}_s^{(4)}$ after four steps.

\begin{table}[!htb]\centering\small
\caption{Implementations and sources of assembled finite element matrices. The subscript $s$ identifies the platform, and all implementations impose the same nodal grounding constraint. The implementation label $A$ is unrelated to the Robin boundary matrix $A$.}\label{tab:paths}
\begin{tabularx}{\linewidth}{@{}l Y Y@{}}\toprule
Implementation & Source of assembled matrices & Procedure and purpose\\\midrule
$R$ & Matrices assembled in the EIDORS benchmark & Solve the conventional augmented CEM through the EIDORS forward interface; provide the software benchmark\\
$P_s$ & The platform's own finite element integration and assembly interfaces & Assemble and solve the Robin--transadmittance CEM to test cross-platform implementation\\
$A$ & $H_g,C_g,D$ extracted from the matrices used by $R$ & Solve the Robin--transadmittance system in MATLAB; compare solution formulations with $R$\\
$B_s$ & Benchmark matrices numerically identical to those used by $A$ & Apply platform $s$'s solution procedure to compare solutions with fixed assembled coefficients\\
$C_s$ & Platform-computed local integrals and their accumulated contributions & Retain local element and electrode-edge matrices; compare with $B_s$ to assess the effects of assembly changes\\
\bottomrule
\end{tabularx}
\end{table}

\subsection{Discrete inputs for the two test cases}\label{sec:meshes}
All numerical test cases are dimensionless, including both disc models, the parameter variations and the rectangular analytical check. Coordinates and lengths, conductivity $\sigma$, contact impedance $z_\ell$, prescribed currents and computed potentials are expressed in the same nondimensional model convention; no physical length, current or voltage unit is assigned. All absolute errors, residuals, tables and figure values below and in the supplementary material use this convention. Relative errors and condition numbers are dimensionless by definition.

\noindent\textbf{2692-node cross-platform case.}
This normalised model occupies the unit disc, whose boundary is approximated by 256 straight chords. The mesh has 2692 nodes and 5126 P1 triangles. Its 16 electrodes are numbered anticlockwise, with their centres spaced at angular intervals of $22.5^\circ$; electrodes and inter-electrode gaps both have a nominal angular width of $11.25^\circ$, so each electrode spans eight boundary chords. The background conductivity is $\sigma_{\mathrm{bg}}=1$, and a circular inclusion of radius $0.22$ and conductivity $\sigma_{\mathrm{inc}}=2$, centred at $(0.30,0.15)$, has an interface represented by 64 straight segments. All contact impedances are $z_\ell=0.125$. The mesh was generated with the Frontal-Delaunay algorithm in Gmsh 4.14.0 and has a minimum angle of $23.60^\circ$. All implementations use the same nodal coordinates, element connectivity, material labels and electrode labels. Figure~\ref{fig:circle16} shows the mesh and the nodal potentials computed by EIDORS.

\begin{figure}[!htbp]\centering
\includegraphics[width=\linewidth]{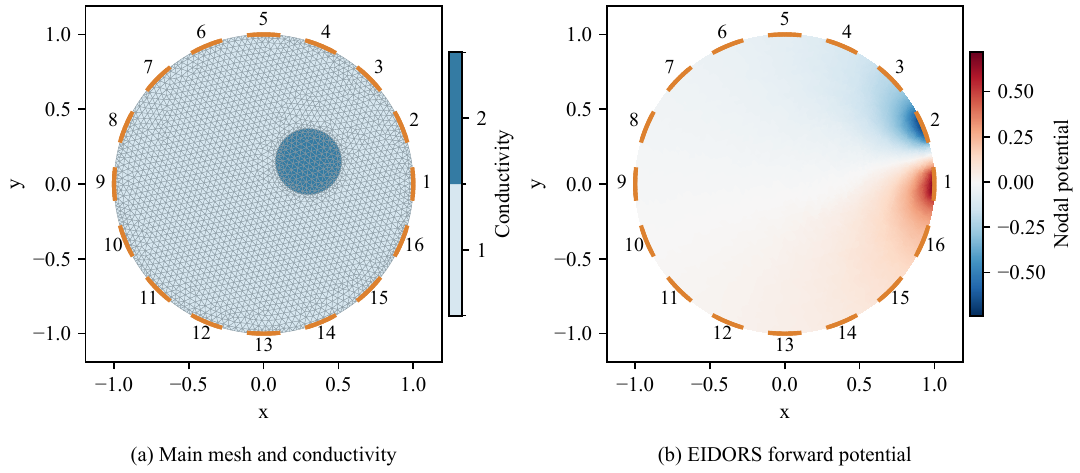}
\caption{2692-node cross-platform case. (a) Common mesh and elementwise conductivity. (b) Directly computed EIDORS nodal potentials for the adjacent stimulation $\bm{e}_1-\bm{e}_2$, displayed using P1 interpolation. Orange segments mark the electrode chords.}\label{fig:circle16}
\end{figure}

Because platforms may reorder their internal indices, the imported coordinates and degree-of-freedom mappings are matched to the common ordering, and all assembled matrices are compared in that ordering. The potential at node 2692 is fixed to zero. With $\bm{e}_i$ denoting the electrode vector whose $i$th component is 1 and whose other components are zero, the current inputs are $\bm{e}_1-\bm{e}_2$ and $\bm{e}_i-\bm{e}_{16}$ ($i=1,\ldots,15$), giving 16 columns of rank 15. The last 15 columns form a basis of the current-conserving subspace, and $\bm{e}_1-\bm{e}_2$ is included as an example of adjacent stimulation. Electrode 16 is the common return electrode of these columns rather than a potential reference, so $U_{16}$ is not set to zero. With the zero ground row restored, each full output contains $(2692+16)\times16=43328$ potential values.

\noindent\textbf{2535-node exact discrete benchmark case.}\label{sec:rational_setup}
This mesh has 2535 nodes, 4940 P1 triangles and 16 electrodes on a disc-like polygon with 128 straight boundary segments, four per electrode. Nodal coordinates have the rational form $p/65536$ and are exactly representable in double precision, and the electrode edges are constructed from integer Pythagorean triples, so that all 64 electrode-edge lengths are rational; interior edge lengths need not be rational. The boundary nodes lie close to, but not exactly on, the unit circle. The minimum mesh angle is $28.12^\circ$. Elements whose centroids lie within a distance of $0.22$ of $(0.30,0.15)$ have conductivity $\sigma_{\mathrm{inc}}=2$ and the remaining elements $\sigma_{\mathrm{bg}}=1$, so the material interface follows element edges. All contact impedances are $z_\ell=1/8$, the current patterns are those of the 2692-node case, and node 2535, at the origin, is grounded. Figure~\ref{fig:rational_mesh} shows the mesh and the exact discrete reference potentials.

\begin{figure}[!htbp]\centering
\includegraphics[width=\linewidth]{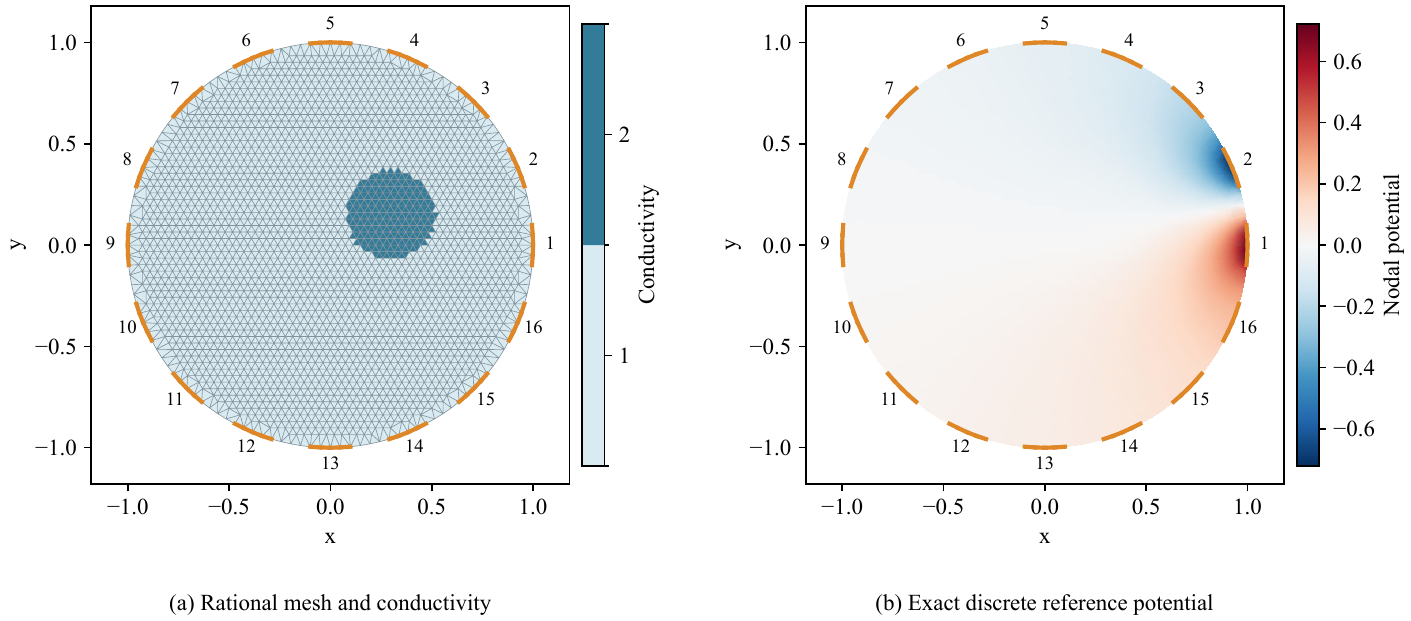}
\caption{2535-node exact discrete benchmark case. (a) Mesh, electrodes and elementwise conductivity. (b) Exact discrete reference nodal potentials for the adjacent stimulation $\bm{e}_1-\bm{e}_2$, displayed using P1 interpolation. Orange segments mark the electrode edges.}\label{fig:rational_mesh}
\end{figure}

The exact discrete reference is the solution of the discrete CEM obtained by evaluating and assembling the finite element integrals and solving the resulting systems entirely in rational arithmetic. It therefore separates numerical errors in assembly and solution from discretisation error; a second independent exact assembly gives the same matrices and solution, with zero residual in the full discrete equations. Supplementary section~9 describes the construction, exact checks and correctly rounded double-precision representations of all $(2535+16)\times16=40816$ reference values.

\subsection{Finite element implementations and shared-matrix comparisons}\label{sec:implementations}
The EIDORS benchmark uses MATLAB R2023b Update 4 and EIDORS 3.12 with the first-order forward solver and system-matrix assembly. The forward solution is requested for all nodal and CEM electrode degrees of freedom, so the returned array contains the $N$ nodal potentials followed by the $L$ electrode potentials; with an $L\times L$ identity measurement matrix, the measurements reproduce $\bm{U}$ after reordering. The benchmark $R$ is this interface output (supplementary section~6.1).

The five platforms are DOLFINx, FreeFEM, GetFEM, MFEM and NGSolve \citep{baratta2025dolfinx,hecht2012freefem,renard2020getfem,anderson2021mfem,schoeberl2014ngsolve}. Software versions and solver configurations are reported in supplementary section~6.1. In $P_s$, each platform evaluates the domain diffusion term and the Robin terms over entire electrodes with its own integration routines and accumulates the local contributions according to mesh connectivity to form $H$. Column $\ell$ of $C$ is the load vector for unit potential on electrode $\ell$, and $D_{\ell\ell}$ is the integral of $z_\ell^{-1}$ over that electrode, which MFEM obtains by applying its boundary linear form to the all-ones vector. We refer to this procedure as platform-native assembly, in contrast to importing the benchmark matrices. Supplementary section~6.2 gives the boundary terms, quadrature rules and interfaces used on each platform.

In the shared-matrix comparisons, $H_g$, $C_g$ and $D$ are taken entry by entry from the finite element matrix used by EIDORS, so that the implementations compared share the stored double-precision coefficients and not merely the mesh and parameters; in particular, $H$ is the exported EIDORS block rather than a separately computed $K+A$. The augmented formulation solves the full $F_g$ system, whereas the Robin--transadmittance formulation follows equations~\eqref{eq:responses}--\eqref{eq:robin}. Implementation $A$ is our MATLAB Robin--transadmittance solver based on LU factorisation. For multiple right-hand sides, the EIDORS linear solver first computes a QR factorisation of the right-hand-side matrix, solves the system with the orthonormal factor as the right-hand side and multiplies the result by the triangular factor, so its operation sequence differs from that of $A$.

Within each platform, $P_s$, $B_s$ and $C_s$ use the same linear algebra interfaces, so that only the source of the assembled matrices changes between them. NGSolve uses its own matrix-inversion and matrix-product operations, whereas DOLFINx, FreeFEM, GetFEM and MFEM solve the response equations with PETSc dense LAPACK LU, UMFPACK, MUMPS and LAPACK-enabled dense matrix inversion, respectively; on these four platforms, NumPy forms the electrode transadmittance matrix and recovers the nodal potentials. All direct solutions use real double-precision (binary64) arithmetic in a single-threaded configuration. In both 16-electrode cases, every implementation was run twice from the same fixed inputs, and the two runs agreed bitwise (supplementary section~6.3).

The local-integral-recording implementation $C_s$ retains the local matrices of every element and electrode edge of the 2692-node mesh, and part of the accumulation is carried out in a common order, so that differences in local integration can be separated from those introduced by subsequent accumulation (supplementary section~4). Because $C_s$ and $P_s$ use different assembly procedures, their potential differences combine the assembly changes with the resulting changes in the floating-point solution (supplementary section~7). Supplementary section~3 decomposes the differences between $C_s$ and $R$ componentwise.

\subsection{Reference solutions, system residuals and error measures}\label{sec:metrics}
The full potential output is $\bm{x}_s=[\bm{u}_s^{\mathsf T},\bm{U}_s^{\mathsf T}]^{\mathsf T}$, with the ground row restored to exact zero. Deleting the ground degree of freedom gives $\bm{x}_{g,s}=[\bm{v}_s^{\mathsf T},\bm{U}_s^{\mathsf T}]^{\mathsf T}$. In the error expressions below, $s$ identifies the implementation under evaluation: $R$, $A$, $B_s$, $C_s$ or $P_s$. We denote a stored floating-point assembled matrix by $\widehat F_{g,s}$ to distinguish it from an exactly assembled matrix. Three solutions are relevant: the exact discrete reference $\bm{x}_{\mathbb Q}$, obtained by exact assembly and solution from fixed inputs; the exact solution $\widetilde{\bm{x}}_s$ of the system whose coefficients are the stored entries of $\widehat F_{g,s}$ treated as exact; and the approximate solution $\bm{x}_s^{(k)}$ computed directly or after finitely many refinement steps. Specifically,
\begin{equation}
F_{g,\mathbb Q}\bm{x}_{g,\mathbb Q}=\bm{b},\qquad
\widehat F_{g,s}\widetilde{\bm{x}}_{g,s}=\bm{b},\qquad
\bm{b}=[0^{\mathsf T},\bm{I}^{\mathsf T}]^{\mathsf T}.
\label{eq:reference_systems}
\end{equation}
The exact solution $\widetilde{\bm{x}}_s$ of the stored assembled system defines the two terms in the following error decomposition:
\begin{equation}
\bm{x}_s^{(k)}-\bm{x}_{\mathbb Q}
=\underbrace{\bm{x}_s^{(k)}-\widetilde{\bm{x}}_s}_{\substack{\text{solution error for the}\\\text{fixed assembled system}}}
+\underbrace{\widetilde{\bm{x}}_s-\bm{x}_{\mathbb Q}}_{\substack{\text{assembly-induced}\\\text{solution shift}}}.
\label{eq:exact_error_split}
\end{equation}
Here $\widehat F_{g,s}$ is assumed to be invertible. The first term is the solution error for the fixed assembled system; the second, the assembly-induced solution shift, is unaffected by refinement. The two terms may reinforce or partially cancel each other, so reducing the first need not reduce their sum. Equation~\eqref{eq:exact_error_split} concerns numerical error within one discrete problem and does not include the discretisation error relative to the continuous CEM.

The grounded augmented-system residual of each implementation is evaluated with its own assembled matrix and includes both the nodal and the electrode equations, but not the deleted ground row:
\begin{equation}
\bm{r}_s^{(k)}=\bm{b}-\widehat F_{g,s}\bm{x}_{g,s}^{(k)},\qquad
\rho_s^{(k)}=\|\bm{r}_s^{(k)}\|_{\max},
\label{eq:full_residual}
\end{equation}
where, for multiple stimulation columns, $\|V\|_{\max}=\max_{i,j}|V_{ij}|$ is the maximum absolute entry over all columns and relevant degrees of freedom. The residual measures how well the assembled equations are satisfied. For a fixed assembled system, the forward error is $\bm{x}_{g,s}^{(k)}-\widetilde{\bm{x}}_{g,s}=-\widehat F_{g,s}^{-1}\bm{r}_s^{(k)}$, so even a decreasing residual need not give a monotonically decreasing forward error.

For each stimulation column, the normalised backward error is calculated as
\begin{equation}
\eta_{\infty,s}
=\frac{\|\bm{b}-\widehat F_{g,s}\bm{x}_{g,s}\|_\infty}
{\|\widehat F_{g,s}\|_\infty\|\bm{x}_{g,s}\|_\infty+\|\bm{b}\|_\infty},
\label{eq:backward_error}
\end{equation}
and we report the maximum over all stimulation columns. The numerator is evaluated in exact rational arithmetic, and the denominator in double precision. For $R$, $A$ and $B_s$, the backward error is evaluated with the benchmark assembled system, and for $C_s$ and $P_s$ with their own assembled systems. For $\bm{I}_{\mathrm{calc}}=D\bm{U}-C_g^{\mathsf T}\bm{v}$, the discrepancy $\bm{I}-\bm{I}_{\mathrm{calc}}$ is exactly the electrode-equation part of the residual~\eqref{eq:full_residual} and therefore does not provide an independent check of the boundary-flux integration.

For the 2692-node case, we report maximum absolute and relative differences from EIDORS; the relative electrode-potential difference is $e_U=\|\bm{U}_{P_s}-\bm{U}_R\|_F/\|\bm{U}_R\|_F$, and the relative nodal-potential difference is defined analogously. For the 2535-node case, we evaluate errors relative to the exact discrete reference, hereafter termed reference errors:
\begin{equation}
E_{U,s}^{(k)}=\|\bm{U}_s^{(k)}-\bm{U}_{\mathbb Q}\|_{\max},\qquad
E_{u,s}^{(k)}=\|\bm{u}_s^{(k)}-\bm{u}_{\mathbb Q}\|_{\max}.
\label{eq:reference_errors}
\end{equation}
These errors are computed exactly from the unrounded reference. To separate voltage-difference errors from a common potential shift, we also evaluate $E_{\Delta U,s}^{(k)}=\|T_{16}(\bm{U}_s^{(k)}-\bm{U}_{\mathbb Q})\|_{\max}$, where $(T_{16}\bm{U})_i=U_i-U_{16}$, $i=1,\ldots,15$. Because $T_{16}\bm1=0$, this measure is invariant under a common potential shift. Adjacent-electrode voltage-difference errors are evaluated analogously.

\subsection{Iterative refinement with fixed assembled matrices}\label{sec:refinement_method}
Iterative refinement is used here as a diagnostic of numerical error. All stored finite element matrix entries are kept fixed, and neither the mesh nor the integrals are recomputed. For a fixed system $F\bm{y}=\bm{b}$, the iteration is
\begin{equation}
\bm{r}^{(k)}=\bm{b}-F\bm{y}^{(k)},\qquad F\delta \bm{y}^{(k)}\simeq \bm{r}^{(k)},\qquad
\bm{y}^{(k+1)}=\bm{y}^{(k)}+\delta \bm{y}^{(k)}.
\label{eq:refine}
\end{equation}
The stored double-precision coefficients and the initial approximation are converted exactly to rational numbers, and residuals and solution updates are evaluated in rational arithmetic. The correction equation is solved in double precision: the residual is rounded to binary64, the correction is computed with a SciPy SuperLU factorisation and converted exactly to rational form, and the approximate solution is then updated. Higher precision thus enters the residual evaluation and the updates but not the matrix assembly, and convergence depends on the conditioning of the system and on this precision configuration \citep{carson2018refinement}.

For $R$, refinement starts from the EIDORS output and is applied to the grounded augmented system. For each Robin implementation, it starts from the computed response matrix $Z$ and is applied to $H_gZ=C_g$; $M_g$ is then formed, the electrode system solved and the potentials recovered in rational arithmetic. Each implementation uses its own SuperLU factorisation. In the 2692-node case, four refinement steps are applied to $R$, $A$, $B_s$ and $C_s$, whereas the platform-native results $P_s$ are not refined. In the 2535-node case, four steps are applied to $R$, $A$ and $P_s$, and the residuals~\eqref{eq:full_residual} and reference errors~\eqref{eq:reference_errors} are recorded at every step.

The results at $k=0$ are the directly computed potentials of each implementation, and those for $k\geq1$ are produced by this common external procedure. The four-step rational approximations $\bm{x}_s^\star$ are distinguished from their double-precision roundings $\operatorname{RN}_{64}(\bm{x}_s^\star)$. Identical roundings after steps 3 and 4 indicate rounding stability but do not certify that they equal the correctly rounded $\widetilde{\bm{x}}_s$.

In the componentwise interpretation of equation~\eqref{eq:exact_error_split}, $\bm{x}_s-\bm{x}_s^\star$ estimates the solution-error contribution and $\bm{x}_s^\star-\bm{x}_{\mathbb Q}$ the assembly-induced solution shift. The two estimates sum exactly to $\bm{x}_s-\bm{x}_{\mathbb Q}$, and the refinement error remaining after four steps enters them with equal magnitude and opposite signs.

\subsection{Parameter comparisons and analytical checks}\label{sec:additional}
Parameter comparisons use the fixed 2692-node mesh with contact-impedance ratios $z/z_0\in\{0.01,0.1,1,10,100\}$ ($z_0=0.125$), inclusion conductivities $\sigma_{\mathrm{inc}}\in\{0.1,10\}$, and non-uniform contact impedances $z_\ell=z_0[1+0.5\sin(2\pi(\ell-1)/16)]$, giving eight conditions including the baseline. Under each condition, $R$ is compared with $A$ computed from the EIDORS matrices of that condition; at the two extreme impedance ratios, NGSolve's $B_s$ and $C_s$ are also included.

The infinity-norm condition number of $H_g$ is estimated as $\|H_g\|_\infty$ times a one-norm estimate of $H_g^{-\mathsf T}$, obtained with the block one-norm estimator \citep{higham2000norm} (block width 2, seed 0); the result is a lower-bound estimate. The condition number $\kappa_\infty(M_g)$ is computed directly from the floating-point matrix formed in each calculation.

The analytical check uses the rectangle $(0,a)\times(0,b)$, with the entire left and right sides serving as electrodes and the top and bottom sides insulated. If the ground node has a horizontal coordinate $x_g$ and the total currents on the left and right electrodes are $+I$ and $-I$, respectively, then
\begin{equation}
\begin{aligned}
u(x)&=\frac{I(x_g-x)}{\sigma b},\\
U_1&=\frac{Ix_g}{\sigma b}+\frac{z_1I}{b},\qquad
U_2=\frac{I(x_g-a)}{\sigma b}-\frac{z_2I}{b}.
\end{aligned}\label{eq:analytic}
\end{equation}
We use $a=2$, $b=1$, $\sigma=1.5$, $z_1=0.125$, $z_2=0.25$ and $I=1$ on a mesh of 315 nodes and 560 P1 triangles. Because this solution is affine in $x$, it belongs to the P1 space, and the exact discrete solution coincides with equation~\eqref{eq:analytic} at the nodes and electrodes. The analytical values are evaluated in rational arithmetic at the stored nodal coordinates and rounded to double precision, and the directly computed results of $R$ and $A$ are compared with them to check the current signs, contact-impedance terms and potential reference.

\section{Results}\label{sec:results}
\subsection{Results from platform-native assembly and solution}\label{sec:primary_results}
All five platforms assembled the 2692-node model through their native Robin boundary interfaces and solved it with the Robin--transadmittance procedure. The maximum absolute electrode-potential differences from the EIDORS output $R$ ranged from $1.732\times10^{-14}$ to $4.641\times10^{-13}$, and the largest normalised backward error was $3.508\times10^{-16}$ (table~\ref{tab:primary}).

\begin{table}[!htbp]\centering\small
\caption{Platform-native Robin--transadmittance implementations compared with the EIDORS conventional augmented CEM for the 2692-node case (directly computed results). Potential differences are taken over all 16 stimulation columns, and backward errors are evaluated with each platform's own assembled equations.}\label{tab:primary}
\begin{tabular*}{\textwidth}{@{\extracolsep{\fill}}llll@{}}\toprule
Platform & $\|\bm{U}_{P_s}-\bm{U}_R\|_{\max}$ & $e_U$ & $\eta_{\infty,P_s}$\\\midrule
DOLFINx & $3.531\times10^{-14}$ & $2.539\times10^{-14}$ & $1.301\times10^{-16}$\\
FreeFEM & $1.910\times10^{-14}$ & $9.165\times10^{-15}$ & $7.402\times10^{-17}$\\
GetFEM & $2.842\times10^{-14}$ & $9.164\times10^{-15}$ & $9.162\times10^{-17}$\\
MFEM & $4.641\times10^{-13}$ & $3.069\times10^{-13}$ & $3.508\times10^{-16}$\\
NGSolve & $1.732\times10^{-14}$ & $7.539\times10^{-15}$ & $1.016\times10^{-16}$\\
\bottomrule
\end{tabular*}
\end{table}

These results show that the CEM can be implemented through the existing Robin boundary interfaces of the tested platforms. Supplementary table~S15 extends the comparison to all nodal potentials, and table~S16 compares $P_s$ with the local-integral-recording implementation $C_s$. The differences in table~\ref{tab:primary} combine the effects of assembly and solution; the shared-matrix comparisons below remove the former by using identical assembled matrices.

\subsection{Consistency of solution formulations with shared assembled matrices}\label{sec:shared_results}
With identical matrix entries, ground node and current inputs, the MATLAB Robin implementation $A$ and the EIDORS augmented implementation $R$ differed by at most $1.243\times10^{-14}$ in the electrode potentials of the 2692-node case. Because the coefficients are shared, this difference arises entirely from the floating-point operations of the two solution procedures, including the treatment of multiple right-hand sides, factorisation, elimination and potential recovery.

After four refinement steps applied separately to $A$ and $R$, the two results differed by at most $2.211\times10^{-69}$, and their double-precision roundings agreed bitwise for all 43328 potential values. The rounded results of the five $B_s$ implementations, which also used the benchmark matrices, were identical to these values, and all roundings were unchanged between the last two steps. Supplementary section~8 gives the corresponding response-equation residuals.

The $C_s$ implementations, whose assembled matrices differ from the benchmark matrices, still differed from $B_s$ after refinement, with maximum absolute electrode-potential differences from $7.011\times10^{-15}$ to $1.508\times10^{-14}$ across platforms (supplementary table~S14). Supplementary sections~3, 4 and~7 analyse these differences componentwise and separate the contributions of local integration and matrix accumulation.

These shared-matrix results agree with the algebraic equivalence of section~\ref{sec:equivalence}. Shared matrices are not, however, exactly assembled matrices; the exact discrete reference below evaluates all implementations, including EIDORS, against the same discrete problem.

\subsection{Accuracy relative to the exact discrete reference}\label{sec:rational_results}
The two independent exact assemblies of the 2535-node case gave identical matrices, and the 40816 exact reference values satisfied the full augmented system with zero residual. Table~\ref{tab:rational} lists the maximum absolute errors of all implementations relative to $\bm{x}_{\mathbb Q}$. For the directly computed results, the maximum electrode-potential errors ranged from $3.371\times10^{-14}$ to $1.087\times10^{-13}$. The errors in the voltages relative to electrode 16, $E_{\Delta U,s}^{(0)}$, were of similar magnitude, from $2.401\times10^{-14}$ to $1.708\times10^{-13}$, whereas the columnwise mean electrode-potential error, removed when both outputs are expressed in the zero-mean reference of section~\ref{sec:grounded_form}, did not exceed $4.579\times10^{-14}$ in magnitude (supplementary table~S28). The errors therefore cannot be attributed entirely to a common potential shift.

\begin{table}[!htbp]\centering\small
\caption{Maximum absolute errors relative to the exact discrete reference for the 2535-node case. Here, $k=0$ denotes directly computed results, and $k=4$ denotes rational approximations after four refinement steps with each implementation's assembled matrices fixed. Errors span all 16 stimulation columns and the relevant nodal or electrode degrees of freedom.}\label{tab:rational}
\begin{tabular*}{\textwidth}{@{\extracolsep{\fill}}llll@{}}\toprule
Implementation & $E_{u,s}^{(0)}$ & $E_{U,s}^{(0)}$ & $E_{U,s}^{(4)}$\\\midrule
EIDORS augmented ($R$) & $7.875\times10^{-14}$ & $7.809\times10^{-14}$ & $1.086\times10^{-13}$\\
MATLAB Robin ($A$) & $1.094\times10^{-13}$ & $1.087\times10^{-13}$ & $1.086\times10^{-13}$\\
DOLFINx ($P_s$) & $3.380\times10^{-14}$ & $3.371\times10^{-14}$ & $3.092\times10^{-14}$\\
FreeFEM ($P_s$) & $5.456\times10^{-14}$ & $5.402\times10^{-14}$ & $5.585\times10^{-14}$\\
GetFEM ($P_s$) & $4.523\times10^{-14}$ & $4.558\times10^{-14}$ & $8.072\times10^{-14}$\\
MFEM ($P_s$) & $1.045\times10^{-13}$ & $1.014\times10^{-13}$ & $2.994\times10^{-14}$\\
NGSolve ($P_s$) & $4.893\times10^{-14}$ & $5.323\times10^{-14}$ & $4.427\times10^{-14}$\\
\bottomrule
\end{tabular*}
\end{table}

The stored finite element matrices of every implementation differed from the exactly assembled matrices, by up to $2.691\times10^{-14}$ in $H_g$ (supplementary table~S20). Reducing the solution error for an assembled system therefore need not bring the approximation closer to $\bm{x}_{\mathbb Q}$: after four refinement steps, the maximum electrode-potential errors had increased for EIDORS, FreeFEM and GetFEM and decreased for the other implementations (table~\ref{tab:rational}).

\subsection{Residuals and reference errors during refinement}\label{sec:refinement_results}
Figure~\ref{fig:refinement} compares the grounded augmented-system residuals with the electrode-potential errors relative to $\bm{x}_{\mathbb Q}$. After four refinement steps, the residuals had fallen to between $3.805\times10^{-73}$ and $3.224\times10^{-70}$, whereas the reference errors remained between $2.994\times10^{-14}$ and $1.086\times10^{-13}$; from the first step onwards, the reference errors were unchanged to four significant figures. The residual measures how well the stored equations are satisfied, whereas the reference error measures the distance to the exact discrete solution. Voltage-difference errors also increased for some implementations and decreased for others, not always in the same direction as the electrode-potential errors (supplementary table~S28).

\begin{figure}[!htbp]\centering
\includegraphics[width=\linewidth]{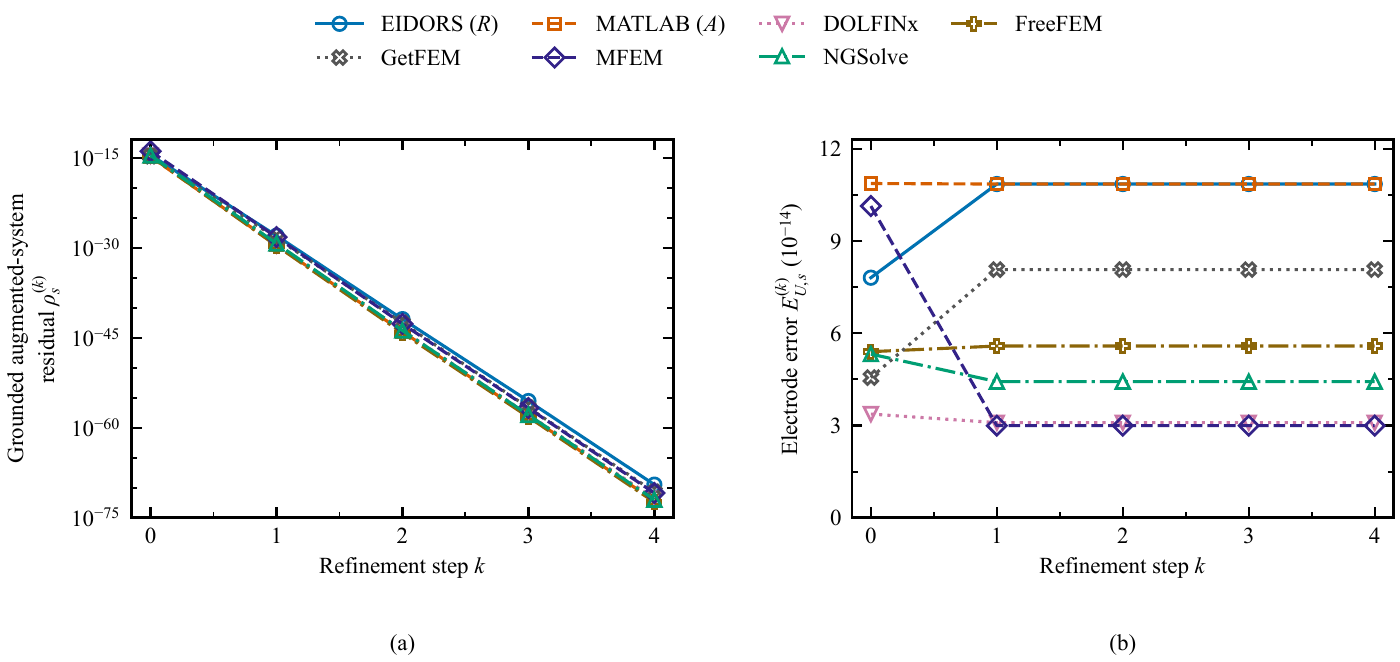}
\caption{Iterative refinement with fixed assembled matrices in the 2535-node case. (a) Grounded augmented-system residual $\rho_s^{(k)}$, evaluated with each implementation's own assembled matrix and excluding the ground row. (b) Maximum absolute electrode-potential error $E_{U,s}^{(k)}$ relative to the exact discrete reference, in units of $10^{-14}$. Both panels evaluate the same potential approximation at each step; $k=0$ denotes the directly computed results, and later steps follow section~\ref{sec:refinement_method}. The curves for $R$ and $A$ in (b) overlap from $k=1$ onwards.}\label{fig:refinement}
\end{figure}

The maximum electrode-potential error increased from $7.809\times10^{-14}$ to $1.086\times10^{-13}$ for EIDORS but decreased from $1.014\times10^{-13}$ to $2.994\times10^{-14}$ for MFEM. According to equation~\eqref{eq:exact_error_split}, the outcome depends on how the solution error and the assembly-induced solution shift combine. At electrode 8 in stimulation column 9, where the directly computed EIDORS result had its largest error, the estimated contributions are
\begin{equation}
\underbrace{-7.809\times10^{-14}}_{\substack{\text{total error of the}\\\text{directly computed result}}}
\approx
\underbrace{+3.048\times10^{-14}}_{\substack{\text{estimated solution-error}\\\text{contribution}}}
+\underbrace{(-1.086\times10^{-13})}_{\substack{\text{estimated assembly}\\\text{contribution}}}.
\label{eq:eidors_cancellation}
\end{equation}
The positive estimated solution-error contribution partly cancels the negative estimated assembly contribution; refinement largely removes the former, and the error at this component increases. For NGSolve, the corresponding component, at electrode 9 in stimulation column 10, gives
\begin{equation}
\underbrace{+5.323\times10^{-14}}_{\substack{\text{total error of the}\\\text{directly computed result}}}
\approx
\underbrace{+8.962\times10^{-15}}_{\substack{\text{estimated solution-error}\\\text{contribution}}}
+\underbrace{(+4.427\times10^{-14})}_{\substack{\text{estimated assembly}\\\text{contribution}}}.
\label{eq:ngsolve_alignment}
\end{equation}
Here the two estimated contributions have the same sign, and refinement reduces the error at this component. Supplementary table~S25 lists these signed estimates for all implementations.

In the same case, $R$ and $A$, which share the EIDORS-assembled matrices, differed by at most $3.616\times10^{-69}$ after four refinement steps, and their 40816 values rounded to binary64 agreed bitwise; yet both retained a maximum electrode-potential error of about $1.086\times10^{-13}$. Agreement between solution formulations and accuracy relative to the exact discrete reference are thus separate properties. The very small residuals after refinement refer to the rational approximations rather than to their double-precision roundings.

\subsection{Parameter comparisons and analytical checks}\label{sec:additional_results}
Across the eight parameter conditions, the maximum absolute electrode-potential difference between $R$ and $A$ did not exceed $1.263\times10^{-13}$, and the largest normalised backward error of the Robin solution was $1.542\times10^{-16}$ (supplementary tables~S10--S12). The largest difference occurred at the smallest contact impedance, $z/z_0=0.01$. Because the contact impedance affects the potential magnitudes, the response calculations and the electrode system simultaneously, the condition number of $H_g$ alone does not account for these differences.

For the rectangular domain, whose analytical voltage is $U_1-U_2=41/24$, the maximum absolute nodal-potential differences from the rounded analytical reference were $6.661\times10^{-14}$ for $R$ and $7.994\times10^{-15}$ for $A$, and the electrode-potential differences were $6.550\times10^{-14}$ and $7.550\times10^{-15}$, respectively. These results support the correct implementation of the current signs, contact-impedance terms and potential reference in this test case.

\section{Discussion}\label{sec:discussion}
The Robin--transadmittance method separates total electrode-current constraints from nodal finite element solves. A general-purpose finite element package solves Robin subproblems with prescribed electrode potentials, and the transadmittance system imposes the total electrode-current constraints. Electrode potentials remain unknowns of the full CEM problem but need not be included in the package's nodal system. The method therefore preserves the CEM contact-impedance condition without adding electrode degrees of freedom to the finite element subproblem. The implementations on five platforms show that the approach does not depend on a particular software package.

The CEM is easily implemented in a custom finite element code by adding one unknown potential per electrode to the nodal system, as in EIDORS \citep{vauhkonen1999three,polydorides2002matlab,adler2006eidors}. There are nevertheless good reasons to use general-purpose finite element software. Such packages are typically well maintained and widely tested and, depending on the package, offer flexible polynomial-order refinement, adaptive refinement near electrode edges, coupling with boundary elements and, where needed, infinite elements or perfectly matched layers. Solver libraries such as PETSc and hypre provide parallel solvers and preconditioners \citep{balay1997petsc,falgout2002hypre}, and some packages run on GPUs \citep{anderson2021mfem}. Python interfaces also link forward models to reconstruction and machine-learning workflows; PyEIDORS, for example, combines FEniCS-based CEM forward modelling with PyTorch-enabled inverse solvers \citep{chen2026pyeidors}.

The result in theorem~\ref{thm:equivalence} establishes the equivalence before choosing a potential reference: the same blocks and an invertible Robin matrix suffice. In turn, proposition~\ref{prop:grounding} connects this result to the nodal implementation used for the EIDORS comparisons, and corollary~\ref{cor:gauge} explains why exact voltage differences are independent of the reference. When the blocks are identical, the augmented and Robin--transadmittance formulations differ only in their solution procedures, and this difference is what the shared-matrix comparisons measure. Close or even bitwise agreement, however, concerns only the implementations being compared. With common current inputs and grounding, implementations that share assembled coefficients have the same assembly-induced solution shift, although their finite-precision solution errors may differ. The $R$--$A$ comparison of section~\ref{sec:refinement_results} illustrates this distinction: their directly computed errors differed, whereas their refined results, which agreed bitwise after rounding to binary64, retained the same maximum electrode-potential error of about $1.086\times10^{-13}$. Inter-software comparisons therefore need to take into account where the assembled matrices come from.

The rational-arithmetic case clarifies what refinement with fixed assembled matrices can achieve. Refinement drives the approximation towards the solution of the stored equations, not towards that of the exactly assembled problem. Where the solution error and the assembly-induced shift have opposite signs, the directly computed result can be closer to the exact discrete reference than a highly refined one, as for EIDORS, FreeFEM and GetFEM in table~\ref{tab:rational}. Refinement served in this study as a diagnostic; it is not part of the Robin--transadmittance method and is not needed in routine use.

Finite-precision comparisons also depend on how the potential reference is imposed. Let $\bm{w}$ contain all nodal and electrode potentials, and let $\bm{q}=[\bm1_N^{\mathsf T},\bm1_L^{\mathsf T}]^{\mathsf T}$. Exact assembly gives $F_0\bm{q}=0$, but a stored matrix may have $\widehat F_0\bm{q}\ne0$, in which case $\widehat F_0(\bm{w}-c\bm{q})=\widehat F_0\bm{w}-c\widehat F_0\bm{q}$. Subtracting the mean electrode potential from a grounded output therefore need not reproduce bitwise the solution of a separately solved system with a zero-mean constraint (section~\ref{sec:grounded_form}). Using the same ground node in all implementations prevents such differences in the potential reference from being confounded with differences between formulations.

For fixed geometry, conductivity and contact impedances, the response matrix and the factorisations can be reused for new current inputs. The implementation presented here first computes the full set of electrode responses, irrespective of how many current patterns are needed. With $L\ll N$, the main setup cost is generally in constructing these responses, including the large Robin solves, rather than in solving the small electrode system. Because the augmented formulation can likewise reuse its factorisation, the stepwise procedure does not by itself confer a speed advantage. When only a few current patterns are required, as in some applications other than full EIT or in three-dimensional problems where the applied currents need not span the space of possible currents, computing all responses can be more expensive than solving the augmented system for those patterns. A dedicated CEM implementation may therefore be preferable for a fast inverse solver. The practical benefit of the Robin--transadmittance procedure is that no CEM-specific functionality is required from the finite element package. Portability does, however, require consistent finite element spaces, material and electrode representations, boundary integration and current sign conventions, and a common mesh and numbering do not by themselves give bitwise-identical results when integration, accumulation or solution sequences differ.

\Needspace{3\baselineskip}
The numerical validation uses two-dimensional, conforming P1 elements to enable controlled comparisons with the first-order EIDORS discretisation and an exact rational-arithmetic reference. The block-elimination argument is not specific to this choice: it applies to other conforming finite element discretisations, including higher-order and three-dimensional models, provided that the same CEM blocks are used and the Robin block is invertible. The explicit P1 energy construction retains its stated scope, and the numerical results concern only the discretisations and software configurations tested. Each exact rational reference quantifies numerical error within its own discrete problem, and the rectangular case checks an affine solution that the P1 space represents exactly; neither addresses convergence to the continuous CEM solution. The results therefore support the discrete equivalence and portability of the formulation rather than a general ranking of solver accuracy or computational efficiency.

\section{Conclusions}\label{sec:conclusions}
We have presented a portable implementation of the current-driven CEM through standard Robin boundary interfaces in general-purpose finite element software. The Robin--transadmittance and conventional augmented CEM procedures have the same solution set for identical blocks and an invertible Robin matrix. This equivalence is independent of the potential reference. The discrete energy argument, equivalence theorem and supporting reference results, including the zero-mean/pseudoinverse construction, are formalised in Lean 4; the numerical comparisons adopt the same nodal reference as EIDORS. All five tested platforms assembled the CEM finite element blocks through their native Robin boundary interfaces, and the solutions were obtained by combining nodal response solves with an electrode transadmittance system. In the 2692-node case, maximum absolute electrode-potential differences from EIDORS ranged from $1.732\times10^{-14}$ to $4.641\times10^{-13}$. In the 2535-node case, the maximum absolute errors of the directly computed electrode potentials relative to the exact discrete reference ranged from $3.371\times10^{-14}$ to $1.087\times10^{-13}$.

The exact reference also shows that smaller assembled-system residuals after refinement need not imply smaller errors relative to the exact discrete solution.

\ifdefempty{\PaperAcknowledgments}{}{\ack{\PaperAcknowledgments}}
\ifdefempty{\PaperAIStatement}{}{\section*{AI use disclosure}\noindent\PaperAIStatement\par}
\ifdefempty{\PaperFunding}{}{\funding{\PaperFunding}}
\ifdefempty{\PaperAuthorContributions}{}{\roles{\PaperAuthorContributions}}
\ifdefempty{\PaperCompetingInterests}{}{\section*{Competing interests}\PaperCompetingInterests}

\data{
The reproduction package supporting this study is openly available on Zenodo at \url{https://doi.org/10.5281/zenodo.23239050}. It contains the fixed discrete inputs, the EIDORS and five-platform implementations, the programs for exact rational assembly and solution, iterative refinement and error analysis, the scripts for exporting numerical summaries and drawing the figures, and the Lean 4 proof sources with pinned dependencies, together with instructions for use. Saved summaries support correspondence checks; full precomputed solution arrays are not included. The reported quantities can be regenerated with the documented software versions.
\ifdefempty{\PaperDataAvailability}{}{\par\PaperDataAvailability}
}

\suppdata{Supplementary material (PDF): additional numerical comparisons, assembly and refinement diagnostics, exact discrete-reference results, voltage-difference errors and the correspondence between paper results and Lean 4 proofs.}

\begingroup
\interlinepenalty=10000
\setlength{\bibsep}{3pt}
\bibliographystyle{abbrvnat}
\bibliography{references}
\endgroup
\end{document}


\articletype{Supplementary material}
\title{Supplementary material}
\PaperAuthorBlock

\noindent\textbf{Related article:} Implementing the EIT complete electrode model through Robin boundary conditions.\par
\noindent\textbf{Description:} Additional numerical comparisons, assembly and refinement diagnostics, exact discrete-reference results, voltage-difference errors, and correspondence between paper results and Lean 4 proofs.\par

\section{Computational implementations and comparisons}
This supplement gives further results for the test cases of the main text. All numerical quantities are dimensionless under the convention stated in main-text section~3.2, including coordinates, conductivities, contact impedances, currents, potentials, residuals and absolute errors. The same convention applies to every table and figure in this supplement. Sections 1--4, 7 and 8 concern the 2692-node comparisons with EIDORS, section 9 the 2535-node exact discrete benchmark, and section 5 the parameter variations and the rectangular analytical case. Section 6 describes the software configurations, and section 10 relates the algebraic propositions of the main text to the Lean 4 theorems.

The implementation labels follow table 2 of the main text. $R$ is the conventional augmented complete electrode model (CEM) result of the EIDORS forward solver; $P_s$ is the platform-native Robin--transadmittance result on platform $s$; $A$ is our MATLAB Robin--transadmittance implementation; $B_s$ applies the solver of platform $s$ to the EIDORS matrices $H_g,C_g,D$, which $A$ also uses; and $C_s$ uses platform-computed element and electrode-edge integrals whose accumulation is recorded. The 2692-node case includes all five groups, whereas the 2535-node case includes $R$, $A$ and $P_s$. The parameter conditions compare $R$ with $A$, with NGSolve's $B_s$ and $C_s$ added at the two extreme impedance ratios, and the rectangular case compares $R$ with $A$ for a single current pattern.

Unless marked with $\star$ or a refinement-step index, tabulated results are directly computed, that is, obtained from each implementation's assembly and double-precision solution. Refinement steps $k=1,\ldots,4$ follow main-text section 3.5, with $\bm{x}_s^\star=\bm{x}_s^{(4)}$ and its double-precision rounding $\operatorname{RN}_{64}(\bm{x}_s^\star)$.

For the 2692-node comparisons, including the parameter conditions of section 5, the relative nodal- and electrode-potential differences are $e_u=\|\bm{u}_s-\bm{u}_R\|_F/\|\bm{u}_R\|_F$ and $e_U=\|\bm{U}_s-\bm{U}_R\|_F/\|\bm{U}_R\|_F$, normalised by the directly computed EIDORS output. Maximum absolute differences are taken over all 16 stimulation columns and the corresponding degrees of freedom, with the ground row restored to zero. Backward errors and electrode-equation residuals are evaluated with the stored grounded matrix $\widehat F_{g,s}$ of each implementation.

\begin{table}[H]\centering\small
\caption{Differences of each implementation from the EIDORS benchmark $R$ in the 2692-node case (directly computed results). Table~\ref{tab:s15} gives the corresponding platform-native results.}\label{tab:s1}
\begin{tabular*}{\textwidth}{@{\extracolsep{\fill}}lllll@{}}
\toprule
Implementation & \makecell[l]{Maximum absolute\\nodal difference} & $e_u$ & \makecell[l]{Maximum absolute\\electrode difference} & $e_U$\\\midrule
$R$ & $0$ & $0$ & $0$ & $0$\\
$A$ & $1.221\times10^{-14}$ & $1.012\times10^{-14}$ & $1.243\times10^{-14}$ & $5.246\times10^{-15}$\\
DOLFINx ($B_s$) & $3.619\times10^{-14}$ & $5.356\times10^{-14}$ & $3.642\times10^{-14}$ & $2.622\times10^{-14}$\\
DOLFINx ($C_s$) & $5.018\times10^{-14}$ & $6.215\times10^{-14}$ & $4.841\times10^{-14}$ & $3.063\times10^{-14}$\\
FreeFEM ($B_s$) & $8.438\times10^{-15}$ & $6.779\times10^{-15}$ & $8.438\times10^{-15}$ & $3.526\times10^{-15}$\\
FreeFEM ($C_s$) & $1.599\times10^{-14}$ & $1.646\times10^{-14}$ & $1.421\times10^{-14}$ & $7.946\times10^{-15}$\\
GetFEM ($B_s$) & $1.110\times10^{-14}$ & $7.783\times10^{-15}$ & $1.044\times10^{-14}$ & $4.057\times10^{-15}$\\
GetFEM ($C_s$) & $3.397\times10^{-14}$ & $2.064\times10^{-14}$ & $2.931\times10^{-14}$ & $1.012\times10^{-14}$\\
MFEM ($B_s$) & $4.647\times10^{-13}$ & $6.300\times10^{-13}$ & $4.605\times10^{-13}$ & $3.070\times10^{-13}$\\
MFEM ($C_s$) & $4.676\times10^{-13}$ & $6.309\times10^{-13}$ & $4.641\times10^{-13}$ & $3.080\times10^{-13}$\\
NGSolve ($B_s$) & $3.508\times10^{-14}$ & $5.179\times10^{-14}$ & $3.553\times10^{-14}$ & $2.533\times10^{-14}$\\
NGSolve ($C_s$) & $2.709\times10^{-14}$ & $2.982\times10^{-14}$ & $2.620\times10^{-14}$ & $1.490\times10^{-14}$\\
\bottomrule\end{tabular*}\end{table}

\section{Equation residuals and matrix condition numbers}
The residuals in tables~\ref{tab:s2} and \ref{tab:s3} are obtained by substituting $\bm{x}_{g,s}=[\bm{v}_s^{\mathsf T},\bm{U}_s^{\mathsf T}]^{\mathsf T}$ into the grounded nodal--electrode system, with the EIDORS matrices for $R$, $A$ and $B_s$ and the stored matrices of each $C_s$ otherwise. Table~\ref{tab:s3} reports, over all 16 stimulation columns,
\[\rho_s=\left\|\begin{bmatrix}0\\\bm{I}\end{bmatrix}-\widehat F_{g,s}\begin{bmatrix}\bm{v}_s\\\bm{U}_s\end{bmatrix}\right\|_{\max}\]
both for the refined rational approximations and for their double-precision roundings. Figure~\ref{fig:shared_response} shows instead the residual of the response equations $H_gZ=C_g$, and the 2535-node residuals of main-text figure 4(a) are listed in table~\ref{tab:s24}.
\begin{table}[H]\centering\small
\caption{Normalised backward errors, electrode-equation residuals and matrix condition numbers for the directly computed results. Residuals are maxima over all 16 stimulation columns. $\widehat\kappa_\infty(H_g)$ is an estimate, and $\kappa_\infty(M_g)$ is computed from the dense matrix formed in each calculation; for $R$, $M_g$ is formed from the EIDORS matrices, although the EIDORS solver itself does not use it.}\label{tab:s2}
\begin{tabular*}{\textwidth}{@{\extracolsep{\fill}}lllll@{}}
\toprule
Implementation & $\eta_\infty$ & \makecell[l]{Electrode-equation\\residual} & $\widehat\kappa_\infty(H_g)$ & $\kappa_\infty(M_g)$\\\midrule
$R$ & $8.923\times10^{-17}$ & $1.771\times10^{-15}$ & $3.754\times10^{3}$ & $1.484\times10^{1}$\\
$A$ & $7.364\times10^{-17}$ & $5.699\times10^{-16}$ & $3.754\times10^{3}$ & $1.484\times10^{1}$\\
DOLFINx ($B_s$) & $1.536\times10^{-16}$ & $6.478\times10^{-16}$ & $3.754\times10^{3}$ & $1.484\times10^{1}$\\
DOLFINx ($C_s$) & $1.265\times10^{-16}$ & $7.141\times10^{-16}$ & $3.754\times10^{3}$ & $1.484\times10^{1}$\\
FreeFEM ($B_s$) & $8.610\times10^{-17}$ & $6.091\times10^{-16}$ & $3.754\times10^{3}$ & $1.484\times10^{1}$\\
FreeFEM ($C_s$) & $7.076\times10^{-17}$ & $6.242\times10^{-16}$ & $3.754\times10^{3}$ & $1.484\times10^{1}$\\
GetFEM ($B_s$) & $1.060\times10^{-16}$ & $7.816\times10^{-16}$ & $3.754\times10^{3}$ & $1.484\times10^{1}$\\
GetFEM ($C_s$) & $1.143\times10^{-16}$ & $7.485\times10^{-16}$ & $3.754\times10^{3}$ & $1.484\times10^{1}$\\
MFEM ($B_s$) & $4.258\times10^{-16}$ & $6.322\times10^{-16}$ & $3.754\times10^{3}$ & $1.484\times10^{1}$\\
MFEM ($C_s$) & $3.366\times10^{-16}$ & $4.581\times10^{-16}$ & $3.754\times10^{3}$ & $1.484\times10^{1}$\\
NGSolve ($B_s$) & $1.173\times10^{-16}$ & $6.042\times10^{-16}$ & $3.754\times10^{3}$ & $1.484\times10^{1}$\\
NGSolve ($C_s$) & $7.618\times10^{-17}$ & $4.746\times10^{-16}$ & $3.754\times10^{3}$ & $1.484\times10^{1}$\\
\bottomrule\end{tabular*}\end{table}

\begin{table}[H]\centering\small
\caption{Grounded augmented-system residuals $\rho_s$ of the refined rational approximations and of their double-precision roundings, evaluated with the same matrices as in table~\ref{tab:s2}.}\label{tab:s3}
\begin{tabular*}{\textwidth}{@{\extracolsep{\fill}}lll@{}}
\toprule
Implementation & $\rho_s(\bm{x}_{g,s}^\star)$ & $\rho_s(\operatorname{RN}_{64}(\bm{x}_{g,s}^\star))$\\\midrule
$R$ & $1.724\times10^{-70}$ & $5.879\times10^{-16}$\\
$A$ & $1.963\times10^{-73}$ & $5.879\times10^{-16}$\\
DOLFINx ($B_s$) & $1.187\times10^{-72}$ & $5.879\times10^{-16}$\\
DOLFINx ($C_s$) & $8.413\times10^{-73}$ & $5.972\times10^{-16}$\\
FreeFEM ($B_s$) & $1.799\times10^{-73}$ & $5.879\times10^{-16}$\\
FreeFEM ($C_s$) & $1.550\times10^{-73}$ & $5.680\times10^{-16}$\\
GetFEM ($B_s$) & $1.945\times10^{-73}$ & $5.879\times10^{-16}$\\
GetFEM ($C_s$) & $6.371\times10^{-73}$ & $5.393\times10^{-16}$\\
MFEM ($B_s$) & $1.208\times10^{-71}$ & $5.879\times10^{-16}$\\
MFEM ($C_s$) & $8.793\times10^{-72}$ & $6.401\times10^{-16}$\\
NGSolve ($B_s$) & $1.064\times10^{-72}$ & $5.879\times10^{-16}$\\
NGSolve ($C_s$) & $6.548\times10^{-73}$ & $5.770\times10^{-16}$\\
\bottomrule\end{tabular*}\end{table}

\section{Componentwise analysis of differences between implementations}
The difference between a local-integral-recording implementation and the EIDORS benchmark can be written componentwise as
\[
\bm{x}_{C_s}-\bm{x}_R=(\bm{x}_{C_s}-\bm{x}_{B_s})+(\bm{x}_{B_s}-\bm{x}_A)+(\bm{x}_A-\bm{x}_R),
\]
where the three terms reflect the assembly source, the platform-specific solution procedure and the solution formulation, respectively. The first term contains both the effect of the changed assembly and the accompanying change in solution error. Refined approximations are used below to estimate these contributions.

For each platform, table~\ref{tab:s4} evaluates the terms at the electrode-potential component $(\ell,j)$, with electrode $\ell$ and stimulation column $j$ numbered from 1, at which $C_s$ differs most from $R$:
\[\bm{U}_{C_s}-\bm{U}_R=(\bm{U}_{C_s}^\star-\bm{U}_{B_s}^\star)+(\bm{U}_{B_s}^\star-\bm{U}_R^\star)+\bm{e}_{C_s}-\bm{e}_R,\]
where $\bm{e}_s=\bm{U}_s-\bm{U}_s^\star$ is the difference between the directly computed and refined results of implementation $s$. The first two terms, labelled `estimated assembly effect' and `shared-matrix solution difference' in table~\ref{tab:s4}, estimate the change in solution caused by the different assembled matrices and the difference remaining after refinement with the shared benchmark matrices. Tables~\ref{tab:s1} and \ref{tab:s14} give the corresponding maxima over all components. Because the tabulated values are rounded, their displayed sums may differ in the last digit.
\begin{table}[H]\centering\footnotesize
\caption{Componentwise decomposition of the difference between each local-integral-recording implementation $C_s$ and the EIDORS benchmark $R$ in the 2692-node case. All terms in a row refer to the same electrode-potential component, and signs give the direction of the deviation from $R$.}\label{tab:s4}
\footnotesize\setlength{\tabcolsep}{3pt}
\begin{tabular*}{\textwidth}{@{\extracolsep{\fill}}lllllll@{}}
\toprule
Platform & $(\ell,j)$ & \makecell[l]{Total\\difference} & \makecell[l]{Estimated\\assembly effect} & \makecell[l]{Shared-matrix\\solution difference} & $e_{C_s}$ & $-e_R$\\\midrule
DOLFINx & (7,8) & $-4.841\times10^{-14}$ & $-5.180\times10^{-15}$ & $-1.187\times10^{-69}$ & $-3.778\times10^{-14}$ & $-5.443\times10^{-15}$\\
FreeFEM & (11,12) & $+1.421\times10^{-14}$ & $+5.693\times10^{-15}$ & $+9.155\times10^{-70}$ & $+4.362\times10^{-15}$ & $+4.156\times10^{-15}$\\
GetFEM & (8,9) & $+2.931\times10^{-14}$ & $+1.508\times10^{-14}$ & $+1.965\times10^{-70}$ & $+1.318\times10^{-14}$ & $+1.046\times10^{-15}$\\
MFEM & (10,11) & $-4.641\times10^{-13}$ & $+6.371\times10^{-15}$ & $+5.090\times10^{-70}$ & $-4.734\times10^{-13}$ & $+2.967\times10^{-15}$\\
NGSolve & (5,6) & $-2.620\times10^{-14}$ & $-1.645\times10^{-16}$ & $-9.496\times10^{-70}$ & $-2.164\times10^{-14}$ & $-4.394\times10^{-15}$\\
\bottomrule\end{tabular*}\end{table}

For MFEM, the total difference of $-4.641\times10^{-13}$ at electrode 10 and stimulation column 11 consists mainly of the solution-error term $e_{C_s}=-4.734\times10^{-13}$, whereas the estimated assembly effect is only $+6.371\times10^{-15}$. This is consistent with the similar difference of MFEM's $B_s$, which uses the benchmark matrices: the larger MFEM differences in table~\ref{tab:s1} arise mainly from its solution procedure.

\section{Local integration and accumulation of finite element matrices}
This section separates local integration from subsequent accumulation in the matrices of $C_s$. For each matrix block, let $G_s$ be the assembled matrix and $L_s$ the exact (rational) sum of its stored local floating-point entries according to the mesh connectivity, with the subscript $R$ denoting EIDORS. Each entry then satisfies $G_s-G_R=(L_s-L_R)+[(G_s-L_s)-(G_R-L_R)]$. The first term collects the differences from geometric mapping, numerical integration and local arithmetic up to the element or electrode-edge matrix returned by the integration routine; the second collects those from accumulation into the global matrices and combination of blocks. The matrices are compared before the ground degree of freedom is deleted, $A$ denotes the Robin boundary matrix, and the EIDORS block $H$ is the exported benchmark matrix. Maxima in different columns of tables~\ref{tab:s5}--\ref{tab:s9} may occur at different entries and therefore do not add.

Let $\mathcal{T}$ denote the field matrix returned by the EIDORS interface. Its columns index all nodal--electrode degrees of freedom, and its rows represent domain-element and electrode-boundary contributions. For the two-dimensional setting with piecewise-constant scalar conductivity considered here, each triangle and each electrode edge contributes two rows. The diagonal weight matrix $W$ contains the element conductivity for domain rows and 1 for electrode-boundary rows. Geometric factors are included in $\mathcal{T}$, with electrode-edge lengths and contact impedances incorporated through the boundary factors. Before grounding, the EIDORS finite element matrix is obtained by symmetrising the weighted product $(\mathcal{T}^{\mathsf T}W)\mathcal{T}$, where $\operatorname{sym}(X)=(X+X^{\mathsf T})/2$.

For triangle $e$, the corresponding two rows and three nodal columns define $\mathcal{T}_e$. Its local domain contribution is $\operatorname{sym}[(\mathcal{T}_e^{\mathsf T}W_e)\mathcal{T}_e]$, where $W_e=\sigma_e I_2$ and $I_2$ is the identity matrix of order two. For electrode edge $b$, the corresponding two rows and the three columns for its two nodes and one electrode define $\mathcal{T}_b$. Local contributions to $A,C,D$ are extracted from $\operatorname{sym}(\mathcal{T}_b^{\mathsf T}\mathcal{T}_b)$, with the nodal--electrode coupling entries negated to obtain $C$. Exact accumulation of the stored local binary64 entries using the common connectivity defines $L_R$, with $L_R^H=L_R^K+L_R^A$. The matrices $K$ and $A$ are formed from the global products of the domain and boundary rows, respectively, and the resulting $H$, $C$ and $D$ agree bitwise with the blocks exported by EIDORS, so the local records reproduce the benchmark assembly.
\begin{table}[H]\centering\small
\caption{Decomposition of assembled finite element matrix differences for DOLFINx. The decomposition identity holds exactly for every entry; each column reports its own maximum absolute value.}\label{tab:s5}
\begin{tabular*}{\textwidth}{@{\extracolsep{\fill}}llll@{}}
\toprule
Block & \makecell[l]{Assembled-matrix\\difference} & \makecell[l]{Local-integration\\term} & \makecell[l]{Subsequent-arithmetic\\term}\\\midrule
$K$ & $5.329\times10^{-15}$ & $4.441\times10^{-15}$ & $2.442\times10^{-15}$\\
$A$ & $1.110\times10^{-16}$ & $8.327\times10^{-17}$ & $2.776\times10^{-17}$\\
$C$ & $1.110\times10^{-16}$ & $8.327\times10^{-17}$ & $4.163\times10^{-17}$\\
$D$ & $8.882\times10^{-16}$ & $4.718\times10^{-16}$ & $5.274\times10^{-16}$\\
$H$ & $5.329\times10^{-15}$ & $4.441\times10^{-15}$ & $2.442\times10^{-15}$\\
\bottomrule\end{tabular*}\end{table}

\begin{table}[H]\centering\small
\caption{Decomposition of assembled finite element matrix differences for FreeFEM. The decomposition identity holds exactly for every entry; each column reports its own maximum absolute value.}\label{tab:s6}
\begin{tabular*}{\textwidth}{@{\extracolsep{\fill}}llll@{}}
\toprule
Block & \makecell[l]{Assembled-matrix\\difference} & \makecell[l]{Local-integration\\term} & \makecell[l]{Subsequent-arithmetic\\term}\\\midrule
$K$ & $2.665\times10^{-15}$ & $2.109\times10^{-15}$ & $2.442\times10^{-15}$\\
$A$ & $1.110\times10^{-16}$ & $8.327\times10^{-17}$ & $2.776\times10^{-17}$\\
$C$ & $8.327\times10^{-17}$ & $8.327\times10^{-17}$ & $4.163\times10^{-17}$\\
$D$ & $8.882\times10^{-16}$ & $4.718\times10^{-16}$ & $5.551\times10^{-16}$\\
$H$ & $2.665\times10^{-15}$ & $2.109\times10^{-15}$ & $2.442\times10^{-15}$\\
\bottomrule\end{tabular*}\end{table}

\begin{table}[H]\centering\small
\caption{Decomposition of assembled finite element matrix differences for GetFEM. The decomposition identity holds exactly for every entry; each column reports its own maximum absolute value.}\label{tab:s7}
\begin{tabular*}{\textwidth}{@{\extracolsep{\fill}}llll@{}}
\toprule
Block & \makecell[l]{Assembled-matrix\\difference} & \makecell[l]{Local-integration\\term} & \makecell[l]{Subsequent-arithmetic\\term}\\\midrule
$K$ & $3.286\times10^{-14}$ & $3.109\times10^{-14}$ & $2.442\times10^{-15}$\\
$A$ & $3.886\times10^{-16}$ & $3.747\times10^{-16}$ & $2.776\times10^{-17}$\\
$C$ & $5.274\times10^{-16}$ & $5.274\times10^{-16}$ & $4.163\times10^{-17}$\\
$D$ & $4.219\times10^{-15}$ & $3.691\times10^{-15}$ & $7.216\times10^{-16}$\\
$H$ & $3.286\times10^{-14}$ & $3.109\times10^{-14}$ & $2.442\times10^{-15}$\\
\bottomrule\end{tabular*}\end{table}

\begin{table}[H]\centering\small
\caption{Decomposition of assembled finite element matrix differences for MFEM. The decomposition identity holds exactly for every entry; each column reports its own maximum absolute value.}\label{tab:s8}
\begin{tabular*}{\textwidth}{@{\extracolsep{\fill}}llll@{}}
\toprule
Block & \makecell[l]{Assembled-matrix\\difference} & \makecell[l]{Local-integration\\term} & \makecell[l]{Subsequent-arithmetic\\term}\\\midrule
$K$ & $2.665\times10^{-15}$ & $1.998\times10^{-15}$ & $2.331\times10^{-15}$\\
$A$ & $8.327\times10^{-17}$ & $6.939\times10^{-17}$ & $2.776\times10^{-17}$\\
$C$ & $8.327\times10^{-17}$ & $6.939\times10^{-17}$ & $4.163\times10^{-17}$\\
$D$ & $8.882\times10^{-16}$ & $4.718\times10^{-16}$ & $5.274\times10^{-16}$\\
$H$ & $2.665\times10^{-15}$ & $1.998\times10^{-15}$ & $2.331\times10^{-15}$\\
\bottomrule\end{tabular*}\end{table}

\begin{table}[H]\centering\small
\caption{Decomposition of assembled finite element matrix differences for NGSolve. The decomposition identity holds exactly for every entry; each column reports its own maximum absolute value.}\label{tab:s9}
\begin{tabular*}{\textwidth}{@{\extracolsep{\fill}}llll@{}}
\toprule
Block & \makecell[l]{Assembled-matrix\\difference} & \makecell[l]{Local-integration\\term} & \makecell[l]{Subsequent-arithmetic\\term}\\\midrule
$K$ & $2.665\times10^{-15}$ & $2.220\times10^{-15}$ & $2.331\times10^{-15}$\\
$A$ & $8.327\times10^{-17}$ & $6.939\times10^{-17}$ & $2.776\times10^{-17}$\\
$C$ & $1.110\times10^{-16}$ & $1.110\times10^{-16}$ & $4.163\times10^{-17}$\\
$D$ & $8.882\times10^{-16}$ & $5.274\times10^{-16}$ & $5.274\times10^{-16}$\\
$H$ & $2.665\times10^{-15}$ & $2.220\times10^{-15}$ & $2.331\times10^{-15}$\\
\bottomrule\end{tabular*}\end{table}

\section{Parameter comparisons and analytical checks}
Tables~\ref{tab:s10}--\ref{tab:s12} use the 2692-node mesh and the 16 current columns of main-text section 3.2. The eight conditions comprise five uniform impedance ratios $z/z_0\in\{0.01,0.1,1,10,100\}$, two inclusion conductivities $\sigma_{\rm inc}\in\{0.1,10\}$ and the non-uniform impedances $z_\ell=z_0[1+0.5\sin(2\pi(\ell-1)/16)]$, with $z_0=0.125$; each changes only the stated parameter from the baseline model. For every condition, $R$ is the EIDORS forward solution and $A$ the MATLAB Robin solution computed from the EIDORS matrices of that condition, and table~\ref{tab:s12} adds NGSolve's $B_s$ and $C_s$ at $z/z_0=0.01$ and $100$. All results are directly computed. Table~\ref{tab:s13} reports the two-electrode rectangular case.

\begin{table}[H]\centering\small
\caption{Parameter comparisons between the EIDORS augmented implementation $R$ and the MATLAB Robin implementation $A$ using the same assembled matrices. Relative differences are normalised by the norm of the EIDORS electrode potentials for the corresponding condition.}\label{tab:s10}
\begin{tabular*}{\textwidth}{@{\extracolsep{\fill}}lllll@{}}
\toprule
Condition & \makecell[l]{Maximum absolute\\electrode difference} & \makecell[l]{Relative electrode-\\potential difference} & $\eta_\infty(A)$ & \makecell[l]{Electrode-equation\\residual}\\\midrule
$\sigma_{\rm inc}=10$ & $5.329\times10^{-15}$ & $1.850\times10^{-15}$ & $1.709\times10^{-17}$ & $7.026\times10^{-16}$\\
$\sigma_{\rm inc}=0.1$ & $3.020\times10^{-14}$ & $2.079\times10^{-14}$ & $1.542\times10^{-16}$ & $5.951\times10^{-16}$\\
$z/z_0=0.01$ & $1.263\times10^{-13}$ & $1.002\times10^{-13}$ & $1.391\times10^{-16}$ & $3.659\times10^{-14}$\\
$z/z_0=0.1$ & $1.510\times10^{-14}$ & $1.042\times10^{-14}$ & $1.033\times10^{-16}$ & $4.237\times10^{-15}$\\
$z/z_0=1$ & $1.243\times10^{-14}$ & $5.246\times10^{-15}$ & $7.364\times10^{-17}$ & $5.699\times10^{-16}$\\
$z/z_0=10$ & $2.398\times10^{-14}$ & $2.285\times10^{-15}$ & $2.165\times10^{-17}$ & $4.431\times10^{-16}$\\
$z/z_0=100$ & $5.684\times10^{-14}$ & $4.595\times10^{-16}$ & $2.938\times10^{-18}$ & $5.035\times10^{-16}$\\
Non-uniform $z_\ell$ & $9.770\times10^{-15}$ & $4.376\times10^{-15}$ & $6.414\times10^{-17}$ & $6.468\times10^{-16}$\\
\bottomrule\end{tabular*}\end{table}

\begin{table}[H]\centering\small
\caption{Matrix condition numbers and maximum absolute ground-row reactions of the EIDORS benchmark solutions for each parameter condition. Condition numbers are computed as in table~\ref{tab:s2}. The reaction is the left-hand side of the deleted ground-node equation evaluated at the benchmark solution, that is, the discrete current required to maintain the grounding constraint (main-text section 2.4).}\label{tab:s11}
\begin{tabular*}{\textwidth}{@{\extracolsep{\fill}}llll@{}}
\toprule
Condition & $\widehat\kappa_\infty(H_g)$ & $\kappa_\infty(M_g)$ & Ground-row reaction\\\midrule
$\sigma_{\rm inc}=10$ & $1.493\times10^{4}$ & $9.008\times10^{0}$ & $1.454\times10^{-14}$\\
$\sigma_{\rm inc}=0.1$ & $3.526\times10^{3}$ & $9.012\times10^{1}$ & $1.063\times10^{-14}$\\
$z/z_0=0.01$ & $3.765\times10^{3}$ & $2.949\times10^{1}$ & $7.185\times10^{-14}$\\
$z/z_0=0.1$ & $2.702\times10^{3}$ & $2.608\times10^{1}$ & $1.225\times10^{-14}$\\
$z/z_0=1$ & $3.754\times10^{3}$ & $1.484\times10^{1}$ & $9.400\times10^{-15}$\\
$z/z_0=10$ & $1.092\times10^{4}$ & $3.769\times10^{0}$ & $1.791\times10^{-14}$\\
$z/z_0=100$ & $2.320\times10^{4}$ & $1.307\times10^{0}$ & $1.027\times10^{-14}$\\
Non-uniform $z_\ell$ & $3.765\times10^{3}$ & $1.924\times10^{1}$ & $1.357\times10^{-14}$\\
\bottomrule\end{tabular*}\end{table}

\begin{table}[H]\centering\small
\caption{NGSolve comparisons at the two endpoint impedance settings. Implementation $B_s$ reuses the EIDORS benchmark matrices, whereas $C_s$ uses its own assembled matrices with local integrals recorded. Both are compared with the directly computed EIDORS result for the corresponding condition.}\label{tab:s12}
\begin{tabular*}{\textwidth}{@{\extracolsep{\fill}}llll@{}}
\toprule
Condition/implementation & \makecell[l]{Maximum absolute\\electrode difference} & \makecell[l]{Relative electrode-\\potential difference} & $\eta_\infty$\\\midrule
$z/z_0=0.01$ / $B_s$ & $1.379\times10^{-13}$ & $9.783\times10^{-14}$ & $1.187\times10^{-16}$\\
$z/z_0=0.01$ / $C_s$ & $1.712\times10^{-13}$ & $1.003\times10^{-13}$ & $1.257\times10^{-16}$\\
$z/z_0=100$ / $B_s$ & $7.105\times10^{-14}$ & $9.059\times10^{-16}$ & $2.822\times10^{-18}$\\
$z/z_0=100$ / $C_s$ & $7.105\times10^{-14}$ & $7.580\times10^{-16}$ & $3.438\times10^{-18}$\\
\bottomrule\end{tabular*}\end{table}

For the rectangular case, the analytical expressions of the main text, with the same nodal grounding, are evaluated exactly at the stored binary64 coordinates and rounded to give $\bar{\bm{u}}_{\rm ana}$ and $\bar{\bm{U}}_{\rm ana}$. The relative differences in table~\ref{tab:s13} are normalised by this analytical reference, $e_u^{\rm ana}=\|\bm{u}_s-\bar{\bm{u}}_{\rm ana}\|_2/\|\bar{\bm{u}}_{\rm ana}\|_2$, and analogously for the electrode potentials.
\begin{table}[H]\centering\small
\caption{Comparison of the EIDORS augmented implementation $R$ and the MATLAB Robin implementation $A$ with the analytical reference for the rectangular domain.}\label{tab:s13}
\begin{tabular*}{\textwidth}{@{\extracolsep{\fill}}lllll@{}}
\toprule
Implementation & \makecell[l]{Maximum absolute\\nodal difference} & $e_u^{\rm ana}$ & \makecell[l]{Maximum absolute\\electrode difference} & $e_U^{\rm ana}$\\\midrule
$R$ & $6.661\times10^{-14}$ & $6.428\times10^{-14}$ & $6.550\times10^{-14}$ & $4.788\times10^{-14}$\\
$A$ & $7.994\times10^{-15}$ & $6.990\times10^{-15}$ & $7.550\times10^{-15}$ & $5.639\times10^{-15}$\\
\bottomrule\end{tabular*}\end{table}

\section{Software configurations and reproducibility}
\subsection{EIDORS benchmark calls and platform interfaces}
The EIDORS model uses the common nodal coordinates, element connectivity and electrode labels, with the potential at node $g=N$ fixed to zero and the identity matrix of order $L$ as the measurement pattern for each stimulation column. The entry point \texttt{fwd\_solve} is called with the first-order solver \path{fwd_solve_1st_order} and the assembly function \path{system_mat_1st_order}, which solve the conventional augmented CEM equations. The \texttt{get\_all\_nodes} option returns all nodal and CEM electrode potentials, the first $N$ rows containing $\bm{u}$ and the following $L$ rows $\bm{U}$.

The full output has dimensions $(N+16)\times16$ in the 16-electrode cases, with row $g$ identically zero, and $317\times1$ in the two-electrode rectangular case; after reordering, the identity-measurement output equals $\bm{U}$. In each 16-electrode case, the two EIDORS runs were made in separate MATLAB processes with cleared caches, and the calls to the assembly and linear-solver functions were recorded. The MATLAB Robin solution $A$ was computed in the same runs from the extracted matrices. Each parameter condition and the rectangular case used a single EIDORS calculation and a single MATLAB Robin calculation.

\noindent\textbf{Software versions and solver configurations.} The EIDORS benchmark uses MATLAB R2023b Update 4 and the first-order forward solver of EIDORS 3.12, and implementation $A$ uses MATLAB LU factorisation.

DOLFINx 0.10.0.post1 uses dense LAPACK LU through PETSc; FreeFEM 4.9 uses the UMFPACK interface of \texttt{sparsesolver}; GetFEM 5.3 uses MUMPS; and MFEM 4.9 uses \texttt{DenseMatrixInverse} with LAPACK enabled. On these four platforms, NumPy forms the electrode transadmittance matrix and recovers the nodal potentials. NGSolve 6.2.2606 uses \texttt{Matrix.I} and the platform's matrix-product operations. Bitwise repeatability refers to these versions and configurations.

\subsection{Platform-native Robin finite element assembly and comparison settings}
In the platform-native implementation $P_s$, each platform receives the domain diffusion weak form, the Robin boundary terms and the loads over entire electrodes, and its native finite element integration and assembly interfaces evaluate the integrals and accumulate the contributions according to mesh connectivity. The imported data are checked against the element and boundary indices, and material and electrode regions follow the physical model. NGSolve, DOLFINx and FreeFEM assemble global boundary weak forms, MFEM calls boundary integrators, and GetFEM calls \path{add_Fourier_Robin_brick}. Each unit-potential load column is assembled once over the entire corresponding electrode. All platforms were used without modification of their library source code.

Let $\beta_\ell=1/z_\ell>0$. All five platforms add $+\int_{E_\ell}\beta_\ell u\psi\,\mathrm{d}s$ to the left-hand side and use $+\int_{E_\ell}\beta_\ell \psi\,\mathrm{d}s$ on the right-hand side for a unit electrode-potential response, where $\psi$ is the test function. These terms are absent on the insulating gaps, and all boundary integrals are evaluated over the entire electrode $E_\ell$.

\noindent\textbf{Robin boundary terms and public interfaces.} NGSolve assembles the boundary terms and loads using \texttt{BilinearForm} and \texttt{LinearForm}, and computes $D$ using \texttt{Integrate}. DOLFINx passes UFL forms to \path{assemble_matrix} and \path{assemble_vector}, and computes $D$ using \path{assemble_scalar}.

MFEM's boundary \texttt{MassIntegrator} and \texttt{BoundaryLFIntegrator} form the Robin bilinear term and boundary load, respectively. FreeFEM evaluates these boundary integrals and $D$ using \texttt{int1d} within a global \texttt{varf}. GetFEM uses \path{add_Fourier_Robin_brick} and \path{add_source_term_brick} for these two terms, respectively, and computes $D$ using \path{asm_generic}.
Both the Robin and source-term bricks in GetFEM use positive $\beta_\ell$ for the left-hand-side term and right-hand-side load specified above. For each column of $C$, only the source term on the corresponding electrode is enabled. Each platform first reorders the assembled matrices to the common degree-of-freedom ordering. The ground-node row and column are then deleted from $H$, and the corresponding row is deleted from $C$, yielding $H_g$ and $C_g$, respectively. For the 2692-node case, $H_g\in\mathbb{R}^{2691\times2691}$ and $C_g\in\mathbb{R}^{2691\times16}$. For the 2535-node case, their dimensions are $2534\times2534$ and $2534\times16$, respectively. In both cases, $D\in\mathbb{R}^{16\times16}$ retains all electrodes.

Each platform's integration interface evaluates $D_{\ell\ell}=\int_{E_\ell}z_\ell^{-1}\,\mathrm{d}s$. MFEM evaluates this integral by applying its boundary linear form to the all-ones vector, using the partition-of-unity property of the P1 basis functions. Its rounding order can therefore differ from that of edgewise scalar integration. NGSolve, MFEM and FreeFEM use the default quadrature rules of the listed interfaces; DOLFINx lets UFL/FFCx determine the polynomial quadrature degree automatically; and GetFEM uses the order-4 rule in \texttt{MeshIm}. These differences are retained in the platform-native results for each mesh.

The local-integral-recording implementation $C_s$ extracts the local matrices of every element and electrode edge and accumulates part of them in a common order, so that intermediate values are available for the analyses of sections 3, 4 and 7, whereas $B_s$ reuses the EIDORS blocks and $P_s$ uses platform-native assembly. Within each platform, $P_s$ and $C_s$ use the same solver interfaces. With these public interfaces, the electrode integrals and the electrode system must be specified explicitly, but no dedicated CEM forward-solver entry point is required.

\subsection{Refinement and repeatability}
In the 2692-node case, refinement is applied to the grounded augmented system for $R$ and to the response equations $H_gZ=C_g$ for $A$, $B_s$ and $C_s$; the platform-native results $P_s$ are not refined. The procedure is that of main-text section 3.5: the stored matrix entries are fixed, residuals and updates are evaluated in rational arithmetic, and each correction is computed in double precision with a SciPy SuperLU factorisation of the implementation's own matrix. After steps 3 and 4, the responses are used to form $M_g$, solve for $\bm{U}$ and recover the nodal potentials in rational arithmetic, and the double-precision roundings of the two results are compared. The 2535-node refinement results are given in section 9.

Each implementation in the two 16-electrode cases was run twice from the same fixed inputs. In the 2692-node case, the comparisons covered the assembled matrices and directly computed potentials of $R$, $A$ and every $P_s$, $B_s$ and $C_s$, together with the response and transadmittance matrices of the Robin implementations, the recorded local matrices of $C_s$ and the double-precision roundings of the refined potentials of $R$, $A$, $B_s$ and $C_s$. In the 2535-node case, the comparisons covered the assembled matrices and directly computed potentials of $R$, $A$ and the five $P_s$ implementations, together with the response and transadmittance matrices of the Robin implementations. All compared quantities were bitwise identical between the two runs.

The input mappings of the 2692-node case were also checked with a single C program that performs P1 assembly and the Robin--transadmittance solution with a fixed sequence of integration, accumulation, LU factorisation and potential recovery. The coordinates, connectivity, material values, electrode labels and currents imported by EIDORS and by each platform were restored to the common ordering and passed to this program. At every stage, its intermediate matrices and final potentials were bitwise identical across all sets of imported inputs and both runs. The reproduction package lists the programs and output files that produce each table and figure.

\section{Platform-native and local-integral-recording assembly}
Table~\ref{tab:s14} compares the shared-matrix implementations $B_s$, which reuse the EIDORS matrices, with the local-integral-recording implementations $C_s$. Table~\ref{tab:s15} extends main-text table 3 to all nodal potentials and adds the electrode-equation residuals of the platform-native implementations $P_s$, evaluated with their own matrices. All results in this section refer to the 2692-node case.
\begin{table}[H]\centering\small
\caption{Comparison of the Robin implementations used for error analysis. The first two differences use directly computed results; the last compares the refined approximations of $C_s$ and $B_s$, whose assembled matrices differ. Maxima are taken over all 16 stimulation columns and all 16 electrodes.}\label{tab:s14}
\begin{tabular*}{\textwidth}{@{\extracolsep{\fill}}llll@{}}
\toprule
Platform & $\|\bm{U}_{B_s}-\bm{U}_A\|_{\max}$ & $\|\bm{U}_{C_s}-\bm{U}_R\|_{\max}$ & $\|\bm{U}_{C_s}^\star-\bm{U}_{B_s}^\star\|_{\max}$\\\midrule
DOLFINx & $3.642\times10^{-14}$ & $4.841\times10^{-14}$ & $7.011\times10^{-15}$\\
FreeFEM & $9.104\times10^{-15}$ & $1.421\times10^{-14}$ & $1.262\times10^{-14}$\\
GetFEM & $1.110\times10^{-14}$ & $2.931\times10^{-14}$ & $1.508\times10^{-14}$\\
MFEM & $4.632\times10^{-13}$ & $4.641\times10^{-13}$ & $8.867\times10^{-15}$\\
NGSolve & $3.464\times10^{-14}$ & $2.620\times10^{-14}$ & $1.028\times10^{-14}$\\
\bottomrule\end{tabular*}\end{table}

\begin{table}[H]\centering\small
\caption{Comparison of each platform-native implementation $P_s$ with the EIDORS benchmark $R$, using directly computed results. Electrode-equation residuals are evaluated using the assembled matrices of $P_s$.}\label{tab:s15}
\footnotesize\setlength{\tabcolsep}{3pt}
\begin{tabular*}{\textwidth}{@{\extracolsep{\fill}}llllll@{}}
\toprule
Platform & \makecell[l]{Maximum absolute\\nodal difference} & $e_u$ & \makecell[l]{Maximum absolute\\electrode difference} & $e_U$ & \makecell[l]{Electrode-equation\\residual}\\\midrule
DOLFINx & $3.508\times10^{-14}$ & $5.125\times10^{-14}$ & $3.531\times10^{-14}$ & $2.539\times10^{-14}$ & $3.138\times10^{-16}$\\
FreeFEM & $2.176\times10^{-14}$ & $1.943\times10^{-14}$ & $1.910\times10^{-14}$ & $9.165\times10^{-15}$ & $7.257\times10^{-16}$\\
GetFEM & $3.308\times10^{-14}$ & $1.885\times10^{-14}$ & $2.842\times10^{-14}$ & $9.164\times10^{-15}$ & $6.469\times10^{-16}$\\
MFEM & $4.676\times10^{-13}$ & $6.289\times10^{-13}$ & $4.641\times10^{-13}$ & $3.069\times10^{-13}$ & $4.049\times10^{-16}$\\
NGSolve & $1.266\times10^{-14}$ & $8.653\times10^{-15}$ & $1.732\times10^{-14}$ & $7.539\times10^{-15}$ & $5.258\times10^{-16}$\\
\bottomrule\end{tabular*}\end{table}

Within each platform, $P_s$ and $C_s$ use the same solver interfaces but differ in how local contributions are accumulated and in some quadrature rules. Table~\ref{tab:s16} compares the matrices $H_g$, $C_g$ and $D$ passed to their solvers; the resulting potential differences include the propagation of these assembly changes through the floating-point solution.
\begin{table}[H]\centering\small
\caption{Platform-native assembly $P_s$ versus local-integral-recording assembly $C_s$: maximum absolute differences in solver-input matrices and directly computed electrode potentials. The first three columns report maximum absolute entrywise differences between the respective assembled matrices $H_g,C_g,D$.}\label{tab:s16}
\begin{tabular*}{\textwidth}{@{\extracolsep{\fill}}lllll@{}}
\toprule
Platform & $\Delta H_g$ & $\Delta C_g$ & $\Delta D$ & $\|\bm{U}_{P_s}-\bm{U}_{C_s}\|_{\max}$\\\midrule
DOLFINx & $4.441\times10^{-15}$ & $5.551\times10^{-17}$ & $2.220\times10^{-16}$ & $1.910\times10^{-14}$\\
FreeFEM & $4.441\times10^{-16}$ & $2.776\times10^{-17}$ & $6.661\times10^{-16}$ & $6.661\times10^{-15}$\\
GetFEM & $1.776\times10^{-15}$ & $0$ & $6.661\times10^{-16}$ & $3.109\times10^{-15}$\\
MFEM & $4.441\times10^{-16}$ & $0$ & $2.220\times10^{-16}$ & $3.997\times10^{-15}$\\
NGSolve & $4.441\times10^{-16}$ & $0$ & $4.885\times10^{-15}$ & $3.064\times10^{-14}$\\
\bottomrule\end{tabular*}\end{table}

\clearpage
\section{Response-equation residuals with shared assembled matrices}
Figure~\ref{fig:shared_response} and table~\ref{tab:response_history} show the response-equation residual $\|C_g-H_gZ^{(k)}\|_{\max}$ of the implementations that share the EIDORS blocks $H_g$, $C_g$ and $D$ in the 2692-node case. The MATLAB implementation $A$ and each $B_s$ start from their own double-precision response ($k=0$) and are refined as in main-text section 3.5. For every implementation, the residual decreases by roughly 14--15 orders of magnitude per step.

\begin{figure}[H]\centering
\includegraphics[width=0.95\linewidth]{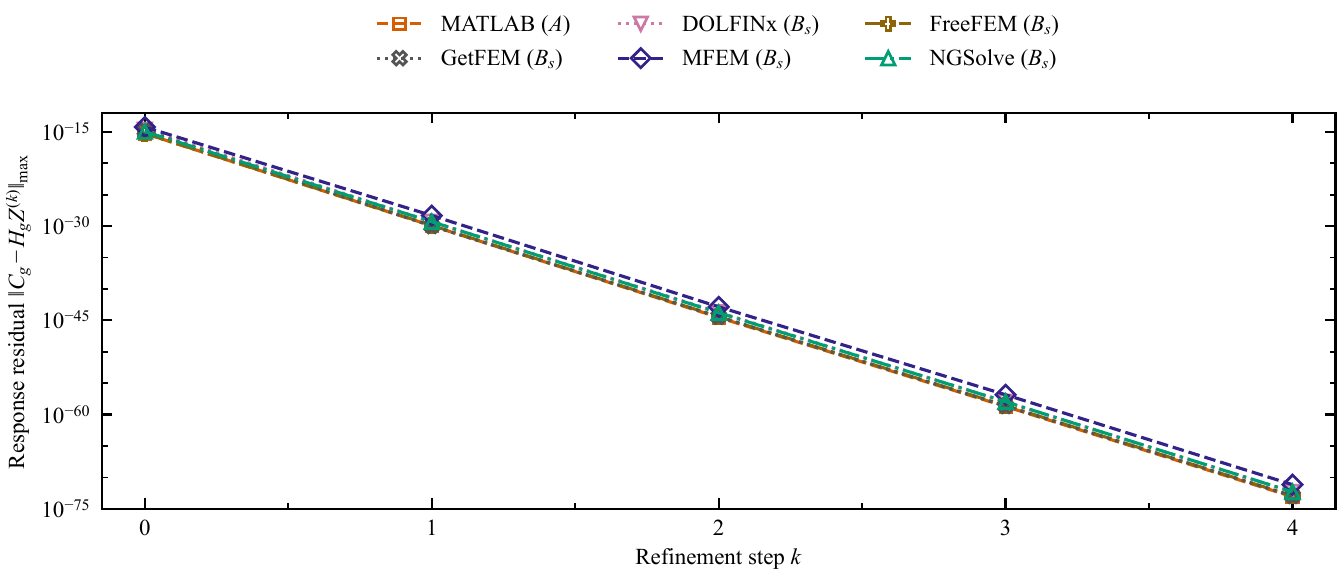}
\caption{Response-equation residuals of the Robin implementations that share the EIDORS matrices in the 2692-node case, during iterative refinement with a separate SuperLU factorisation for each implementation.}\label{fig:shared_response}
\end{figure}

\begin{table}[H]\centering\small
\caption{Data for figure~\ref{fig:shared_response}: Robin response-equation residuals at successive iterative refinement steps when reusing the EIDORS benchmark matrices. Values are shown to four significant figures.}\label{tab:s17}
\footnotesize\setlength{\tabcolsep}{3pt}
\label{tab:response_history}\begin{tabular*}{\textwidth}{@{\extracolsep{\fill}}llllll@{}}
\toprule
Initial response source & $k=0$ & $k=1$ & $k=2$ & $k=3$ & $k=4$\\\midrule
MATLAB Robin ($A$) & $5.186\times10^{-16}$ & $1.097\times10^{-30}$ & $2.847\times10^{-45}$ & $1.958\times10^{-59}$ & $8.585\times10^{-74}$\\
DOLFINx ($B_s$) & $2.136\times10^{-15}$ & $4.881\times10^{-30}$ & $1.900\times10^{-44}$ & $1.078\times10^{-58}$ & $4.849\times10^{-73}$\\
FreeFEM ($B_s$) & $5.407\times10^{-16}$ & $1.245\times10^{-30}$ & $3.723\times10^{-45}$ & $2.696\times10^{-59}$ & $1.171\times10^{-73}$\\
GetFEM ($B_s$) & $1.552\times10^{-15}$ & $1.019\times10^{-30}$ & $3.785\times10^{-45}$ & $2.342\times10^{-59}$ & $1.155\times10^{-73}$\\
MFEM ($B_s$) & $5.620\times10^{-15}$ & $4.661\times10^{-29}$ & $1.390\times10^{-43}$ & $1.432\times10^{-57}$ & $7.144\times10^{-72}$\\
NGSolve ($B_s$) & $1.058\times10^{-15}$ & $4.776\times10^{-30}$ & $1.565\times10^{-44}$ & $1.103\times10^{-58}$ & $5.043\times10^{-73}$\\
\bottomrule\end{tabular*}\end{table}

\clearpage\section{The 2535-node exact rational-arithmetic test case}

This section gives the details and full results of the 2535-node exact discrete benchmark, and table~\ref{tab:s27} adds a 1055-node case. The directly computed results of $R$, $A$ and the five $P_s$ implementations are the starting points ($k=0$) for four refinement steps.

\begin{table}[H]\centering\small
\caption{Specification of the 2535-node rational-arithmetic benchmark.}\label{tab:s18}
\begin{tabularx}{\textwidth}{@{}>{\raggedright\arraybackslash}p{0.29\textwidth}Y@{}}
\toprule
Item & Specification or result\\\midrule
Mesh and electrodes & 2535 nodes, 4940 elements, 128 boundary edges, 16 electrodes\\
Coordinates & $p/65536$; imported values identical to the exact inputs\\
Electrode-edge lengths & All 64 lengths rational (integer Pythagorean triples)\\
Minimum angle & $28.121637^\circ$\\
Conductivity and contact impedance & $\sigma\in\{1,2\}$, $z_\ell=1/8$\\
Currents and potential reference & 16 current-input columns, exact rank 15; node 2535 at the origin is grounded\\
Grounded transadmittance matrix & $16\times16$, exactly symmetric, rank 16\\
Independent assembly and equation checks & The two exact assemblies give identical matrices; every entry of the full augmented-system residual is exactly zero\\
Reference output & 40816 exact fractions and their correctly rounded values\\
\bottomrule\end{tabularx}\end{table}

The P1 stiffness matrices are computed exactly from the rational coordinates, area determinants and piecewise-constant conductivities, and the electrode terms are integrated exactly over the straight electrode edges, whose lengths are rational. The boundary-node radii range from approximately 0.999707 to 1.001216. Unlike the fitted interface of the 2692-node mesh, the inclusion is defined by element labels assigned with an exact centroid test. The two independent exact assemblies compute the basis-function gradients from cofactors and from the inverse of the matrix defining the affine basis functions, respectively, and the boundary mass terms are additionally evaluated by polynomial integration.

Exact arithmetic uses python-FLINT 0.9.0 and the Dixon solver of FLINT 3.6.0. The grounded nodal system of order 2534 is solved for the 16 response columns simultaneously, after which the grounded transadmittance matrix is formed, the electrode system solved and the nodal potentials recovered. Numerators and denominators of the reference contain at most 32761 bits. The reference satisfies the grounded system with zero residual; the exactly assembled matrices are symmetric and satisfy the constant-nullspace identities; the current inputs are conservative; and, with the ground value set to zero, the solution also satisfies the deleted ground-row equation. Positive conductivity, positive contact impedance and mesh connectivity make the grounded system positive definite. The correctly rounded double-precision values of the reference are obtained by comparison with the exact midpoints between adjacent binary64 numbers.

\begin{table}[H]\centering\small
\caption{Differences of the directly computed results from the EIDORS output in the 2535-node case.}\label{tab:s19}
\begin{tabular*}{\textwidth}{@{\extracolsep{\fill}}lll@{}}
\toprule
Implementation & \makecell[l]{Maximum absolute\\nodal difference} & \makecell[l]{Maximum absolute\\electrode difference}\\\midrule
EIDORS augmented ($R$) & $0$ & $0$\\
MATLAB Robin ($A$) & $3.642\times10^{-14}$ & $3.531\times10^{-14}$\\
DOLFINx ($P_s$) & $1.003\times10^{-13}$ & $9.925\times10^{-14}$\\
FreeFEM ($P_s$) & $3.730\times10^{-14}$ & $3.708\times10^{-14}$\\
GetFEM ($P_s$) & $3.819\times10^{-14}$ & $3.642\times10^{-14}$\\
MFEM ($P_s$) & $8.282\times10^{-14}$ & $8.049\times10^{-14}$\\
NGSolve ($P_s$) & $1.237\times10^{-13}$ & $1.275\times10^{-13}$\\
\bottomrule\end{tabular*}\end{table}

\begingroup
\setlength{\intextsep}{4pt plus 1pt minus 1pt}
\renewcommand{\arraystretch}{1.04}
\begin{table}[H]\centering\small
\caption{Maximum absolute entrywise errors of the finite element blocks used by each implementation, relative to exact assembly. Only $H_g$, $C_g$ and $D$, which are available from all interfaces, are compared.}\label{tab:s20}
\begin{tabular*}{\textwidth}{@{\extracolsep{\fill}}llll@{}}
\toprule
Implementation & $H_g$ & $C_g$ & $D$\\\midrule
EIDORS augmented ($R$) & $1.513\times10^{-15}$ & $2.220\times10^{-16}$ & $4.441\times10^{-16}$\\
MATLAB Robin ($A$) & $1.513\times10^{-15}$ & $2.220\times10^{-16}$ & $4.441\times10^{-16}$\\
DOLFINx ($P_s$) & $1.153\times10^{-15}$ & $5.551\times10^{-17}$ & $0$\\
FreeFEM ($P_s$) & $1.907\times10^{-15}$ & $0$ & $2.220\times10^{-16}$\\
GetFEM ($P_s$) & $2.691\times10^{-14}$ & $7.772\times10^{-16}$ & $3.109\times10^{-15}$\\
MFEM ($P_s$) & $8.385\times10^{-16}$ & $0$ & $0$\\
NGSolve ($P_s$) & $8.385\times10^{-16}$ & $0$ & $4.441\times10^{-15}$\\
\bottomrule\end{tabular*}\end{table}

\begin{table}[H]\centering\small
\caption{Maximum absolute entrywise errors in the electrode transadmittance matrices computed by the Robin implementations, relative to $M_{g,\mathbb Q}$. These errors include contributions from finite element assembly, the response solution and, for $k=0$, the floating-point formation of $M_g$. $R$ is not listed because the EIDORS solver does not form this matrix.}\label{tab:s21}
\begin{tabular*}{\textwidth}{@{\extracolsep{\fill}}lll@{}}
\toprule
Implementation & $k=0$ & $k=4$\\\midrule
MATLAB Robin ($A$) & $1.339\times10^{-14}$ & $1.321\times10^{-14}$\\
DOLFINx ($P_s$) & $4.694\times10^{-15}$ & $4.046\times10^{-15}$\\
FreeFEM ($P_s$) & $6.976\times10^{-15}$ & $6.916\times10^{-15}$\\
GetFEM ($P_s$) & $6.889\times10^{-15}$ & $9.481\times10^{-15}$\\
MFEM ($P_s$) & $1.411\times10^{-14}$ & $3.663\times10^{-15}$\\
NGSolve ($P_s$) & $8.462\times10^{-15}$ & $7.962\times10^{-15}$\\
\bottomrule\end{tabular*}\end{table}

\begin{table}[H]\centering\small
\caption{Maximum absolute electrode-potential errors relative to the exact discrete reference at each refinement step, over all 16 stimulation columns. These data correspond to figure 4(b) of the main text and figure~\ref{fig:errors_individual}.}\label{tab:s22}
\footnotesize\setlength{\tabcolsep}{3pt}
\begin{tabular*}{\textwidth}{@{\extracolsep{\fill}}llllll@{}}
\toprule
Implementation & $k=0$ & $k=1$ & $k=2$ & $k=3$ & $k=4$\\\midrule
EIDORS augmented ($R$) & $7.809\times10^{-14}$ & $1.086\times10^{-13}$ & $1.086\times10^{-13}$ & $1.086\times10^{-13}$ & $1.086\times10^{-13}$\\
MATLAB Robin ($A$) & $1.087\times10^{-13}$ & $1.086\times10^{-13}$ & $1.086\times10^{-13}$ & $1.086\times10^{-13}$ & $1.086\times10^{-13}$\\
DOLFINx ($P_s$) & $3.371\times10^{-14}$ & $3.092\times10^{-14}$ & $3.092\times10^{-14}$ & $3.092\times10^{-14}$ & $3.092\times10^{-14}$\\
FreeFEM ($P_s$) & $5.402\times10^{-14}$ & $5.585\times10^{-14}$ & $5.585\times10^{-14}$ & $5.585\times10^{-14}$ & $5.585\times10^{-14}$\\
GetFEM ($P_s$) & $4.558\times10^{-14}$ & $8.072\times10^{-14}$ & $8.072\times10^{-14}$ & $8.072\times10^{-14}$ & $8.072\times10^{-14}$\\
MFEM ($P_s$) & $1.014\times10^{-13}$ & $2.994\times10^{-14}$ & $2.994\times10^{-14}$ & $2.994\times10^{-14}$ & $2.994\times10^{-14}$\\
NGSolve ($P_s$) & $5.323\times10^{-14}$ & $4.427\times10^{-14}$ & $4.427\times10^{-14}$ & $4.427\times10^{-14}$ & $4.427\times10^{-14}$\\
\bottomrule\end{tabular*}\end{table}

\begin{table}[H]\centering\small
\caption{Maximum absolute nodal-potential errors relative to the exact discrete reference at each refinement step, over all 16 stimulation columns.}\label{tab:s23}
\footnotesize\setlength{\tabcolsep}{3pt}
\begin{tabular*}{\textwidth}{@{\extracolsep{\fill}}llllll@{}}
\toprule
Implementation & $k=0$ & $k=1$ & $k=2$ & $k=3$ & $k=4$\\\midrule
EIDORS augmented ($R$) & $7.875\times10^{-14}$ & $1.093\times10^{-13}$ & $1.093\times10^{-13}$ & $1.093\times10^{-13}$ & $1.093\times10^{-13}$\\
MATLAB Robin ($A$) & $1.094\times10^{-13}$ & $1.093\times10^{-13}$ & $1.093\times10^{-13}$ & $1.093\times10^{-13}$ & $1.093\times10^{-13}$\\
DOLFINx ($P_s$) & $3.380\times10^{-14}$ & $3.096\times10^{-14}$ & $3.096\times10^{-14}$ & $3.096\times10^{-14}$ & $3.096\times10^{-14}$\\
FreeFEM ($P_s$) & $5.456\times10^{-14}$ & $5.628\times10^{-14}$ & $5.628\times10^{-14}$ & $5.628\times10^{-14}$ & $5.628\times10^{-14}$\\
GetFEM ($P_s$) & $4.523\times10^{-14}$ & $8.118\times10^{-14}$ & $8.118\times10^{-14}$ & $8.118\times10^{-14}$ & $8.118\times10^{-14}$\\
MFEM ($P_s$) & $1.045\times10^{-13}$ & $3.031\times10^{-14}$ & $3.031\times10^{-14}$ & $3.031\times10^{-14}$ & $3.031\times10^{-14}$\\
NGSolve ($P_s$) & $4.893\times10^{-14}$ & $3.983\times10^{-14}$ & $3.983\times10^{-14}$ & $3.983\times10^{-14}$ & $3.983\times10^{-14}$\\
\bottomrule\end{tabular*}\end{table}

\begin{table}[H]\centering\small
\caption{Grounded augmented-system residuals $\|\bm{b}-\widehat F_{g,s}\bm{x}_{g,s}^{(k)}\|_{\max}$ at each refinement step, excluding the deleted ground row. Each implementation uses its own assembled matrix. These data correspond to figure 4(a) of the main text and figure~\ref{fig:rational_residual}.}\label{tab:s24}
\footnotesize\setlength{\tabcolsep}{3pt}
\begin{tabular*}{\textwidth}{@{\extracolsep{\fill}}llllll@{}}
\toprule
Implementation & $k=0$ & $k=1$ & $k=2$ & $k=3$ & $k=4$\\\midrule
EIDORS augmented ($R$) & $2.753\times10^{-15}$ & $9.773\times10^{-29}$ & $1.236\times10^{-42}$ & $2.732\times10^{-56}$ & $3.224\times10^{-70}$\\
MATLAB Robin ($A$) & $1.450\times10^{-15}$ & $2.643\times10^{-30}$ & $9.765\times10^{-45}$ & $8.920\times10^{-59}$ & $6.461\times10^{-73}$\\
DOLFINx ($P_s$) & $2.237\times10^{-15}$ & $2.647\times10^{-30}$ & $1.637\times10^{-44}$ & $1.461\times10^{-58}$ & $1.540\times10^{-72}$\\
FreeFEM ($P_s$) & $1.688\times10^{-15}$ & $2.090\times10^{-30}$ & $5.726\times10^{-45}$ & $5.967\times10^{-59}$ & $3.805\times10^{-73}$\\
GetFEM ($P_s$) & $1.783\times10^{-15}$ & $3.844\times10^{-29}$ & $2.977\times10^{-43}$ & $2.185\times10^{-57}$ & $1.963\times10^{-71}$\\
MFEM ($P_s$) & $1.163\times10^{-14}$ & $5.741\times10^{-29}$ & $1.992\times10^{-43}$ & $1.850\times10^{-57}$ & $1.250\times10^{-71}$\\
NGSolve ($P_s$) & $2.268\times10^{-15}$ & $4.983\times10^{-30}$ & $1.548\times10^{-44}$ & $1.328\times10^{-58}$ & $9.805\times10^{-73}$\\
\bottomrule\end{tabular*}\end{table}
\endgroup

\clearpage
\begin{figure}[H]\centering
\includegraphics[width=\linewidth]{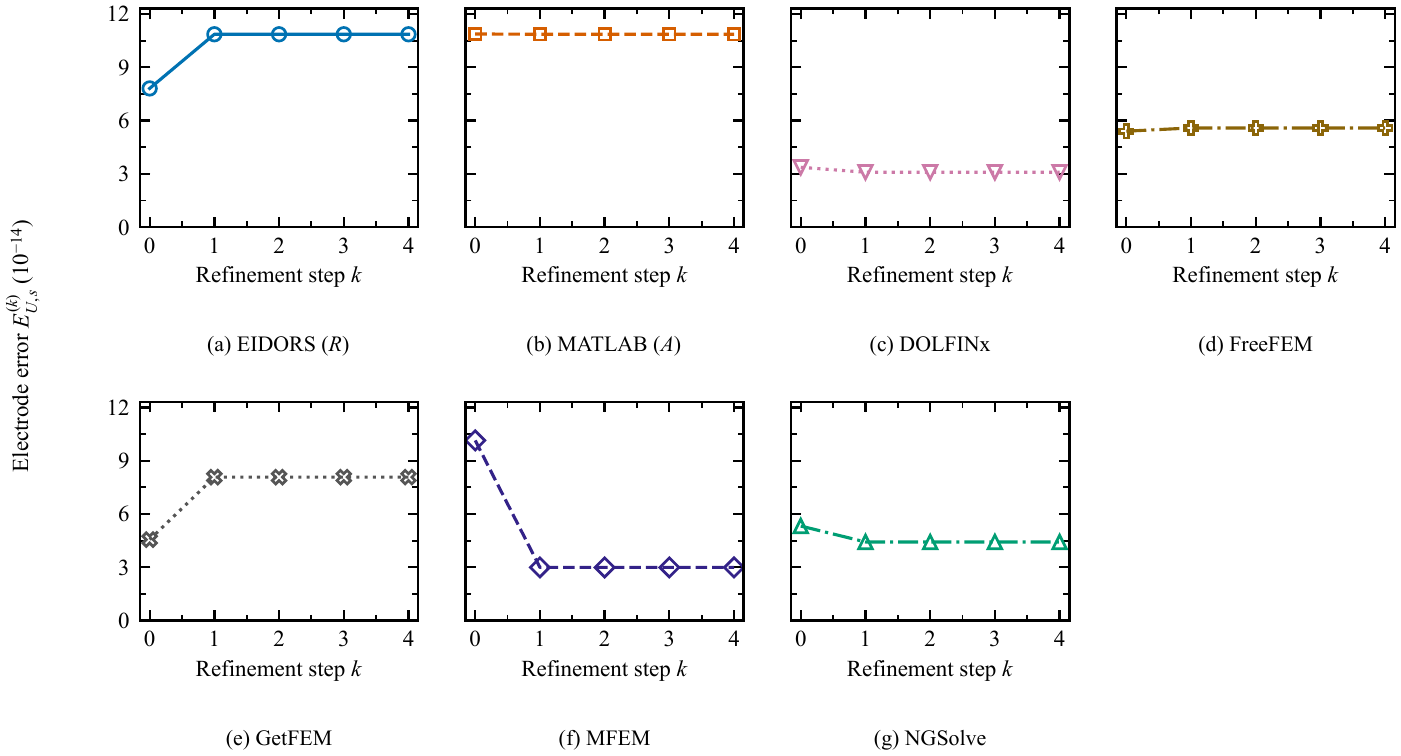}
\caption{Maximum absolute electrode-potential errors $E_{U,s}^{(k)}$ relative to the exact discrete reference for each implementation in the 2535-node case, plotted separately with common axis limits (in units of $10^{-14}$) to resolve curves that overlap in figure 4(b) of the main text. $k=0$ denotes the directly computed result; the data are listed in table~\ref{tab:s22}.}\label{fig:errors_individual}
\end{figure}

\begin{figure}[H]\centering
\includegraphics[width=0.93\linewidth]{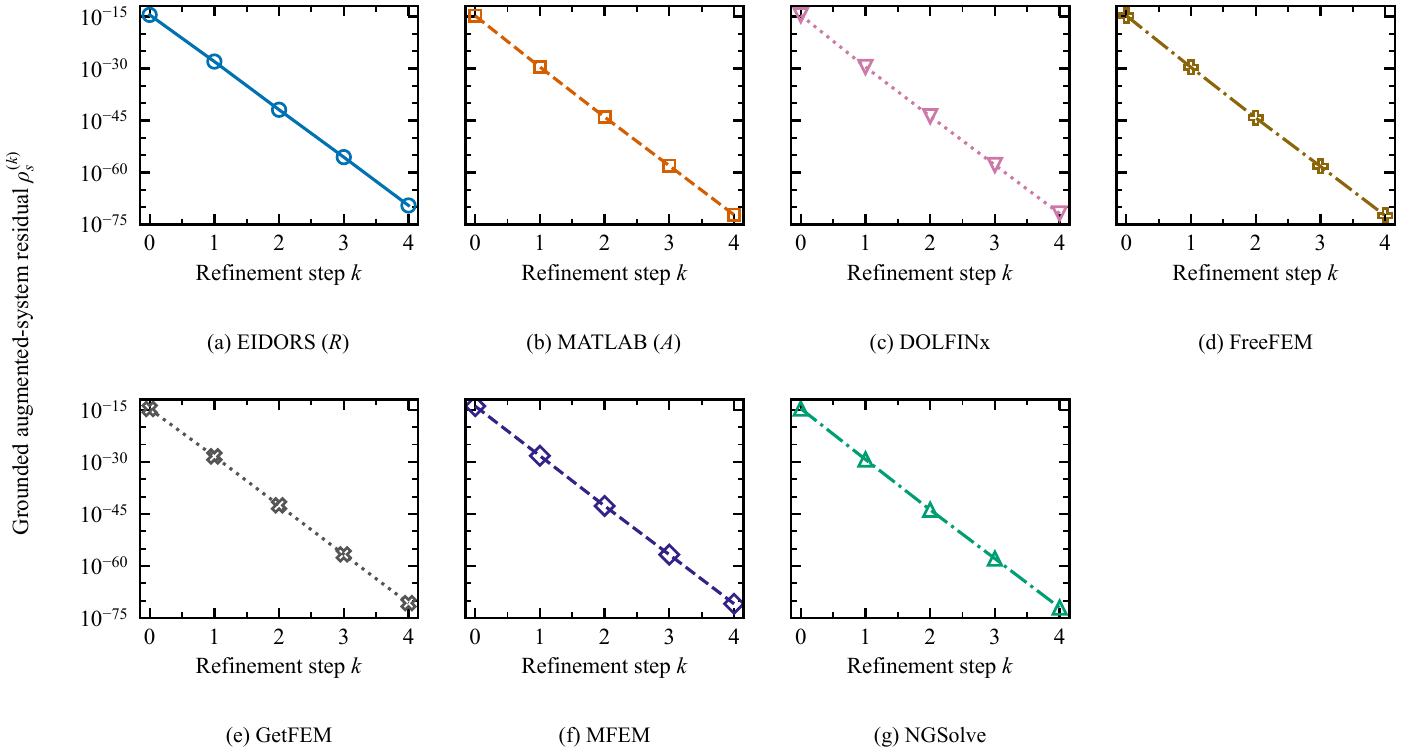}
\caption{Grounded augmented-system residuals of each implementation in the 2535-node case, evaluated with its own assembled matrix and plotted separately with common axis limits; the data are those of figure 4(a) of the main text and table~\ref{tab:s24}. The residuals decrease at every step, whereas the errors in figure~\ref{fig:errors_individual} remain essentially constant from $k=1$.}\label{fig:rational_residual}
\end{figure}

\clearpage\noindent Table~\ref{tab:s25} decomposes, for each implementation, the signed error at the electrode-potential component with the largest directly computed error. With $\widetilde{\bm{x}}_s$ the exact solution of the stored assembled system, $\bm{x}_s-\bm{x}_s^{(4)}$ estimates the solution error $\bm{x}_s-\widetilde{\bm{x}}_s$ and $\bm{x}_s^{(4)}-\bm{x}_{\mathbb Q}$ estimates the assembly-induced solution shift $\widetilde{\bm{x}}_s-\bm{x}_{\mathbb Q}$. The two estimates add up to the total error, and the refinement error remaining after four steps enters them with equal magnitude and opposite signs.

\begin{table}[H]\centering\small
\caption{Signed estimated error contributions at the electrode-potential component with the largest directly computed error of each implementation.}\label{tab:s25}
\footnotesize\setlength{\tabcolsep}{3pt}
\begin{tabular*}{\textwidth}{@{\extracolsep{\fill}}lllll@{}}
\toprule
Implementation & \makecell[l]{Electrode/\\stimulation} & Total error & \makecell[l]{Estimated solution-error\\contribution} & \makecell[l]{Estimated assembly\\contribution}\\\midrule
EIDORS augmented ($R$) & 8/9 & $-7.809\times10^{-14}$ & $3.048\times10^{-14}$ & $-1.086\times10^{-13}$\\
MATLAB Robin ($A$) & 8/9 & $-1.087\times10^{-13}$ & $-1.671\times10^{-16}$ & $-1.086\times10^{-13}$\\
DOLFINx ($P_s$) & 4/5 & $3.371\times10^{-14}$ & $5.118\times10^{-15}$ & $2.859\times10^{-14}$\\
FreeFEM ($P_s$) & 9/10 & $-5.402\times10^{-14}$ & $1.827\times10^{-15}$ & $-5.585\times10^{-14}$\\
GetFEM ($P_s$) & 9/10 & $-4.558\times10^{-14}$ & $3.514\times10^{-14}$ & $-8.072\times10^{-14}$\\
MFEM ($P_s$) & 8/9 & $-1.014\times10^{-13}$ & $-1.304\times10^{-13}$ & $2.899\times10^{-14}$\\
NGSolve ($P_s$) & 9/10 & $5.323\times10^{-14}$ & $8.962\times10^{-15}$ & $4.427\times10^{-14}$\\
\bottomrule\end{tabular*}\end{table}

\begin{table}[H]\centering\small
\caption{Numbers of values bitwise identical to the correctly rounded exact discrete reference, out of 40816 in every case. Counts include the 16 identically zero ground values.}\label{tab:s26}
\begin{tabular*}{\textwidth}{@{\extracolsep{\fill}}llll@{}}
\toprule
Implementation & Before refinement & After four steps & \makecell[l]{Stable rounding over\\the final two steps}\\\midrule
EIDORS augmented ($R$) & 23 & 20 & Yes\\
MATLAB Robin ($A$) & 20 & 20 & Yes\\
DOLFINx ($P_s$) & 63 & 57 & Yes\\
FreeFEM ($P_s$) & 28 & 27 & Yes\\
GetFEM ($P_s$) & 22 & 22 & Yes\\
MFEM ($P_s$) & 26 & 39 & Yes\\
NGSolve ($P_s$) & 62 & 24 & Yes\\
\bottomrule\end{tabular*}\end{table}

Because a single-bit difference counts as a disagreement, table~\ref{tab:s26} is a strict measure of binary agreement rather than of accuracy, which is quantified by the errors in tables~\ref{tab:s22} and \ref{tab:s23}.

\begin{table}[H]\centering\small
\caption{Rational-arithmetic benchmarks at two mesh sizes, each with its own exact discrete reference (zero residual, correctly rounded representation). Error ranges cover the seven implementations after four refinement steps.}\label{tab:s27}
\begin{tabular*}{\textwidth}{@{\extracolsep{\fill}}lllll@{}}
\toprule
Nodes & Elements & \makecell[l]{Number of\\exact outputs} & \makecell[l]{Minimum\\electrode error} & \makecell[l]{Maximum\\electrode error}\\\midrule
1055 & 1980 & 17136 & $9.600\times10^{-15}$ & $4.537\times10^{-14}$\\
2535 & 4940 & 40816 & $2.994\times10^{-14}$ & $1.086\times10^{-13}$\\
\bottomrule\end{tabular*}\end{table}

The two meshes share the boundary with its 128 nodes, the grounded origin, the electrodes, contact impedances, currents and material-assignment rule, but their interior triangulations are not nested. Exact assembly and solution and all seven implementations were run separately on each mesh, and each row is evaluated against the exact solution of its own discrete problem, so the table compares numerical errors rather than discretisation errors.

\noindent\textbf{Voltage-difference errors.} To separate voltage-difference errors from a common potential shift, table~\ref{tab:s28} reports $E_{\Delta U,s}^{(k)}=\|T_{16}(\bm{U}_s^{(k)}-\bm{U}_{\mathbb Q})\|_{\max}$, where $(T_{16}\bm{U})_i=U_i-U_{16}$ for $i=1,\ldots,15$, together with the analogous error for the adjacent-electrode voltages $(B\bm{U})_i=U_i-U_{i+1}$, with $U_{17}=U_1$. Since $T_{16}\mathbf1=B\mathbf1=0$, both measures are invariant under a common potential shift. They are evaluated exactly from the same electrode-error matrices $\bm{U}_s^{(k)}-\bm{U}_{\mathbb Q}$ as table~\ref{tab:s22}, with maxima taken over all 16 stimulation columns. The common-component column gives the largest magnitude, over stimulation columns, of the mean error of the 16 electrode potentials.

\begin{table}[H]\centering\small
\caption{Maximum absolute voltage-difference errors relative to the exact discrete reference for each implementation in the 2535-node case. Here, $k=0$ denotes directly computed results, and $k=4$ denotes rational approximations after four refinement steps. Electrode-potential errors are reported in table~\ref{tab:s22}.}\label{tab:s28}
\footnotesize\setlength{\tabcolsep}{3pt}
\begin{tabular*}{\textwidth}{@{\extracolsep{\fill}}llllll@{}}
\toprule
& \multicolumn{2}{c}{\makecell{Voltages relative\\to electrode 16}} & \multicolumn{2}{c}{\makecell{Adjacent-electrode\\voltage differences}} & \makecell{Common\\component}\\
\cmidrule(lr){2-3}\cmidrule(lr){4-5}\cmidrule(l){6-6}
Implementation & $k=0$ & $k=4$ & $k=0$ & $k=4$ & $k=0$\\\midrule
EIDORS augmented ($R$) & $1.404\times10^{-13}$ & $1.694\times10^{-13}$ & $2.747\times10^{-14}$ & $3.391\times10^{-14}$ & $1.104\times10^{-14}$\\
MATLAB Robin ($A$) & $1.708\times10^{-13}$ & $1.694\times10^{-13}$ & $3.427\times10^{-14}$ & $3.391\times10^{-14}$ & $2.586\times10^{-14}$\\
DOLFINx ($P_s$) & $2.401\times10^{-14}$ & $2.736\times10^{-14}$ & $7.491\times10^{-15}$ & $7.663\times10^{-15}$ & $1.730\times10^{-14}$\\
FreeFEM ($P_s$) & $7.884\times10^{-14}$ & $8.142\times10^{-14}$ & $1.649\times10^{-14}$ & $1.674\times10^{-14}$ & $1.314\times10^{-14}$\\
GetFEM ($P_s$) & $7.951\times10^{-14}$ & $1.210\times10^{-13}$ & $1.542\times10^{-14}$ & $2.437\times10^{-14}$ & $6.769\times10^{-15}$\\
MFEM ($P_s$) & $1.270\times10^{-13}$ & $4.203\times10^{-14}$ & $2.986\times10^{-14}$ & $8.016\times10^{-15}$ & $4.579\times10^{-14}$\\
NGSolve ($P_s$) & $7.282\times10^{-14}$ & $7.370\times10^{-14}$ & $1.882\times10^{-14}$ & $1.900\times10^{-14}$ & $1.518\times10^{-14}$\\
\bottomrule\end{tabular*}\end{table}

For the directly computed results, the errors in the voltages relative to electrode 16 range from $2.401\times10^{-14}$ to $1.708\times10^{-13}$, comparable to the electrode-potential errors in table~\ref{tab:s22}, and those in the adjacent-electrode voltages from $7.491\times10^{-15}$ to $3.427\times10^{-14}$. For every implementation, the common component is smaller than the maximum electrode-potential error, and it does not exceed $4.579\times10^{-14}$; the non-zero shift-invariant errors are therefore not explained by a common potential shift. After four refinement steps, the voltage-difference errors increase for EIDORS, FreeFEM and GetFEM and decrease for $A$ and MFEM; for NGSolve and DOLFINx they increase slightly, although their electrode-potential errors decrease. Voltage-difference errors, like potential errors, therefore need not decrease with the assembled-system residual.

\clearpage
\section{Reading the mathematical argument and Lean proofs together}
The main result is the equivalence of the ungrounded augmented CEM and the Robin--transadmittance procedure (main-text theorem 1). The formalisation covers the discrete energy argument, equivalence and reference results, including the zero-mean/pseudoinverse construction. Here lemma 1 supplies the energy structure; corollary 1 gives current compatibility and uniqueness up to a constant. Furthermore, lemma 2 recovers a deleted reference equation, proposition 1 gives the nodal implementation used for the EIDORS comparisons, and corollary 2 establishes reference-independent voltage differences. Supplementary proposition~\ref{prop:supp_mean_zero} gives the full zero-mean construction. The package uses Lean 4.24.0 and mathlib v4.24.0, with dependency revisions fixed in the supplied manifest.

\subsection{Finite input data, the weak form and energy}
The continuous Robin weak form is obtained by integration by parts in the main text. The formal proof starts at a finite representation of the P1 coefficients. Its inputs are the following: a nonempty finite node set and electrode set; a connected graph of nodal differences; an invertible transformation of those differences; a constant-preserving boundary interpolation map; a boundary-sample owner map covering every electrode; and strictly positive volume and contact weights. These are the fields of \path{CEM.DiscreteModel} in \path{CEM/Assembly.lean}. Neither $H$ invertibility nor positivity of $F_0$ is an input field.

The energy identity in main-text lemma 1 has the finite coefficient representation
\begin{equation}
\sum_q\kappa_q(\mathcal G \bm{u})_q^2+
\sum_b\tau_b\bigl((\mathcal T \bm{u})_b-U_{\ell(b)}\bigr)^2,
\qquad \kappa_q,\tau_b>0.
\label{eq:finite_energy}
\end{equation}
Here $\mathcal G$ forms the element gradient components, $\mathcal T$ interpolates boundary values, and $\ell(b)$ identifies the electrode containing sample $b$. The definitions below connect this weighted sum of squares to the weak form and P1 coefficients, and give the finite proof of lemma 1.

For difference index $q$, write
\[
(\Delta \bm{u})_q=u_{\mathrm{head}(q)}-u_{\mathrm{tail}(q)},\qquad
\mathcal G=R\Delta,\qquad R\text{ invertible}.
\]
Connectivity means that any two nodes are joined by the undirected equivalence closure of these edges, formalised by \path{DifferenceConnected}. The proof \path{incidence_kernel} follows such paths to show that $\Delta \bm{u}=0$ precisely when $\bm{u}$ is constant. Invertibility of $R$ gives the same statement for $\mathcal G \bm{u}=0$ through \path{DiscreteModel.bulk_kernel}. For P1 triangles, choose two vertex differences on each triangle and let the corresponding $2\times2$ block of $R$ convert them to gradient components. Nondegenerate triangle geometry makes each block invertible; no sign condition on stiffness off-diagonal entries is required. Each gradient component has weight $\int_T\sigma\,\mathrm dx>0$. Identifying these geometric inputs is an analytical argument, not a formal mesh-validation routine.

For boundary sample $b$, let $\ell(b)$ denote its electrode. Set
\[
(\mathcal S \bm{U})_b=U_{\ell(b)},\qquad
\mathcal T\bm1_N=\bm1_B,\qquad
W_\Omega=\operatorname{diag}(\bm{\kappa}),\quad
W_\partial=\operatorname{diag}(\bm{\tau}),
\]
where $\bm{\kappa}=(\kappa_q)_q$ and $\bm{\tau}=(\tau_b)_b$ collect the strictly positive volume and contact weights, and each electrode is represented by at least one sample. The formal block definitions are
\[
\begin{split}
K&=\mathcal G^{\mathsf T}W_\Omega\mathcal G,\qquad
A=\mathcal T^{\mathsf T}W_\partial\mathcal T,\qquad H=K+A,\\
C&=\mathcal T^{\mathsf T}W_\partial\mathcal S,\qquad
D=\mathcal S^{\mathsf T}W_\partial\mathcal S.
\end{split}
\]
In particular, the single-electrode ownership of each sample makes $D$ diagonal. Define
\[
\mathcal B=\begin{bmatrix}\mathcal G&0\\\mathcal T&-\mathcal S\end{bmatrix},\qquad
\mathcal W=\operatorname{diag}(\bm{\kappa},\bm{\tau}).
\]
The theorem \path{DiscreteModel.full_eq_augmented} expands $\mathcal B^{\mathsf T}\mathcal W\mathcal B$ into precisely $[H,-C;-C^{\mathsf T},D]$. The weighted bilinear identity in \path{CEM/Energy.lean} identifies testing this matrix against $(\bm{v},\bm{V})$ with
\[
\sum_q\kappa_q(\mathcal G\bm{v})_q(\mathcal G\bm{u})_q+
\sum_b\tau_b\bigl((\mathcal T\bm{v})_b-V_{\ell(b)}\bigr)
                   \bigl((\mathcal T\bm{u})_b-U_{\ell(b)}\bigr).
\]
Testing all coefficient pairs is equivalent to $F_0(\bm{u},\bm{U})=(0,\bm{I})$; \path{DiscreteModel.weak_form_iff} proves both directions, using coordinate test vectors for the forward implication. Taking the trial and test vectors equal gives \path{energy_identity}. Positive weights make zero energy equivalent to $\mathcal B(\bm{u},\bm{U})=0$. Connectivity then gives constant $\bm{u}$, and constant-preserving interpolation and electrode coverage give the same constant for every $U_\ell$. The results \path{energy_zero_iff}, \path{full_kernel_iff} and \path{H_posDef} implement these steps. The identities $K\bm1=0$, $A\bm1=C\bm1$ and $C^{\mathsf T}\bm1=D\bm1$ are checked as \path{K_constant}, \path{A_constant} and \path{coupling_constant}, all in namespace \texttt{CEM.DiscreteModel}.

\noindent\textbf{P1 boundary coefficients.}
For one straight electrode edge, take trace samples $(u_a,u_b,(u_a+u_b)/2)$ and weights $(t,t,4t)$, $t=h/(6z)>0$. Their contact energy is
\[
t(u_a-U)^2+t(u_b-U)^2+4t\left(\frac{u_a+u_b}{2}-U\right)^2
=2t\bigl[(u_a-U)^2+(u_a-U)(u_b-U)+(u_b-U)^2\bigr].
\]
Direct integration of the squared linear trace gives the same expression. The formal theorem \path{P1Boundary.local_matrix} checks that the weighted Gram matrix, in coordinates $(u_a,u_b,U)$, is
\[
t\begin{bmatrix}2&1&-3\\1&2&-3\\-3&-3&6\end{bmatrix}.
\]
Its blocks are exactly the main-text P1 edge formulas. The module also checks constant preservation, positivity of the three weights and the displayed energy identity. This checks the coefficient representation, rather than a particular platform's quadrature implementation.

\noindent\textbf{Transfer to the Robin matrix and to grounding.}
In \path{CEM/Schur.lean}, \path{DiscreteModel.assembled_iff_robin} derives $H$ invertibility from \path{H_posDef} before invoking the generic main theorem. The identity
\[
(Z_0\bm{U},\bm{U})^{\mathsf T}F_0(Z_0\bm{U},\bm{U})=\bm{U}^{\mathsf T}M\bm{U}
\]
is \path{robin_energy}. It transfers the energy kernel to $M$; \path{Z0_constant}, \path{M_constant}, \path{M_posSemidef}, \path{M_energy_zero_iff} and \path{M_kernel_iff} establish corollary 1's matrix properties. Its existence assertion follows constructively from supplementary proposition~\ref{prop:supp_mean_zero}; the full solution is \path{DiscreteModel.assembled_zeroMean_solution}. In \path{CEM/GroundedEnergy.lean}, inserting $u_g=0$ preserves the retained quadratic energy and removes the common constant. This proves $F_g,H_g,M_g$ positive definite. For balanced currents, \path{assembled_grounded_iff_full} recovers the deleted row, and \path{assembled_grounded_full_exists_unique} gives full-CEM existence and uniqueness directly from the finite model.

\subsection{Statements, notation and proof steps}
In \path{CEM/Equivalence.lean}, the finite index type \texttt{N} denotes all nodal degrees of freedom, \texttt{E} denotes electrodes, and \texttt{H,C,D} are the ungrounded blocks. The definitions \texttt{augmented H C D}, \texttt{response H C} and \texttt{transadmittance H C D} represent $F_0$, $Z_0=H^{-1}C$ and $M=D-C^{\mathsf T}Z_0$. The same definitions are later applied to $H_g,C_g,D$, with \texttt{N} then indexing the retained nodes; their values are $F_g,Z,M_g$. The change is an explicit instantiation of one algebraic result, not a different equivalence theorem. In the assembled-model namespace, \texttt{DiscreteModel.Z0} and \texttt{DiscreteModel.M} always mean the ungrounded $Z_0$ and $M$.

Lean writes a stacked vector $[\bm{u}^{\mathsf T},\bm{U}^{\mathsf T}]^{\mathsf T}$ as \texttt{Sum.elim u U}. Its two sides are indexed by \texttt{N} and \texttt{E} through their disjoint union. The hypothesis \texttt{Invertible H} supplies an actual two-sided inverse, rather than a floating-point inversion routine. The conclusion of \path{augmented_iff_robin} states precisely that the block equation holds if and only if $\bm{u}=Z_0\bm{U}$ and $M\bm{U}=\bm{I}$. Step 1 invokes \path{augmented_iff_rows}. Step 2 uses \path{response_equation}, which states $HZ_0=C$, and \path{nodal_iff_response}. Step 3 substitutes into the second row in both directions to complete \path{augmented_iff_robin}.

For reference changes, \texttt{Option J} adds one distinguished index to the retained index set \texttt{J}: \texttt{none} is the fixed reference and \texttt{some i} is a retained degree of freedom. This is an indexing convention and does not restrict where the reference lies in a mesh. \texttt{groundLift} inserts a zero potential, and \texttt{F.submatrix some some} deletes the corresponding row and column. The general theorem \path{reference_reduction_iff_full} keeps the full right-hand side $\bm{b}$, including its potentially non-zero entry at \texttt{none}. It therefore covers a fixed electrode potential with a prescribed non-zero current. The specialised \path{grounded_reduction_iff_full} inserts a zero right-hand-side entry at a nodal reference.

In the generic \path{CEM/Complete.lean}, \texttt{F} is the full node--electrode matrix, indexed by \texttt{Option (N} $\oplus$ \texttt{E)} with the reference at a node. The hypotheses \texttt{hsym}, \texttt{hconstant}, \texttt{hbalance} and \texttt{hblocks} mean, respectively,
\[
\begin{gathered}
F_0^{\mathsf T}=F_0,\qquad F_0\bm1=0,\qquad \sum_\ell I_\ell=0,\\
F_0\text{ with the ground row/column removed}=F_g.
\end{gathered}
\]
The full vector is \texttt{groundLift (Sum.elim v U)}, corresponding to $(P\bm{v},\bm{U})$. The hypothesis \texttt{hblocks} connects the supplied full matrix to the grounded blocks; it is not inferred from their names. The model-based \path{CEM/GroundedEnergy.lean} uses the equivalent ordering \texttt{(Option N)} $\oplus$ \texttt{E}: the ground is \texttt{Sum.inl none}, and its full vector is \texttt{Sum.elim (groundLift v) U}. This places the reference explicitly in the node set and derives the block and positivity facts from assembly. The two modules describe the same nodal deletion with different index orderings.

{\small
\def\UrlBreaks{\do\/\do\_\do\.}
\begin{longtable}{@{}>{\raggedright\arraybackslash}p{0.22\textwidth}>{\raggedright\arraybackslash}p{0.32\textwidth}>{\raggedright\arraybackslash}p{0.39\textwidth}@{}}
\caption{Detailed paper--Lean correspondence. All names are in namespace \texttt{CEM}; assumptions are mathematical hypotheses of the statements.}\label{tab:s29}\\
\toprule
Paper argument & Required hypotheses & Lean statements and source\\\midrule
\endfirsthead
\toprule Paper argument & Required hypotheses & Lean statements and source\\\midrule
\endhead
theorem 1, step 1: expand the block system & Same $H,C,D$ & \path{augmented_iff_rows}; \path{CEM/Equivalence.lean}\\\addlinespace[5pt]
theorem 1, step 2: identify the Robin solution & $H$ invertible & \path{response_equation}, \path{nodal_iff_response}; \path{CEM/Equivalence.lean}\\\addlinespace[5pt]
theorem 1, step 3: substitute in both directions & $H$ invertible; no inverse of $M$, symmetry or balance required & \path{augmented_iff_robin}; \path{CEM/Equivalence.lean}\\\addlinespace[5pt]
lemma 1 and section 2.2: finite weak form & Finite gradient and trace maps with positive weights & \path{DiscreteModel.variational_eq}, \path{DiscreteModel.weak_form_iff}; \path{CEM/Assembly.lean}\\\addlinespace[5pt]
lemma 1: block identities, energy and kernel & Connected differences, invertible gradient transform, constant-preserving trace, electrode coverage & \path{DiscreteModel.K_constant}, \path{A_constant}, \path{coupling_constant}, \path{energy_identity}, \path{full_kernel_iff}, \path{H_posDef}; \path{CEM/Assembly.lean}\\\addlinespace[5pt]
P1 edge representation for lemma 1 & Endpoint/midpoint interpolation, positive weights $t,t,4t$ & \path{P1Boundary.trace_partition}, \path{weights_positive}, \path{local_matrix}, \path{local_energy}; \path{CEM/P1Boundary.lean}\\\addlinespace[5pt]
corollary 1: complete conclusion (direct entry) & Finite-model inputs; balanced current for existence & \path{DiscreteModel.assembled_current_compatibility}; \path{CEM/MeanZero.lean}\\\addlinespace[5pt]
corollary 1: supporting properties of $Z_0$ and $M$ & Same finite-model inputs; $H$ invertibility derived from lemma 1 & \path{DiscreteModel.assembled_iff_robin}, \path{Z0_constant}, \path{M_constant}, \path{robin_energy}, \path{M_posSemidef}, \path{M_energy_zero_iff}, \path{M_kernel_iff}; \path{CEM/Schur.lean}\\\addlinespace[5pt]
corollary 1: constructed representative & Same finite-model inputs; balanced current & \path{DiscreteModel.assembled_zeroMean_solution}; \path{CEM/MeanZero.lean}; construction in supplementary proposition S1\\\addlinespace[5pt]
lemma 2: recover a deleted reference equation & Symmetric full matrix, constant null vector, balanced full load & \path{total_current_zero}, \path{recover_ground_row}, \path{groundLift_none}, \path{groundLift_some}, \path{reduced_mulVec}, \path{reference_reduction_iff_full}; \path{CEM/Grounding.lean}\\\addlinespace[5pt]
proposition 1: nodal deletion and Robin reconstruction & lemma 2's hypotheses; zero nodal load; correct grounded blocks; $H_g$ invertible & \path{grounded_reduction_iff_full}; \path{CEM/Grounding.lean}; \path{full_cem_iff_robin}; \path{CEM/Complete.lean}\\\addlinespace[5pt]
proposition 1: explicit unique solution & Additionally $M_g$ invertible & \path{augmented_iff_explicit_solution}, \path{grounded_exists_unique}; \path{CEM/Equivalence.lean}; \path{full_cem_iff_explicit_solution}, \path{full_cem_exists_unique}; \path{CEM/Complete.lean}\\\addlinespace[5pt]
proposition 1 from assembled data & Finite-model inputs, zero nodal reference and balanced current & \path{DiscreteModel.Fg_posDef}, \path{Hg_posDef}, \path{Mg_posDef}, \path{assembled_grounded_full_exists_unique}; \path{CEM/GroundedEnergy.lean}\\\addlinespace[5pt]
Section 2.4: auxiliary ground reaction & Symmetry, constant null vector and all retained equations & \path{ground_reaction_eq_neg_sum}; \path{CEM/Grounding.lean}\\\addlinespace[5pt]
corollary 2: same voltage differences for any full solutions & proposition 1's uniqueness conditions, same current input & \path{gauge_shift}, \path{gauge_voltage_difference}, \path{full_cem_solutions_equal_after_grounding}, \path{full_cem_solutions_same_voltage_differences}; \path{CEM/Gauge.lean}\\\addlinespace[5pt]
Supplementary proposition S1, steps 1--2: regularisation and pseudoinverse & Positivity and constant energy kernel of $M$, supplied by \path{DiscreteModel.conservativeEnergy} & \path{regularized_posDef}, \path{mul_cemInverse}, \path{cemInverse_mul}, \path{cemInverse_penrose}, \path{cemInverse_unique}; \path{CEM/MeanZero.lean}\\\addlinespace[5pt]
Supplementary proposition S1, step 3: zero-mean solution and unique shift & Balanced current; same finite-model inputs for the full solution & \path{zeroMean_solution_iff}, \path{DiscreteModel.assembled_zeroMean_formula}, \path{assembled_zeroMean_solution}, \path{CEM.zeroMean_unique}; \path{CEM/MeanZero.lean}\\
\bottomrule
\end{longtable}
}

Within table rows, names after \texttt{DiscreteModel.} or \texttt{P1Boundary.} retain that namespace unless a different source module is stated. The two elementary identities \path{groundLift_none} and \path{groundLift_some} additionally verify the zero reference value and preservation of the retained entries. The file \path{Audit.lean} checks the axiom dependencies of every project theorem, including the energy, pseudoinverse and exact-example modules. Only \texttt{propext}, \texttt{Classical.choice} and \texttt{Quot.sound}, or subsets of them, are accepted; there are no unfinished proofs or theorem-specific axioms.

The invariance proof deliberately has two stages. \path{gauge_shift} and \path{gauge_voltage_difference} show that a \emph{given} common shift preserves the equations and voltage differences. \path{full_cem_solutions_equal_after_grounding} then uses uniqueness under the common nodal reference to prove that \emph{any two} full solutions are related by such a shift. Thus reference independence is not deduced from the constant-nullspace identity alone.

\subsection{Constructing and identifying the pseudoinverse}
The main text uses a short formula for the zero-mean reference. Here we give its full construction, including the uniqueness of the pseudoinverse. This proof uses the matrix properties in main-text corollary 1, without assuming its existence conclusion.

\begin{supproposition}[Zero-mean solution and pseudoinverse]\label{prop:supp_mean_zero}
Let $M$ be the ungrounded electrode matrix of the exactly assembled CEM, with $L\geq1$, and set
\[
J=\frac1L\bm1_L\bm1_L^{\mathsf T},\qquad G=(M+J)^{-1}-J.
\]
Then $M+J$ is positive definite, $MG=GM=\mathrm{Id}_L-J$, and $G=M^\dagger$, the unique Moore--Penrose pseudoinverse. For every $\bm1_L^{\mathsf T}\bm{I}=0$, $\bm{U}=G\bm{I}$ is the unique solution of $M\bm{U}=\bm{I}$ with $\bm1_L^{\mathsf T}\bm{U}=0$, and $\bm{u}=Z_0G\bm{I}$ gives the corresponding unique full CEM solution with this reference.
\end{supproposition}
\begin{proof}
\emph{Step 1: remove the constant nullspace.} The identity
\[
\bm{x}^{\mathsf T}(M+J)\bm{x}=\bm{x}^{\mathsf T}M\bm{x}+L^{-1}(\bm1_L^{\mathsf T}\bm{x})^2
\]
is a sum of non-negative terms. If it vanishes, the first term makes $\bm{x}$ constant, and the second makes that constant zero since $L>0$. Thus $M+J$ is positive definite and its inverse exists.

\emph{Step 2: construct and identify the pseudoinverse.} Symmetry and $M\bm1_L=0$ give $MJ=JM=0$, while $J^2=J$. Hence $(M+J)^{-1}J=J(M+J)^{-1}=J$. Direct multiplication gives $MG=GM=\mathrm{Id}_L-J$ and $JG=GJ=0$. Consequently,
\[
MGM=M,\qquad GMG=G,\qquad (MG)^{\mathsf T}=MG,\qquad (GM)^{\mathsf T}=GM,
\]
the four Moore--Penrose equations. To establish uniqueness explicitly, suppose $X$ satisfies the same four equations. Since $JM=0$ and $MX$ is symmetric, $(MX)J=0$. It follows that
\[
MX=MX(\mathrm{Id}_L-J)=MXMG=MG=\mathrm{Id}_L-J.
\]
Similarly, symmetry of $XM$ and $MJ=0$ give $J(XM)=0$. Using $XMX=X$ yields $JX=0$. Therefore $(M+J)X=\mathrm{Id}_L-J$ and $X=(M+J)^{-1}(\mathrm{Id}_L-J)=G$.

\emph{Step 3: solve the balanced-current problem.} Since $J\bm{I}=0$, we have $M(G\bm{I})=\bm{I}$ and $J(G\bm{I})=0$. Conversely, if $M\bm{U}=\bm{I}$ and $J\bm{U}=0$, then $\bm{U}=(\mathrm{Id}_L-J)\bm{U}=GM\bm{U}=G\bm{I}$. Main-text theorem 1 gives $\bm{u}=Z_0\bm{U}$, including its uniqueness once $\bm{U}$ is fixed.
\end{proof}

\noindent\emph{Lean 4:} All following names are in \texttt{CEM}, in \path{CEM/MeanZero.lean}. The code represents $J$ by \path{meanProjector} and $G$ by \path{cemInverse}. The record \texttt{ConservativeEnergy M} collects positivity, $M\bm1=0$ and the constant energy kernel; \path{DiscreteModel.conservativeEnergy} proves each property from the finite assembly data.

Step 1 is \path{regularized_posDef}. Step 2 follows \path{mul_meanProjector}, \path{meanProjector_mul}, \path{meanProjector_idempotent}, the inverse--projector identities, \path{mul_cemInverse}, \path{cemInverse_mul}, \path{cemInverse_penrose} and \path{cemInverse_unique}. Step 3 is \path{cemInverse_solution} and \path{zeroMean_solution_iff}. The full statement \path{DiscreteModel.assembled_zeroMean_formula} says precisely
\[
\bigl[F_0(\bm{u},\bm{U})=(0,\bm{I})\ \text{and}\ \bm1^{\mathsf T}\bm{U}=0\bigr]
\quad\Longleftrightarrow\quad
\bm{u}=Z_0G\bm{I}\ \text{and}\ \bm{U}=G\bm{I}.
\]
The accompanying \path{assembled_zeroMean_solution} states existence and uniqueness. This constructive result supplies the representative invoked in main-text corollary 1; the latter's constant-nullspace statement then characterises all solutions.

Finally, write $\overline U=L^{-1}\sum_\ell U_\ell$. The results \path{zeroMean_sum}, \path{zeroMean_difference} and \path{zeroMean_unique} show that $\bm{U}-\overline U\bm1$ is the unique zero-mean common shift and preserves voltage differences. Applying the same shift to the nodal potentials preserves the full equations by \path{gauge_shift} in \path{CEM/Gauge.lean}.

\subsection{A small example with exact rational coefficients}
Consider a two-node, two-electrode algebraic model with
\[
K=\begin{bmatrix}1&-1\\-1&1\end{bmatrix},\quad A=C=D=I_2,
\quad H=\begin{bmatrix}2&-1\\-1&2\end{bmatrix}.
\]
The full coordinate order is $(u_1,u_2,U_1,U_2)$, and the augmented matrix is
\[
F_0=\begin{bmatrix}2&-1&-1&0\\-1&2&0&-1\\-1&0&1&0\\0&-1&0&1\end{bmatrix}.
\]
This example illustrates the algebra; it is not an additional finite element benchmark. The ungrounded Robin responses and electrode matrix are
\[
Z_0=\frac13\begin{bmatrix}2&1\\1&2\end{bmatrix},\qquad
M=\frac13\begin{bmatrix}1&-1\\-1&1\end{bmatrix}.
\]
For $\bm{U}=[1,0]^{\mathsf T}$, $\bm{u}=[2/3,1/3]^{\mathsf T}$ and $\bm{I}=[1/3,-1/3]^{\mathsf T}$; for $\bm{U}=[0,1]^{\mathsf T}$ the two entries are reversed. Both responses satisfy every nodal equation and have conserved currents. These substitutions are checked by \path{Example.ungrounded_first_response} and \path{Example.ungrounded_second_response} in \path{CEM/Example.lean}.

For the prescribed input $\bm{I}=[1,-1]^{\mathsf T}$, the following reference choices give
\[
\begin{array}{c|c|c|c}
\text{reference}&\bm{u}^{\mathsf T}&\bm{U}^{\mathsf T}&U_1-U_2\\\hline
\sum_\ell U_\ell=0&(1/2,-1/2)&(3/2,-3/2)&3\\
U_2=0&(2,1)&(3,0)&3\\
u_1=0&(0,-1)&(1,-2)&3
\end{array}
\]
The three substitutions into $F_0(\bm{u},\bm{U})=(0,\bm{I})$ are checked by \path{Example.mean_zero_solution}, \path{Example.electrode_reference_solution} and \path{Example.grounded_solution}. The potentials differ by common constants. In particular, choosing electrode 2 as the voltage reference does not set its prescribed current $I_2=-1$ to zero.

If the nodal reference is imposed before computing the auxiliary responses, then
\[
H_g=[2],\qquad C_g=\begin{bmatrix}0&1\end{bmatrix},\qquad
Z=\begin{bmatrix}0&1/2\end{bmatrix},\qquad
M_g=\begin{bmatrix}1&0\\0&1/2\end{bmatrix}.
\]
The first grounded response has $\bm{v}=[0]$, electrode currents $(1,0)$ and ground reaction $-1$; the second has $\bm{v}=[1/2]$, currents $(0,1/2)$ and reaction $-1/2$. These are checked by \path{Example.first_response} and \path{Example.second_response}. Solving $M_g\bm{U}=[1,-1]^{\mathsf T}$ gives $\bm{U}=(1,-2)^{\mathsf T}$ and $\bm{v}=[-1]$. The weighted reactions cancel: $-1+(-2)(-1/2)=0$. The final solution agrees with the ungrounded construction at the same nodal reference, although $M_g\ne M$ and the individual auxiliary problems differ.

\subsection{Running the proof package and interpreting its scope}
With the pinned Lean toolchain available, run \texttt{lake build} in \path{formal/}; the supplied Windows script \path{verify.ps1} uses Lake on the executable search path, or an explicit \texttt{-Lake} path. This builds the general results, the rational examples, the axiom checks and \path{Paper.lean}. The command \texttt{lake env lean Paper.lean} prints the formal statements in main-text result order, followed by supplementary proposition S1 and the exact example. The supplied \path{README.md} gives the dependency-cache command for a fresh installation and the same reading order.

The finite coefficient weak form, weighted energy, constant kernel, positive definiteness and zero-mean/pseudoinverse construction are included in the formal proof. The geometry-to-coefficient identification and continuous integration-by-parts, Sobolev and trace arguments remain outside Lean. The code does not certify the experimental assembly programs, the 2535-node reference coefficients, floating-point solvers or PDE convergence. Its real inverse definitions are marked \texttt{noncomputable}: the build checks exact mathematical proofs, rather than running a numerical EIT solver. The theorem about algebraic solution sets also applies to shared stored coefficients treated as exact numbers when the Robin matrix is invertible; reference changes additionally require their exact symmetry and constant-nullspace identities. This is why the experiments first align the reference and then distinguish solution and assembly effects.